\tikzset{snake it/.style={decorate, decoration=snake}}
\tikzset{join/.code=\tikzset{after node path={%
\ifx\tikzchainprevious\pgfutil@empty\else(\tikzchainprevious)%
edge[every join]#1(\tikzchaincurrent)\fi}}}
\tikzset{>=stealth',every on chain/.append style={join},
         every join/.style={->}}
\tikzset{
    >=stealth',
    punkt/.style={
           rectangle,
           rounded corners,
           draw=black, very thick,
           text width=6.5em,
           minimum height=2em,
           text centered},
    pil/.style={
           ->,
           thick,
           shorten <=2pt,
           shorten >=2pt,}
}
\newcommand{\BB}{\mathbb}
\newcommand{\SF}{\mathsf}
\newcommand{\FR}{\mathfrak}
\def\B{\SF{B}}
\def\P{\SF{P}}
\newcommand{\bea}{\begin{eqnarray}}
\newcommand{\eea}{\end{eqnarray}}
\newcommand{\nn}{\nonumber}
\newcommand{\Tr}{\operatorname{Tr}}
\newcommand{\sbullet}{\vcenter{\hbox{\tiny$\bullet$}}}
\newcommand{\udl}{\underline}
\newcommand{\bra}{\langle}
\newcommand{\ket}{\rangle}
\newcommand{\im}{\operatorname{Im}}
\newcommand{\re}{\operatorname{Re}}
\newcommand{\To}{\Rightarrow}
\newcommand{\ext}{\operatorname{Ext}}
\newcommand{\opn}{\operatorname}
\def\ot{\leftarrow}
\def\HH{H\hspace{-.3em}H}
\newcommand\hcancel[2][black]{\setbox0=\hbox{$#2$}%
\rlap{\raisebox{.45\ht0}{\textcolor{#1}{\rule{\wd0}{1pt}}}}#2}
\def\Gc{\Gamma}
\def\gd{\delta}
\def\Gd{\Delta}
\def\ep{\epsilon}
\def\gs{\sigma}
\def\gl{\lambda}
\def\Gl{\Lambda}
\def\Go{\Omega}
\def\go{\omega}
\DeclareMathAlphabet{\mathpzc}{OT1}{pzc}{m}{it}
\newtheorem{theorem}{Theorem}[section]
\newtheorem{lemma}[theorem]{Lemma}
\newtheorem{proposition}[theorem]{Proposition}
\theoremstyle{definition}
\newtheorem{example}[theorem]{Example}
\newtheorem{remark}[theorem]{Remark}
\newtheorem{corollary}[theorem]{Corollary}
\newtheorem{definition}[theorem]{Definition}
\numberwithin{equation}{section}
\begin{document}
\begin{flushright} \small
UUITP-05/23
 \end{flushright}
\smallskip
\begin{center} \Large
{\bf Quantisation via Branes and Minimal Resolution}
 \\[12mm] \normalsize
{\bf Jian Qiu${}^{a,b}$} \\[8mm]
 {\small\it
    ${}^a$Department of Physics and Astronomy, Uppsala University,\\
        \vspace{.3cm}
      ${}^b$ Mathematics Institute,  Uppsala University, \\
      \vspace{.3cm}
   Uppsala, Sweden\\  }
\end{center}
\vspace{7mm}
\begin{abstract}
 \noindent
 The `brane quantisation' is a quantisation procedure developed by Gukov and Witten \cite{Gukov:2008ve}.
 We implement this idea by combining it with the tilting theory and the minimal resolutions. This way, we can realistically compute the deformation quantisation on the space of observables acting on the Hilbert space. We apply this procedure to certain quantisation problems in the context of generalised K\"ahler structure on $\mathbb{P}^2$. Our approach differs from and complements that of Bischoff and Gualtieri \cite{Bischoff:2021ixy}.
 We also benefitted from an important technical tool: a combinatorial criterion for the Maurer-Cartan equation, developed in \cite{BarmeierWang} by Barmeier and Wang.
\end{abstract}

\eject
\normalsize

\tableofcontents
\section{Introduction}
Let $(M^{2n},\go)$ be a symplectic manifold, which can be e.g. the phase space of a mechanical system. By quantising $(M,\go)$, we mean that we construct a Hilbert space ${\cal H}$, and promote functions on $M$ into operators $f\to\hat f$ that act on ${\cal H}$. This is done in such a way that the Poisson bracket $\{f,g\}$ get promoted into commutators of operators $[\hat f,\hat g]$.

The Hilbert space can often be constructed with the geometrical quantisation, which works in a fairly broad set of scenarios. What one requires is that $[\go/(2\pi\hbar)]$ be of integral class, and then one has a line bundle ${\cal L}$ with curvature $F=i\go/(2\pi\hbar)$. The next step is to choose a polarisation, that is, a Lagrangian foliation of $M$. Put simply, this foliation designates half of the coordinates of $M$ as coordinates and the other half as momenta.
The Hilbert space ${\cal H}$ will then consist of sections of ${\cal L}$ that are covariantly constant along the leaves of the foliation.
This leads to sections of delta-function support along a selected set of leaves known as the `Bohr-Sommerfeld leaves'. As a result they are quite singular. When $M$ is K\"ahler, one can choose the K\"ahler polarisation, and set ${\cal H}$ to be the holomorphic sections of ${\cal L}$. If further $M$ is compact, such sections are of finite dimension and so will be the Hilbert space. The most elegant example is perhaps the full flag manifolds $G/B$ (where $G$ is a compact Lie group and $B$ its Borel subgroup). The integrality of $[\go/2\pi\hbar]$ amounts to choosing a dominant weight and the resulting Hilbert space gives the various highest weight representations of $G$. This is the content of the celebrated Borel-Weil-Bott theorem.

It is however much harder to promote functions on $M$ into operators, mainly due to the normal ordering ambiguity. Only in very simple cases and for restricted class of functions do we have a procedure to assign operators to functions. The brane quantisation procedure is meant to provide a framework to solve this very problem. The main construction is to embed $M$ into a manifold of double the dimension $(M^{2n},\go)\hookrightarrow (X^{4n},\Go)$, with $\Go$ a holomorphic symplectic form and $\re\Go|_M=\go$. This is denoted as `complexifying $M$', which is to be understood as saying that the functions on $M$ we want to quantise should have an analytic continuation into holomorphic functions on $X$.

The proposal in \cite{Gukov:2008ve} is to study the $A$-model on $X$, with symplectic form $\im\Go$, which deforms the multiplication rule of the holomorphic functions on $X$. However, in addition to the Lagrangian branes in $A$-model, it was necessary to consider also the coisotropic branes \cite{Kapustin:2001ij}, which is much less studied than the Lagrangian ones. There was an appealing proposal of the hom space between these branes in \cite{Bischoff:2021ixy}, see also \cite{leung2023hidden} for some recent related work.
But we put aside the $A$-model in this work and use the deformation quantisation to quantise the holomorphic functions. Important distinction is to be made between deformation quantisation and quantisation: the former typically does not give an actual deformation as the multiplication ends up being a formal power series in $\hbar$. Even when the series does converge, there will be no integrality condition for $\hbar$. But as we shall see in the examples of this paper, we do get integrality conditions out of deformation quantisation.

We said earlier that the real functions to be quantised are now continued to holomorphic functions on $X$. This is a huge advantage, since we have far more handles on the holomorphic functions than the real ones. Furthermore, rather than deforming the structure sheaf ${\cal O}_X$, we choose to deform the entire category of coherent sheaves on $X$. By doing more than what is asked here, we are actually doing less. This is because in many cases we can use the tilting theory to give an economic $A_{\infty}$-category model for $D^b(X)$. The setting here is that of \cite{lowen_2008}.
The reader may see in sec.\ref{sec_D4} that we get such a deformation effortlessly by going via the derived category and yet it translates to a very complicated deformation of the original algebra ${\cal O}_X$.

The rest of the paper follows this line of thought. Let us highlight one main result. Take a 3-dim vector space $V$, and set $X=T^*\BB{P}(V)$ (or put simply $T^*\BB{P}^2)$. Its holomorphic functions are $H^0(X):=H^0(X,{\cal O}_X)$, and are generated by $H^0(\BB{P}(V),T\BB{P}(V))=(V\otimes V^*)_{tf}$ ($tf$ stands for trace free). It is this algebra we will deform. The space $X$ has a holomorphic symplectic form $\Go_s$ depending on one parameter $s$. In particular, when $s=1$, then $(X,\Go_1)$ is a holomorphic symplectic groupoid over $\BB{P}(V)$ featuring in \cite{Bischoff:2018kzk}). From the deformation quantisation of $(X,\Go_s)$, and the Kodaira-Spencer deformation, we get a three parameter deformation of the algebra of holomorphic functions on $X$. Concretely we denote the basis of $V\otimes V^*$ as $X_i^j$, $i,j=1,2,3$ with $\Tr[X]=\sum_pX_p^p=0$. We obtain a deformed algebra
\bea
&& X_p^p=t+\frac12s,~~~X_i^p\star X_p^l=(t-\frac32s)X_i^l,\nn\\
&&X_i^j\star X_k^l=q^{j<k}q^{l<j}q^{k<l}X_i^l\star X_k^j-s(\gd_k^jX_i^l-q^{j<k}q^{l<j}X_i^j\gd_k^l)\nn\\
&&X_i^j\star X_k^l=q^{j<k}q^{k<i}q^{i<j}X_k^j\star X_i^l-s(\gd_k^jX_i^l-q^{j<k}q^{k<i}\gd_i^jX_k^l),\nn\\
&&X_i^j\star X_k^l=q^{j<k}q^{k<i}q^{l<j}q^{i<l}X_k^l\star X_i^j-s(\gd_k^jX_i^l-q^{j<k}q^{k<i}q^{l<j}\gd_i^lX_k^j).\label{deformed_TCP2}\eea
The 2nd and 3rd line imply the 4th line. Further $q^{i<j}$ is defined as
\bea \begin{array}{|c|c|c|c|c|c|c|c|}
\hline
 i,j & 12 & 23 & 31 & 21 & 32 & 13 & i=j \\
\hline
q^{i<j} & q & q & q & q^{-1} & q^{-1} & q^{-1} & 1 \\
\hline\end{array}.\nn\eea
For example
\bea
&&X_1^p\star X_p^1=(t-\frac32s)X_1^1,~~~X_p^1\star X_1^p=(t+\frac32s)X_1^1-s(t+\frac12s)\nn\\
&&X_1^2\star X_2^3=q^{-3}X_2^3\star X_1^2-sX_1^3,\nn\\
&&X_1^2\star X_2^1=X_2^2\star X_1^1-sX_1^1,\nn\\
&&X_2^1\star X_3^2=q^{-3}X_3^2\star X_2^1+sq^{-3}X_3^1.\nn\eea
When $s=1$, and $(X,\Go_1)$ is a symplectic groupoid, so we expect a co-multiplication and an antipode map on $H^0(X)$. This will be developed in a future work.

Finally, a word about the scope of applicability of our approach. The key, as will be explained in sec.\ref{sec_DQ}, is the tilting object that helps to give a small model of the category to be deformed. Therefore one might hope to apply this approach to some conic toric Calabi-Yau's, whose tilting object is related to the dimer model, or certain Gorenstein singularities, whose singularity category is equivalent to the stable category of matrix factorisation category.

\bigskip

{\bf Acknowledgements:} It is my pleasure to thank Severin Barmeier for patiently answering many of my technical questions. I'd like to give a shout-out to his work with Zhengfang Wang \cite{BarmeierWang}, without which the current work would be impossible. I also thank M. Gualtieri for discussion and for sharing his idea about brane quantisation with me.

\section{The setting}\label{sec_S}
Let $(M,\go)$ be a K\"ahler manifold. We assume that we have an embedding $(M,\go)\to (X,\Go)$ with $\Go$ a holomorphic symplectic form such that $\re\Go|_M=\go$ and $\im \Go|_M=0$. One should think of $\Go$ as the analytic continuation of $\go$.
One then considers the $A$-model on $X$ with symplectic form $\im \Go$, for which $M$ is a Lagrangian brane. One has to consider also the coisotropic branes. We will only need the space filling coisotropic branes, consisting of the data of a line bundle ${\cal L}$ with curvature $iF$, such that
\bea J=(\im\Go)^{-1}F\nn\eea
considered as a map $TX\to TX$ satisfies $J^2=-1$. Amongst these we will consider the canonical coisotropic branes $B_{cc}$, for which the line bundle is trivial ${\cal L}\sim{\cal O}_X$ (and $F$ is exact).

The expectation is that the open strings starting and ending on $B_{cc}$ will have Hilbert space $H^{\sbullet}(X,{\cal O}_X)$, while those having one end on $B_{cc}$ and one end on $M$ will have Hilbert space $H^{\sbullet}(X,E)$, where $E$ is a vector bundle or a sheaf that one attaches to $M$. The holomorphic functions act naturally on $H^{\sbullet}(X,E)$ as concatenation of strings, as caricatured in fig.\ref{fig_string_act}.
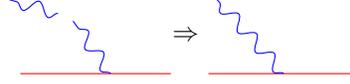
\begin{figure}[h]
\begin{center}
\begin{tikzpicture}[scale=1]
\draw [red,-] (-1.2,0) -- (0.8,0);
\path [draw=blue,snake it]
    (0,0) -- (-.5,0.7);
\path [draw=blue,snake it, bend right=15]
    (-.7,0.8) -- (-1.4,1);
\node at (1,.5) {$\To$};
\draw [red,-] (1.3,0) -- (3.3,0);
\path [draw=blue,snake it, bend right=25]
    (2.3,0) -- (1.4,1);
\end{tikzpicture}\caption{Concatenating strings gives the action of ${\cal O}_X$ on $E$.}\label{fig_string_act}
\end{center}
\end{figure}
Much of the above is what people would expect from string theory, but for the discussion below, one will not need to know the inner workings of string theory to understand the framework of quantisation via branes. Instead,
\begin{center}
\emph{In this work, strings are synonymous to a resolution of an algebra $A$ as a projective $A$-$A$ bi-module.}
\end{center}
Here the algebra $A$ can be a commutative algebra, a sheaf of algebra such as ${\cal O}_X$, or a quiver algebra. Encapsulated in this statement is the idea of path space and resolution of the diagonal. We will gradually unpack these as we go on.

We now want to deform the algebra structure ${\cal O}_X$. As said earlier, we do not go the route of $A$-model as we do not have the wherewithal to do such computations. Instead we will take up deformation quantisation as an alternative \cite{kontsevich2001deformation}. First, the infinitesimal deformations are controlled by the Hochschild cohomology group $\HH^2(X)$, which by the Hochschild-Kostant-Rosenberg (HKR) theorem \cite{HKR} is isomorphic to
\bea \HH^2(X)=H^2(X,{\cal O}_X)\oplus H^1(X,TX)\oplus H^0(X,\wedge^2TX).\label{HKR}\eea
Strictly speaking HKR proved the statement for smooth affine varieties, and so only the last summand above is present. The other two summands follow from taking \v{C}ech cover or a Hodge spectral sequence argument, see \cite{SWAN199657}.
The second summand in \eqref{HKR} controls the infinitesimal Kodaira-Spencer type deformation, while
the last summand $H^0(X,\wedge^2TX)$ represents the holomorphic bi-vector fields, wherein resides the holomorphic Poisson structures. We focus on these structures for now. Let $\Pi$ be a holomorphic Poisson tensor, it leads to the first order deformation to the commutative product on ${\cal O}_X$
\bea f\star g=fg+\hbar\{f,g\}_{\Pi}+\cdots.\nn\eea
To obtain the complete higher order deformations, one needs to solve the so called Maurer-Cartan (MC) equation, which we will recall in later sections. But ${\cal O}_X$ is too big an algebra and $\Pi$, as a representative of $\HH^2$, is too unwieldy for us to solve the MC equation.

Instead, let us deal with a wider problem of deforming the category of coherent sheaves. For this, we shall adopt a smaller model for $D^b(X)$ by using the tilting theory. Briefly, the tilting theory gives a derived equivalence $D^b(X)\simeq D^b(\opn{mod}_{\Gl})$ for some algebra $\Gl$ that is much smaller. To obtain this equivalence, consider the product $X\times X$ with $\Gd$ being the diagonal. Let $p,q$ be the projection to the first and second $X$, then the following is the identity functor in $D^b(X)$
\bea F\to Rp_*(q^*F\otimes^L{\cal O}_{\Gd}),~~~F\in D^b(X).\label{tilting}\eea
Put simply \eqref{tilting} is the derived category version of: $\int\gd(x-y)f(y)dy=f(x)$, where ${\cal O}_{\Gd}$ plays the role of the delta function and the pushdown $Rp_*$ replaces the integration over $y$.
But to compute the derived tensor with ${\cal O}_{\Gd}$, one needs to resolve ${\cal O}_{\Gd}$ as an ${\cal O}_{X\times X}$-module, this is where our earlier statement about path space and bi-module resolution comes in: heuristically, rather than imposing $x=y$ strictly as does $\gd(x-y)$, we impose $x=y$ up to homotopy by using a path to connect $x$ to $y$. The bar complex (see sec.\ref{sec_BR}) realises this idea of path space algebraically.
On the other hand, considering ${\cal O}_{\Gd}$ as an ${\cal O}_{X\times X}$ module amounts to considering ${\cal O}_X$ as an ${\cal O}_X$-bi-module, and the need to resolve ${\cal O}_X$ as an ${\cal O}_X$-bi-module brings us back to the bar complex.
Thus one can say that the strings provide us with a resolution geometrically while in practice one can work algebraically and pick any bi-module resolution one prefers. Optimally, such a resolution will be a complex of some vector bundles (they constitute the tilting object). The result of applying \eqref{tilting} then expresses any $F$ as extensions of the said vector bundles.

Assuming that the tilting procedure is done and we have the tilting object $T$. Then the algebra $\Gl=\opn{End}(T)$ is the algebra featuring in the equivalence $D^b(X)\simeq D^b(\opn{mod}_{\Gl})$, whereas the bar complex of $\Gl$ can be regarded as a leaner $A_{\infty}$-model for $D^b(X)$ (see remark \ref{rmk_A_infty_model}).
We will then be able to find a handier representative of $\Pi$ in $\HH^2$ and perform the deformation quantisation on $\Gl$, moreover, we will also incorporate the middle summand of \eqref{HKR}, i.e. the Kodaira-Spencer type deformation, in the same process.

\smallskip

Let us summarise the sort of examples for which we will implement the above procedure.
The recurring example in \cite{Gukov:2008ve} is $S^2$, presented as
\bea x^2+y^2+z^2=r^2\label{A_1_deform}\eea
equipped with the standard K\"ahler form $\go=dx\wedge dy/z$. By regarding $x,y,z$ as complex rather than real coordinates, the same equation cuts out a subvariety of $\BB{C}^3$, which is topologically $X=T^*S^2$. The same 2-form $dx\wedge dy/z$ is in fact the canonical holomorphic 2-form on $X$. The parameter $r^2$ controls the size of $\go=\re\Go$ and so is expected to have some integrality conditions.

As warmup examples, we will instead set $r^2=0$ above, and end up having an $A_1$-singularity $x^2+y^2+z^2=0$. For simplicity we make a change of variable and write the equation as $xy-z^2=0$.
The Auslander-Reiten theory for the algebra $R=\BB{C}[x,y,z]/(xy-z^2)$ is well-known, leading to the quiver (with relations)
\bea  \begin{tikzpicture}
  \matrix (m) [matrix of math nodes, row sep=.5em, column sep=5em]
    { e_1 & e_0 \\        };
  \path[->, font=\scriptsize, bend right = 5]
  (m-1-1) edge (m-1-2);
  \path[->, font=\scriptsize, bend left = 5]
  (m-1-1) edge (m-1-2);
  \path[->, blue, font=\scriptsize, bend right = 25]
  (m-1-2) edge (m-1-1);
  \path[->, blue,font=\scriptsize, bend left = 25]
  (m-1-2) edge (m-1-1);
  \end{tikzpicture}.\nn\eea
We will perform deformation quantisation on this algebra. The radius parameter $r^2$ in \eqref{A_1_deform} will be incorporated in the deformation quantisation as Kodaira-Spencer type deformation. Even better, we will obtain the same integrality condition for $r^2$, which is rather unusual for deformation quantisation.


Then we shall turn to the main application for our work. In the context of generalised K\"ahler structure on a manifold $M$, one naturally has a holomorphic Poisson structure $\pi$. Then to encode the generalised K\"ahler potential, one needs to integrate $\pi$ into a holomorphic symplectic groupoid \cite{Bischoff:2018kzk}, denoted as $(X,\Go)$. Inside $X$, we have $M$ embedded as the space of units, this is another natural setting in which the `complexification' $M\hookrightarrow X$ of Gukov and Witten makes sense.
One also needs to deform $\Go$ by a 2-form $F$: $\Go\to \Go+tF$, which serves to deform the complex structure on $X$ and eventually leads to the generalised K\"ahler structure on $M$.
We will apply the deformation quantisation to $(X,\Go)$ to deform ${\cal O}_X$ and also incorporate the deformation $F$ for the special case of $M=\BB{P}^2$. In the work \cite{Bischoff:2021ixy}, all toric surfaces are considered.
Moreover, the authors of loc. cit. also proposed a definition of the A-branes to compute the hom space of such branes.
In contrast, out approach stays on the B-side and the main tool is the more familiar homological algebra and deformation quantisation. We hope that our result would complement those of loc. cit.

\section{Bi-module Resolution and Deformation Quantisation}
For both the tilting theory and for deforming an algebra, we will need to resolve an algebra $A$ as a bi-module over itself.
In this section, we review the basics of the bar complex and some other related material. We assume that the reader is familiar with the $L_{\infty}$-structures.

We specify our setting first: apart from the familiar case where $A$ is a $k$-algebra, we will need to consider some quiver algebras, which arise as the endomorphism algebra of some tilting object. Let $Q$ be a quiver, $Q_0=\{e_i\}$ be the set of nodes and $Q_1$ the set of length 1 paths. Following the notation of \cite{BUTLER1999323}, let $S=k\bra e_i\ket$ be the algebra generated by the idempotents. Since $Q_1$ is an $S$-bi-module, we can consider the tensor algebra
\bea \Gc=TQ_1=\bigoplus_{n\geq0}\underbrace{Q_1\otimes_S\cdots\otimes_S Q_1}_n,\nn\eea
that is, $\Gc$ is the path algebra of $Q$. Let $R$ be the 2-sided ideal of relations and we let $\Gl=\Gc/R$.

\subsection{The Bar Resolution}\label{sec_BR}
To resolve $\Gl$ as a bi-module over itself, one can always use the bar (or Hochschild) complex $\B_{\sbullet}\to \Gl$. Though it is very inefficient, it always works and has many extra structures, as we shall recall.

At degree $n$, we have
\bea \B_n=\Gl\otimes \underbrace{\Gl[1]\otimes\cdots\otimes \Gl[1]}_n\otimes\Gl,~~~~\otimes:=\otimes_S.\nn\eea
We also use the bar notation and write $a_0[a_1|\cdots|a_n]a_{n+1}$ for an element of $\B_n$. The differential reads
\bea \partial a_0[a_1|\cdots|a_n]a_{n+1}&=&a_0a_1[a_2|\cdots|a_n]a_{n+1}-a_0[a_1a_2|a_3|\cdots|a_n]a_{n+1}\nn\\
&&+\cdots+(-1)^{n-1}a_0[a_1|\cdots|a_{n-2}|a_{n-1}a_n]a_{n+1}+(-1)^na_0[a_1|\cdots|a_{n-1}]a_na_{n+1}.\nn\eea
Note that as the tensor is over $S$, the $a_i$'s are composable paths.

The bar complex has a chain homotopy as a \emph{left} $\Gl$-module map: $k:\,\B_i\to \B_{i+1}$
\bea &k([a_1|\cdots|a_i]b)=(-1)^{i-1}[a_1|\cdots|a_i|b]\label{bar_k}\\
& \{k,\partial\}=1,\nn\eea
where if $b$ has length 0, then the $k$ is set as zero.  One can also define a right module version of $k$.

Let $M$ be, say, a left $\Gl$-module, then the tensor product
\bea \B_{\sbullet}\otimes_{\Gl}M\nn\eea
is a resolution of $M$. It is not unhelpful to regard $M$ as a brane and the $\B_{\sbullet}$ tensored to it as a path ending on the brane. The right-module version of $k$ above gives a contraction of the complex (as an $S$-module), which reflects the ability to shrink a path to its end point.

Let us denote the cochain complex as $\B^n:=\hom_{\Gl^e}(\B_n,\Gl)$, where the subscript ${}_{\Gl^e}$ means that the maps are bi-module maps.
Let $\gd$ denote the differential on $\B^{\sbullet}$ induced from $\partial$ on $\B_{\sbullet}$.
There is a natural element $m\in \B^2$, which maps $1[a|b]1$ to $ab$. Since the multiplication in $\Gl$ is associative, it is easy to check that
$\gd m=0$.

The cochain complex has the structure of a graded Lie algebra equipped with the Gerstenhaber bracket. To formulate the bracket, one first defines the composition $\circ_i$ between two cochains $f\in B^m$ and $g\in B^n$
\bea f\circ_ig([a_1|\cdots|a_{m+n-1}]):=f([a_1|\cdots|a_{i-1}|g([a_i|\cdots|a_{i+n-1}])|a_{i+n}|\cdots|a_{m+n-1}])\nn\eea
i.e. one plugs the outcome of $g$ to the $i^{th}$ slot of $f$. The Lie bracket reads
\bea [f,g]=f\circ g-(-1)^{(m-1)(n-1)}g\circ f,~~~~f\circ g=\sum_{i=1}^m(-1)^{(n-1)(i-1)}f\circ_ig.\label{Lie_bkt_bar}\eea
For example, for $f\in B^1$ and $g,h\in B^2$, then
\bea &&[f,g]=f\circ_1g-g\circ_1f-g\circ_2f,\nn\\
&&[g,h]=g\circ_1h-g\circ_2h+(g\leftrightarrow h).\nn\eea
From this definition it is also easy to check that $[m,m]=0$ and hence $[m,-]$ is nilpotent and so a differential. It is \emph{minus} the differential $\gd$: $[m,f]=-\gd f$.
\begin{remark}\label{rmk_A_infty_model}
  We stressed earlier that the tensor products here are over $S$, and so $\B$ consists of composable paths in $\Gl$.
  One can regard $\Gl$ as a category with the idempotents $e_i$ as objects and the paths as morphisms. Then $\B_n$ consists of $n$ composable morphisms and $\B^n=\hom_{\Gl^e}(\B_n,\Gl)$ gives a composition of the $n$ morphisms. We have $m\in\B^2$ and $[m,m]=0$ giving the associativity of compositions. But more generally let $g\in \B^{\sbullet}$ with $[g,g]=0$, then $g$ gives $\Gl$ an $A_{\infty}$-category structure, whose deformation is controlled by the Hochschild cohomology $\HH^{\sbullet}(\Gl)$ \cite{lowen_2008}.
  In particular when $\Gl$ is a quiver algebra arising from a tilting object i.e. $\Gl=\opn{End}_{D^b(X)}(T)$, then we get a deformation of $D^b(X)$ as an $A_{\infty}$-category.
\end{remark}

\subsection{Deformation Quantisation}\label{sec_DQ}
Deforming the multiplication $m\to m+m'$ while maintaining the associativity amounts to demanding $[m+m',m+m']=0$. Remembering that $[m,\sbullet]=-\gd\sbullet$, we get
\bea -\gd m'+\frac12[m',m']=0,\label{MC}\eea
which is called the Maurer-Cartan (MC) equation.
Solving it is not easy due to its non-linearity, but at least one can try a power series solution. Write the deformation as $m\to m+\hbar m_1+\hbar^2 m_2+\cdots$, and unfold the MC-equation order by order in $\hbar$
\bea &&\gd m_1=0,\nn\\
&&\gd m_2=\frac12[m_1,m_1],\nn\\
&&\cdots\label{order_by_order}\eea
We see that already at step two, we may potentially not be able to find $m_2$: though $[m_1,m_1]$ is $\gd$-closed, it is not exact in any obvious way.

The solution to the above problem was provided by Kontsevich \cite{Kontsevich:1997vb}. For simplicity of discussion, we assume $X$ to be a smooth affine variety.
The central role here is played by the Hochschild cohomology of $\Gl$ defined as
\bea H\hspace{-.3em}H^n(\Gl):=\ext^n_{\Gl^e}(\Gl,\Gl).\label{Hochschild_coh}\eea
If one uses the bar resolution, then $\HH^n$ is the $n^{th}$ cohomology of $\B^{\sbullet}$.
An element in $\HH^2$ (more correctly its representative in $\B^2$) gives the $m_1$ in \eqref{order_by_order}, a first order deformation of the multiplication $m$.
The HKR theorem shows that $\HH^2(X)$ is isomorphic (for affine $X$) to $H^0(X,\wedge^2TX)$.
Therefore a bi-vector field $\pi$ leads to a deformation $m_1^{\pi}$
\bea m_1^{\pi}(f,g)=\{f,g\}_{\pi}.\nn\eea
What is special about this particular $m^{\pi}_1$ such that $[m_1^{\pi},m_1^{\pi}]$ is exact?
It was proved that there is an $L_{\infty}$-morphism (section 4.6.2 in \cite{Kontsevich:1997vb})
\bea \wedge^{\sbullet}TX\rightsquigarrow \B^{\sbullet}(\Gl).\label{L_infty_map}\eea
Note that both sides above are dg Lie algebras, however the map in between is an $L_{\infty}$-map, and so has non-linear corrections.
More crucially, under this map, the Maurer-Cartan elements are preserved.
One may now start from a holomorphic Poisson structure $\pi$ on $X$, the fact that $\bar\partial\pi=0=[\pi,\pi]$ means that $\pi$ is an MC-element. Under the $L_{\infty}$-map, it turns into an MC-element $m'(\pi)\in \B^2$. Thus instead of solving \eqref{order_by_order} order by order, one needs to obtain the $L_{\infty}$-map \eqref{L_infty_map}.
This done, the resulting $m+m'(\pi)$ gives a new associative multiplication. The leading deformation in $m'$ corresponds exactly to $\pi$ and so the Poisson bracket.

The detailed formula for the $L_{\infty}$-map is hard to compute, the difficulty again being the large size of $\B^{\sbullet}$. To get around this problem, we turn to an alternative smaller bi-module resolution $\P_{\sbullet}\to\Gl$. We will make use of two such resolutions in sec.\ref{sec_Smdatmr} and \ref{sec_TCSR}, but before that, we will sketch how $\P_{\sbullet}$ can help us.

By virtue of $\P_{\sbullet}$ and $\B_{\sbullet}$ both being a projective resolution for $\Gl$, there must be bi-module maps
\bea  \begin{tikzpicture}
  \matrix (m) [matrix of math nodes, row sep=.5em, column sep=4em]
    { \P_{\sbullet} & \B_{\sbullet} \\        };
  \path[->, font=\scriptsize, bend left =5]
  (m-1-1) edge node[above] {$F_{\sbullet}$} (m-1-2)
  (m-1-2) edge node[below] {$G_{\sbullet}$} (m-1-1);
  \end{tikzpicture}\label{map_F_G}\eea
We will see that the maps satisfy
\bea G_nF_n=1,~~~F_nG_n=1-(h_{n-1}\partial+\partial h_n),\nn\eea
where $h_n:~\B_n\to \B_{n+1}$ is a bi-module chain homotopy.
We will construct the homotopy $h_n$ later and we denote with $h^{\sbullet}$ the map of cochain complexes $\B^{\sbullet}\to \B^{\sbullet-1}$.
We now cut the difficult $L_{\infty}$-map $\wedge^{\sbullet}TX\rightsquigarrow \B^{\sbullet}$ into two parts by way of $\P^{\sbullet}$
\bea  \begin{tikzpicture}
  \matrix (m) [matrix of math nodes, row sep=.7em, column sep=2em]
    {  & \wedge^{\sbullet}TX \\
    \P^{\sbullet} & \\
    & \B^{\sbullet} \\ };
    \draw [->,line join=round,decorate, decoration={zigzag,segment length=4,amplitude=.9,post=lineto,post length=2pt}] (0.7,.5) -- (0.65,-.55);
    \draw [->,line join=round,decorate, decoration={zigzag,segment length=4,amplitude=.9,post=lineto,post length=2pt}] (0.1,.5) -- (-0.65,0.1);
    \draw [->,line join=round,decorate, decoration={zigzag,segment length=4,amplitude=.9,post=lineto,post length=2pt}] (-0.65,-0.1) -- (0.5,-0.6);
    \draw [blue,line width=0.1mm,-{[flex,sep]>}]
        (1.5,0) .. controls (0.8,0.2) and (0.4,-0.2)  .. (-0.1,-0.2);
    \node at (2.7,0) {\scriptsize{homotopy transfer}};
    \node at (-0.4,.5) {\scriptsize{$L_{\infty}$}};
\end{tikzpicture}\nn\eea
In this way, the map between $P^{\sbullet}$ and $B^{\sbullet}$ is a more familiar homotopy transfer, whose formula is given by summing over certain trees. The map between $\wedge^{\sbullet}TX$ and $P^{\sbullet}$ is not straightforward. But we may now choose to solve the MC equation in $P^{\sbullet}$, which is much smaller, so much so that we have some effective combinatorial tools to arrive at the MC-element quickly.

\section{The Minimal Resolution}\label{sec_Smdatmr}
We saw above that computing $\HH^{\sbullet}(\Gl)$ requires resolving $\Gl$ as a bi-module over itself. The bar complex is too big for this and in this section we shall review the minimal resolutions in \cite{BUTLER1999323}.

First, we explain the term `minimal'. Let $(A,\FR{m})$ be a local ring, then a resolution $P_{\sbullet}\to M$ is minimal iff $\partial P_n\subset\FR{m}P_{n-1}$. In the case of a graded algebra $A=\oplus_{n\geq0}A_n$, we require that $\partial P_n\subset A_{>0}P_{n-1}$.
In such cases, if the global dimension of $A$ is finite, then the length of the resolution is the global dimension (see sec 19.1 of \cite{Eisenbud95}).

The graded case is particularly relevant for us since all our quiver algebras are homogeneous, i.e. the ideal of relations $R$ is a homogeneous ideal.
We reiterate the setting: $\Gc=TQ_1$ is the tensor algebra of the length 1 paths $Q_1$ over $S=k\bra e_i\ket$, $J$ is the ideal generated by paths of nonzero length, $R\subset J$ is a homogeneous ideal and $\Gl=\Gc/R$. Thm.7.2. of \cite{BUTLER1999323} gives the following resolution
\bea\cdots\to \Gl\otimes_S\SF{T}_3\otimes_S\Gl\to \Gl\otimes_S\SF{T}_2\otimes_S\Gl\to \Gl\otimes_S\SF{T}_1\otimes_S\Gl\to \Gl\otimes_S\Gl\to \Gl\to0.\label{min_resolution_app}\eea
The various terms in \eqref{min_resolution_app} are
\bea\SF{T}_1=Q_1,~~\SF{T}_2=\frac{R\cap J^2}{JR+RJ},~~\SF{T}_3=\frac{JR\cap RJ}{R^2+JRJ},~~\SF{T}_4=\frac{R^2\cap JRJ}{JR^2+R^2J},\cdots\nn\eea
Also strictly speaking $\SF{T}_3$ is the $S$-bi-module complement of $R^2+JRJ$ in $JR\cap RJ$ and $\SF{T}_4$ is the $S$-bi-module complement of $JR^2+R^2J$ in $R^2\cap JRJ$. Such splits are possible since the algebra $S$ is separable, so we assume a split has been chosen.

\begin{example}\label{Ex_A1}
Consider our running example of the $A_1$ singularity
\bea  \begin{tikzpicture}
  \matrix (m) [matrix of math nodes, row sep=.5em, column sep=5em]
    { e_1 & e_0 \\        };
  \path[->, font=\scriptsize, bend right = 5]
  (m-1-1) edge (m-1-2);
  \path[->, font=\scriptsize, bend left = 5]
  (m-1-1) edge (m-1-2);
  \path[->, blue, font=\scriptsize, bend right = 25]
  (m-1-2) edge (m-1-1);
  \path[->, blue,font=\scriptsize, bend left = 25]
  (m-1-2) edge (m-1-1);

  \node at (0.1,0.2)  {\scriptsize{$w_2$}};
  \node at (0.1,-0.18)  {\scriptsize{$w_1$}};
  \node at (-0.8,0.50)  {\scriptsize{$z_2$}};
  \node at (-0.8,-0.47)  {\scriptsize{$z_1$}};
  \end{tikzpicture},\label{quiver_A1}\eea
with relations
\bea R=\bra e_0w_iz^ie_0,e_1z^iw_ie_1\ket.\nn\eea
Note that our arrows go from right to left e.g. $e_1z^1w_2e_1$ denotes $e_1\stackrel{z^1}{\ot}e_0\stackrel{w_2}{\ot}e_1$.
This is an algebra of global dimension 2, no need to compute $\SF{T}_3$, the resolution reads
\bea &&0\to \Gl e_0\otimes w_iz^i\otimes e_0\Gl{\vcenter{\hbox{\small$\bigoplus$}}}\Gl e_1\otimes z^iw_i\otimes e_1\Gl
\to\nn\\
&&\hspace{2cm}\to \Gl e_1\otimes z^i\otimes e_0\Gl{\vcenter{\hbox{\small$\bigoplus$}}}\Gl e_0\otimes w_i\otimes e_1\Gl\to
\Gl e_0\otimes e_0\Gl{\vcenter{\hbox{\small$\bigoplus$}}}\Gl e_1\otimes e_1\Gl\to \Gl\to0\label{min_res_A1}\eea
the differentials are as follows
\bea
&&\partial(e_0\otimes w_iz^i\otimes e_0)=e_0w_i\otimes z^i\otimes e_0+e_0\otimes w_i\otimes z^ie_0,\nn\\
&&\partial(e_1\otimes z^iw_i\otimes e_1)=e_1z^i\otimes w_i\otimes e_1+e_1\otimes z^i\otimes w_ie_1,\nn\\
&&\partial(e_1\otimes z^i\otimes e_0)=e_1z^i\otimes e_0-e_1\otimes z^ie_0,\nn\\
&&\partial(e_0\otimes w_i\otimes e_1)=e_0w_i\otimes e_1-e_0\otimes w_ie_1,\nn\\
&&\partial(e_a\otimes e_a)=e_a,~~a=0,1.\label{min_res_A1_maps}\eea
\end{example}
In fact, the first two steps of the bi-module resolution of a quiver algebra look identical: the $\SF{T}_1$ term is spanned by paths of length 1, with the differential
\bea \partial(1\otimes p\otimes 1)=p\otimes 1-1\otimes p.\nn\eea
For the $\SF{T}_2$ term, we pick a generating set for the relations $R$. For one such generator $p_1\cdots p_n\in R$, the differential acts as the split
\bea \partial(1\otimes p_1\cdots p_n\otimes 1)=p_1\cdots p_{n-1}\otimes p_n\otimes 1+p_1\cdots p_{n-2}\otimes p_{n-1}\otimes p_n+\cdots+1\otimes p_1\otimes p_2\cdots p_n.\nn\eea
Equipped with this bit of information, we give also the example of $A_n$ and $D_4$ Kleinian singularity
\begin{example}
 The $A_n$ singularity is defined as the locus $f=xy-z^{n+1}=0$. According to \cite{GNTjurina_1969}, it has an $n$-dim family of local semi-universal flat deformations, with basis
 \bea \frac{\BB{C}[x,y,z]}{\bra f,\partial_xf,\partial_yf,\partial_zf\ket}.\nn\eea
 Thus the deformations are written as
 \bea f\to f_t=xy-z^{n+1}-t_0-t_1z-\cdots t_{n-1}z^{n-1}.\nn\eea
 For generic $t$, the surface $X_t$ cut out by $f_t$ is smooth, so is its real slice $M_t$. The tilting theory applied to $\{f=0\}$ will give a non-commutative resolution to the same singularity. Running deformation quantisation afterwards will be regarded as applying brane quantisation to the complexification $M_t\hookrightarrow X_t$.

 We say a quick word about how the tilting object is obtained, more details can be found in the excellent lecture notes by Wemyss \cite{NCCR4}.
 Set $A$ to be the algebra $\BB{C}[x,y,z]/(xy-z^{n+1})$ and consider its ideals $I_p=(x,z^p)$ with $p=1,\cdots,n$. Between $I_p$ and $A$ we have the maps
\bea
\begin{tikzpicture}
  \matrix (m) [matrix of math nodes, row sep=.5em, column sep=3em]
    { A & (x,z) \\        };
  \path[->, font=\scriptsize, bend left =5]
  (m-1-1) edge node[above] {$z$} (m-1-2);
  \path[>->, font=\scriptsize, bend left =5]
  (m-1-2) edge (m-1-1);
  \end{tikzpicture}~~~
 \begin{tikzpicture}
  \matrix (m) [matrix of math nodes, row sep=.5em, column sep=3em]
    { (x,z^p) & (x,z^{p+1}) \\        };
  \path[->, font=\scriptsize, bend left =5]
  (m-1-1) edge node[above] {$z$} (m-1-2);
  \path[>->, font=\scriptsize, bend left =5]
  (m-1-2) edge (m-1-1);
  \end{tikzpicture}~~~\begin{tikzpicture}
  \matrix (m) [matrix of math nodes, row sep=.5em, column sep=3em]
    { (x,z^n) & A \\        };
  \path[->, font=\scriptsize, bend left =5]
  (m-1-1) edge node[above] {$y/z^n$} (m-1-2)
  (m-1-2) edge node[below] {$x$} (m-1-1);
  \end{tikzpicture}\label{map_A_n}\eea
  We put these modules and maps into the following quiver
  \bea
   \begin{tikzpicture}
   \matrix (m) [matrix of math nodes, row sep=2.5em, column sep=3em]
     {    & e_0 & \\
         e_1 & \cdots & e_n \\ };
\path[->,bend left=10]
   (m-2-1) edge  (m-1-2)
   (m-1-2) edge  (m-2-1)
   (m-1-2) edge  (m-2-3)
   (m-2-3) edge  (m-1-2)
   (m-2-3) edge  (m-2-2)
   (m-2-2) edge  (m-2-3)
   (m-2-2) edge  (m-2-1)
   (m-2-1) edge  (m-2-2);
\end{tikzpicture}\label{quiver_An}\eea
with $e_0$ corresponding to $A$ and $e_p$ to $I_p$. From \eqref{map_A_n}, we have the following relations among the maps
\bea (e_i\ot e_{i+1}\ot e_i)=(e_i\ot e_{i-1}\ot e_i).\label{relation_A_n}\eea
The reader may write down the bi-module resolution himself.
\end{example}
\begin{example}\label{Ex_D4}
  For a bit of change, we also take the $D_4$ singularity: $X=\{(u,v,w)\in\BB{C}^3|f=v^2-u^2w+4w^3=0\}$.
  To get the quiver requires some work. Let us recall that the singularity defined by $f=0$ also arises out of taking quotient of $\BB{C}^2$ by a finite subgroup of $SU(2)$, known as $BD_4$ (binary $D_4$). It has generators
  \bea T=\left[
         \begin{array}{cc}
           i & 0 \\
           0 & -i \\
         \end{array}\right],~~~S=\left[
         \begin{array}{cc}
           0 & i \\
           i & 0 \\
         \end{array}\right].\nn\eea
  This group has four non-trivial irreducible representations, one of which is denoted $R$, the fundamental representation of $SU(2)$.
  If we denote the two basis vectors of $R$ as $|+\ket$ and $|-\ket$, then the remaining three 1D representations have basis
  \bea R_1:~|+-\ket+|-+\ket,~~~R_2:~|++\ket+|--\ket,~~~R_3:~|++\ket-|--\ket,\nn\eea
  while the coordinates $(x,y)$ of $\BB{C}^2$ sit in the representation $R^*$. A natural object to study is the set of polynomials invariant under $G=BD_4$. These are generated by
  \bea u=x^4+y^4,~~~v=x^5y -y^5x,~~~w=x^2y^2\nn\eea
  which satisfy precisely the relation $f=v^2-u^2w+4w^3=0$. One can also consider the relative invariants $(\BB{C}[x,y]\otimes R_i)^G$ for each irreducible representation. We collect the results here
  \bea &&(\BB{C}[x,y]\otimes R_1)^G=\bra xy,x^4-y^4\ket,\nn\\
  &&(\BB{C}[x,y]\otimes R_2)^G=\bra x^2+y^2,x^3y-y^3x\ket,~~~(\BB{C}[x,y]\otimes R_3)^G=\bra x^2-y^2,x^3y+y^3x\ket,\nn\\
  &&(\BB{C}[x,y]\otimes R)^G=\bra
  \begin{bmatrix}
    x \\ y
  \end{bmatrix},\begin{bmatrix}
    x^5 \\ y^5
  \end{bmatrix},\begin{bmatrix}
    x^2y \\ -xy^2
  \end{bmatrix},\begin{bmatrix}
    y^3 \\ -x^3
  \end{bmatrix}\ket.\nn\eea
  Setting $R_0$ to be the trivial representation, we arrange these modules into the affine Dynkin diagram $D_4$
  \bea
   \begin{tikzpicture}
   \matrix (m) [matrix of math nodes, row sep=3em, column sep=4em]
     {    & R_0 & \\
          R_1 & R & R_3 \\
            & R_2 &  \\ };
\path[->,bend left=10]
   (m-2-1) edge node[above] {$[-x;y]$} (m-2-2)
   (m-2-3) edge node[below] {$[y;x]$} (m-2-2)
   (m-1-2) edge node[right] {$[x;y]$} (m-2-2)
   (m-3-2) edge node[left] {$[y;-x]$} (m-2-2);
\path[->,bend left=10]
   (m-2-2) edge (m-2-1)
   (m-2-2) edge (m-2-3)
   (m-2-2) edge (m-1-2)
   (m-2-2) edge (m-3-2);
\end{tikzpicture}\label{D4_quiver}\eea
where the maps from $R_{0,1,2,3}$ to $R$ have been labelled. The maps $R\to R_{1,2,3,0}$ are obtained by tensoring $R$ with $[x;y]$ and decomposing into the various relative invariants. We summarise the results
\bea
\begin{bmatrix}
  f(x,y) \\ g(x,y)
\end{bmatrix}&\to&\frac{1}{2}(xg+yf)(|+-\ket+|-+\ket)+\frac12(xf+yg)(|++\ket+|--\ket)\nn\\
&&+\frac12(xf-yg)(|++\ket-|--\ket)+\frac12(xg-yf)(|+-\ket-|-+\ket)\label{D4_maps}\eea
where the rhs are the invariants relative to $R_1,R_2,R_3,R_0$ respectively.

As for the relations among the maps, one can easily be convinced that $R_i\ot R\ot R_i$ is zero for all $i$, and
\bea \sum_i R\ot R_i\ot R=0.\nn\eea
We remark that these relations are the same ones that appeared in Kronheimer's HyperK\"ahler quotient construction of the ALE spaces \cite{Kronheimer1989TheCO}.
Indeed, if one starts from the quiver \eqref{D4_quiver}, and considers the quiver representation with dimension vector $\udl{d}=(1,1,1,1,2)$ for the nodes $R_{0,1,2,3},R$, then the moduli space of such representations can be obtained as a GIT or a HyperK\"ahler quotient, in which the relations between the arrows are the moment map conditions. This HyperK\"ahler quotient gives a surface that is a crepant resolution of the $D_4$ singularity $\hat X\to X$, whereas the quiver algebra \eqref{D4_quiver} gives a non-commutative crepant resolution (NCCR).
\end{example}

\vskip.3cm

Finally we will recall the central example $X=T^*\BB{P}^2$. The quiver algebra is obtained as the endomorphism algebra of the tilting object of $D^b(X)$, which consists of ${\cal O}_{\BB{P}^2}$, ${\cal O}_{\BB{P}^2}(1)$ and ${\cal O}_{\BB{P}^2}(2)$ (then pulled back to $T^*\BB{P}^2$), see \cite{TODA20101}.
The quiver reads
\bea  \begin{tikzpicture}
  \matrix (m) [matrix of math nodes, row sep=.5em, column sep=5em]
    { 2 & 1 & 0 \\        };
  \path[->, font=\scriptsize, bend right = 10] (m-1-1) edge node [below] {\scriptsize{$\vec w\times 3$}} (m-1-2);
  \path[->, font=\scriptsize, bend right = 10] (m-1-2) edge node [below] {\scriptsize{$\vec w\times 3$}} (m-1-3);
  \path[->, blue,font=\scriptsize, bend right = 10] (m-1-2) edge node [above] {\scriptsize{$\vec z\times 3$}} (m-1-1);
  \path[->, blue,font=\scriptsize, bend right = 10] (m-1-3) edge node [above] {\scriptsize{$\vec z\times 3$}} (m-1-2);
\end{tikzpicture}\label{quiver_TP2}\eea
where each arrow is in fact three arrows labelled with $z^{1,2,3}$ or $w_{1,2,3}$.
Here the three nodes correspond to the object ${\cal O}_{\BB{P}^2}(0,1,2)$ respectively. Note that for the tilting theory to work, it is crucial that the higher cohomologies $\ext^{>0}({\cal O}(m),{\cal O}(n))$ vanish for $m,n=0,1,2$.

The relations of the quiver are
\bea &e_2\stackrel{z^i}{\leftarrow}\stackrel{z^j}{\leftarrow}e_0=e_2\stackrel{z^j}{\leftarrow}\stackrel{z^i}{\leftarrow}e_0,~~~
e_0\stackrel{w_i}{\leftarrow}\stackrel{w_j}{\leftarrow}e_2=e_0\stackrel{w_j}{\leftarrow}\stackrel{w_i}{\leftarrow}e_2,~~~
e_1\stackrel{z^i}{\leftarrow}\stackrel{w_j}{\leftarrow}e_1=e_1\stackrel{w_j}{\leftarrow}\stackrel{z^i}{\leftarrow}e_1,\nn\\
&\sum_ie_a\stackrel{z^i}{\leftarrow}\stackrel{w_i}{\leftarrow}e_a=0,~~~a=1,2,~~~~
\sum_ie_a\stackrel{w_i}{\leftarrow}\stackrel{z^i}{\leftarrow}e_a=0,~~~a=0,1.\nn\eea
The algebra is of global dimension 4, so one does have to work out the $\SF{T}_{3,4}$ terms in \eqref{min_resolution_app}. This was done in  sec.6.3 and appendix B in \cite{RW_loc_tilt}
\bea &&\SF{T}_2=\left[
           \begin{array}{c}
             e_1(z\cdotp w\oplus w\cdotp z)e_1 \\
             e_2(z\cdotp w)e_2 \\
             e_0(w\cdotp z)e_0 \\
             e_1([w_i,z^j]_{\rm t.f.})e_1 \\
             e_2(z^{[i}z^{j]})e_0 \\
             e_0(w_{[i}w_{j]})e_2 \\
           \end{array}\right]~~~\SF{T}_3=\left[
           \begin{array}{c}
             e_2(z^kw_kz^j-z^kz^jw_k+z^jz^kw_k)e_1 \\
             e_0(w_kz^kw_j-w_kw_jz^k+w_jw_kz^k)e_1 \\
             e_1(z^jw_kz^k-w_kz^jz^k+w_kz^kz^j)e_0 \\
             e_1(w_jz^kw_k-z^kw_jw_k+z^kw_kw_j)e_2
           \end{array}\right],~\nn\\
           &&
           ~~\SF{T}_4=\left[
           \begin{array}{c}
             e_2(z^{[k}z^{l]}w_{[k}w_{l]}-2z^kw_kz^lw_l)e_2 \\
             e_0(w_{[k}w_{l]}z^{[k}z^{l]}-2w_kz^kw_lz^l)e_0 \\
             e_1(\left[z^i,w_j\right]\left[w_i,z^j\right]-z^iw_iw_jz^j-w_iz^iz^jw_j)e_1
\end{array}\right].\label{T_1234}\eea
As for the differential, we have mentioned that $\partial \SF{T}_1$ and $\partial \SF{T}_2$ are standard for all cases.
As for $\partial \SF{T}_3$, by definition, a length 3 path in $\SF{T}_3$ can be written in two ways $p=xr=r'x'$ where $x,x'$ are two length 1 paths and $r,r'$ are two length 2 relations. Then $\partial_3(1\otimes_Sp\otimes_S1)$ reads
\bea \Gl\otimes_S\SF{T}_3\otimes_S\Gl \ni 1\otimes_Sp\otimes_S1 \stackrel{\partial_3}{\to} x\otimes_Sr\otimes_S1-1\otimes_Sr'\otimes_Sx'\in \Gl\otimes_S\SF{T}_2\otimes_S\Gl,\nn\eea
Finally a path $q$ of length 4 in $\SF{T}_4$ is written in two ways $q=rr'=xr''y$ with $x,y$ of length 1 and $r,r',r''$ being relations. We have the map
\bea \Gl\otimes_S\SF{T}_4\otimes_S\Gl \ni 1\otimes_Sq\otimes_S1 \stackrel{\partial_4}{\to} x\otimes_Sr''y\otimes_S1+1\otimes_Sxr''\otimes_S y\in \Gl\otimes_S\SF{T}_3\otimes_S\Gl.\nn\eea
Note that prop.A.1 (b) of \cite{BUTLER1999323} guarantee that $r''y$ and $xr''$ do belong to $\SF{T}_3$.
Let us look at an example $q=e_2(z^{[k}z^{l]}w_{[k}w_{l]}-2z^kw_kz^lw_l)e_2$, then
\bea \partial_4(1\otimes_Sq\otimes_S1)=2z^k\otimes_S([z^l,w_k]w_l-z^lw_lw_k)\otimes_S1+2\otimes_S(z^k[z^l,w_k]-z^lz^kw_k)\otimes_Sw_l\nn\eea
and each parenthesis is manifestly in $JR\cap RJ$ and so the rhs is in $\SF{T}_3$.

\section{The Chouhy-Solotar Resolution}\label{sec_TCSR}
In contrast to the resolution of the previous section, the one constructed in \cite{CHOUHY201522} is not always minimal. However it is very down-to-the-earth and closer to human thinking. Let us spend a few words on this very aspect. Imagine one proposes an algebra presented as generators and relations, i.e. the algebra elements are words in certain alphabet. One can apply the relations to reduce any word until one runs out of relations. This process is by no means unique: depending on how one does the reduction, the resulting irreducible word may be different. One can also define the product of two irreducible words by juxtaposing them and then reducing them, but this will likewise not give a unique result.

The failure of the uniqueness in the reduction comes from two sources.
First, one should well order the set of words (e.g.  dictionary order), and try to reduce the order of a word at each step of the reduction. But one may fall into `local minima' (we will see some examples soon). Second, there may be words like `abc', such that both `ab' and `bc' can be reduced, and so the result depends on the order of reduction. Such words are known as ambiguities. In fact, these ambiguities are very similar to the $\SF{T}_3$ term in \eqref{min_resolution_app}.
There might be higher ambiguities, i.e. ambiguities of ambiguities e.g. the $\SF{T}_4$ terms in \eqref{min_resolution_app}, and so on.
One can formalise these ideas and build a resolution
\bea \SF{P}_{\sbullet}:~~ \cdots\to \Gl\otimes\SF{A}_2\otimes\Gl \to \Gl\otimes\SF{A}_1\otimes\Gl\to\Gl\otimes\SF{A}_0\otimes\Gl\to \Gl\otimes\Gl\to \Gl\to0,\label{non_min_res}\eea
the $\SF{A}_0$ term is isomorphic to $\SF{T}_1$, but $\SF{A}_1$ may already need more generators than $\SF{T}_2$, for a reason that we shall review next. We have also named this resolution $\SF{P}_{\sbullet}$. This is because all the Chouhy-Solotar resolution for us will have an obvious contraction to the minimal resolution and the identical notation seems justified.

\subsection{Reduction System}\label{sec_RS}
We first give a well-ordering $<_{\go}$ to the set of words, the reader should consult page 5 of \cite{CHOUHY201522} for the most general definition. But for this paper, we can use a simpler version. First $w_1<_{\go} w_2$ if $w_1$ has shorter length, or, when the lengths are the same, $w_1$ precedes $w_2$ in the dictionary. Let us reiterate that our words are read \emph{from right to left}.
This ordering is sufficient if the relations preserve or decrease the word length.
\begin{definition}(Def 2.8 \cite{CHOUHY201522})
  Let $p\in kQ$ be a path, written as a sum of monomials. The tip, denoted $\SF{tip}(p)$, is the largest monomial according to the order $<_{\go}$.

  Given an ideal of $kQ$, the set $S(I)$ (or just $S$ if $I$ is clear) consists of monomials $p$ that are tips of elements of $I$, with the condition that it does not contain a proper sub-path which is a tip of $I$.
\end{definition}
From the definition, a path in $S(I)$ cannot be the sub-path of another element in $S(I)$. Further, by the definition, given $s\in S$, there is another path $f_s$ such that $s-f_s\in I$, and $f_s<_{\go}s$.
\begin{remark}
  The set $S$ is somewhat like a Gr\"obner basis and in general not easy to compute. It is not enough to just choose a set of generators for $I$, and take the tip of each generator. This typically results in a set smaller than $S$.

  One can use $\{(s,f_s)|s\in S\}$ as a \emph{reduction system}: one scans a word for sub-paths coinciding with $s$, then replace $s$ with $f_s$.
  Lem 2.9 in \cite{CHOUHY201522} shows that the set $s-f_s$ generates $I$, and lem 2.10 shows that the reduction of a word is unique.
\end{remark}
We give an example to show why the naive way of computing $S$ described above does not work.
\begin{example}\label{Ex_A1_I}
  Back to our $A_1$ example \ref{Ex_A1}, we order the letters
  \bea w_1>w_2>z^2>z^1\label{order_best}\eea
  where we omit the $\go$ in $>_{\go}$. The ideal of relations $R=\bra e_1z^iw_ie_1,e_0w_iz^ie_0\ket$ has the set of tips $S=\{e_1z^1w_1e_1,e_0w_2z^2e_0\}$. The reduction of the word $e_1z^1w_2z^2e_0$ reads
  \bea e_1z^1\underline{w_2z^2}e_0\to -e_1\underline{z^1w_1}z^1e_0\to e_1z^2w_2z^1e_0.\nn\eea
  In fact the reduction of a more general word results in ordering the $z$'s according to the order above, and likewise for the $w$'s.

  For illustration purpose, we deliberately pick a \emph{sub-optimal} order
  \bea w_2>w_1>z^2>z^1.\label{order_bad}\eea
  If pick $S$ in the naive way as $S=\{e_1z^2w_2e_1,e_0w_2z^2e_0\}$, then the reduction of $e_1z^2w_2z^2e_0$ will depend on how one does it
  \bea e_1z^2\underline{w_2z^2}e_0\to -e_1z^2w_1z^1e_0,\nn\\
  e_1\underline{z^2w_2}z^2e_0\to -e_1z^1w_1z^2e_0.\nn\eea
  The two words in the rhs are no longer reducible, but are not equal as they ought to be. What went wrong here is: the two words are still equal up to $R$:
  \bea e_1(z^2w_1z^1-z^1w_1z^2)e_0=\sum_ie_1(z^2w_iz^i-z^iw_iz^2)e_0.\in R\nn\eea
  The tip of the first term is $z^2w_2z^2$, so is it for the second term. Thus the `leading' tips get cancelled, leaving the sub-leading tip $z^1w_1z^2$. This is the `local minima' that we mentioned earlier. One must accordingly include this sub-leading tip into $S$
  \bea S=\bra e_1z^2w_2e_1,e_0w_2z^2e_0,e_1z^1w_1z^2e_0,e_0w_1z^1w_2e_1\ket\nn\eea
  giving a reduction system
  \bea z^2w_2\to -z^1w_1,~~~w_2z^2\to -w_1z^1,~~~z^1w_1z^2\to z^2w_1z^1,~~~w_1z^1w_2\to w_2z^1w_1.\nn\eea
  We have also stopped writing the $e_i$'s wherever they can be inferred from the word itself.
\end{example}

Once we obtain $S$, the term $\SF{A}_1$ in the resolution \eqref{non_min_res} is precisely generated by the set $S$.
To continue, we introduce formally the $n$-ambiguities
\begin{definition}\label{def_ambiguity}(Def 3.1 \cite{CHOUHY201522})
  A word $u_0u_1\cdots u_n$ is a left $n$-ambiguity ($n\geq2$) if: $u_i,~i=1,\cdots,n$ are irreducible and $u_0\in Q_1$ is such that $u_iu_{i+1}$ is reducible but $u_id$ is not, with $d$ being any proper sub-word starting from the left of $u_{i+1}$.

  There is a symmetric notion of right $n$-ambiguity, but it turns out that the set of left and right ambiguities coincide.
\end{definition}

\begin{example}\label{Ex_A1_II}
  Continuing from Ex.\ref{Ex_A1_I}, the following are the left 2,3,4-ambiguities
  \bea &&\SF{A}_2:\;\bra w_2\cdotp z^2\cdotp w_2,~~z^2\cdotp w_2\cdotp z^2\ket,\nn\\
  &&\SF{A}_3:\;\bra z^1\cdotp w_1z^2\cdotp w_2\cdotp z^2,w_1\cdotp z^1w_2\cdotp z^2\cdotp w_2,z^2\cdotp w_2\cdotp z^2\cdotp w_2,w_2\cdotp z^2\cdotp w_2\cdotp z^2\ket,\nn\\
  &&\cdots\nn\eea
  where we have inserted $\cdotp$ to show how the words are broken into ambiguities. For the higher $\SF{A}_n$'s, the pattern is that one adds $z^2$ or $w_2$ to the right of $\SF{A}_{n-1}$ where appropriate.
\end{example}

Now we build the resolution from $\SF{A}_n$ as in \eqref{non_min_res}
\bea \cdots\to \Gl\otimes\SF{A}_2\otimes\Gl \stackrel{\partial_3}{\to} \Gl\otimes\SF{A}_1\otimes\Gl\stackrel{\partial_2}{\to}\Gl\otimes\SF{A}_0\otimes\Gl\stackrel{\partial_1}{\to} \Gl\otimes\Gl\stackrel{\ep}{\to} \Gl\to0,\label{non_min_res_map}\eea
and we define the chain maps next. The explicit maps in low degrees are given in sec 6 in \cite{CHOUHY201522}. Here we shall describe these maps to spare the reader the need to plough through the notations.

First the augmentation map $\ep:\;\Gl\otimes\Gl\to \Gl$ is just the multiplication. The map $\partial_1$ is identical to the minimal resolution $\partial_1(1\otimes a\otimes 1)=a\otimes 1-1\otimes a$, where $a\in Q_1$.

To write $\partial_2$, we define a few operations on the words. First $\opn{spl}$ is the split
\bea \opn{spl}(c_n\cdots c_1)=\sum_{i=1}^n\overline{c_n\cdots c_{i+1}}\otimes c_i\otimes \overline{c_{i-1}\cdots c_1},\label{split_map}\eea
We caution the reader that the factor to the left and right of the tensor are in $\Gl$, and so a reduction mod $R$ is applied to $c_n\cdots c_{i+1}$ and $c_{i-1}\cdots c_1$ automatically. We also place a bar on top to stress this point.
Take $1\otimes s\otimes 1\in \SF{A}_1$, set
\bea \partial_2(a\otimes s\otimes b)=a\opn{spl}(s-f_s)b.\nn\eea
Since $s-f_s\in R$, it is easily seen that $\partial_1\partial_2=0$.

To define $\partial_3$, we make use of the right reduction map $\opn{red}^R$. This map does exactly what it says on the tin: reducing a word from the right, i.e. one scans for the rightmost segment of a word that matches a tip, say $s\in S$, then replaces $s$ with the corresponding $f_s$, and starts over again until no more tips are found.
Somewhat awkwardly, we will need another reduction map $\opn{red}^{L'}$, which does the first reduction from the left and the remaining ones from the right.

Now one applies $\opn{red}^{L'}$ to $\SF{A}_2$ (ambiguities). Along the way, one collects all the tips used during the reduction. For example, when reducing $asb$ to $af_sb$, one collects a term $a\otimes s\otimes b\in \SF{A}_1$. Do the same using $\opn{red}^R$ and one gets another set of tips. Then we get $\partial_3$ as
\bea \partial_3(a\otimes w\otimes b)=a\big(\textrm{tips from }\opn{red}^{L'}-\textrm{tips from }\opn{red}^R\big)b.\label{partial_3_P}\eea
\begin{remark}
One can readily appreciate the intuitive idea behind $\partial_3$. Given an ambiguity $w$, we know $\opn{red}^{L'}(w)=\opn{red}^R(w)$, but one will collect a different set of tips in doing reductions in two different orders. Therefore $\partial_3(\SF{A}_2)$ gives the relation between relations.

The map $\opn{red}^{L'}$ seems somewhat awkward. But in fact, it is only important that the first reduction in $\opn{red}^{L'}$ be from the left. One can even choose a different set of reductions for each element of $\SF{A}_2$, see prop.6.6 in \cite{CHOUHY201522}.
Here we follow the choice of \cite{BarmeierWang}, its convenience will be clear later.
\end{remark}
\begin{example}\label{Ex_A1_III}
  Continuing from Ex.\ref{Ex_A1_II}, the differentials acting on relations are
  \bea &&\SF{A}_1\ni z^2w_2\stackrel{\partial_2}{\to}z^i\otimes w_i\otimes e_1+e_1\otimes z^i\otimes w_i,\nn\\
  &&\SF{A}_1\ni z^1w_1z^2\stackrel{\partial_2}{\to}z^{[1}w_1\otimes z^{2]}\otimes e_0+z^{[1}\otimes w_1\otimes z^{2]}+e_1\otimes z^{[1}\otimes w_1z^{2]}.\nn\eea
  For $\partial$ acting on ambiguities, we shall give a sample calculation
  \bea z^2w_2z^2&\stackrel{\opn{red}^{L'}}{\to}& -z^1w_1z^2 \to -z^2w_1z^1,\nn\\
  &\stackrel{\opn{red}^R}{\to}& -z^2w_1z^1.\nn\eea
  Collecting the tips of the relations used along the way, we get
  \bea \partial(z^2w_2z^2)=e_1\otimes z^2w_2\otimes z^1-{\color{blue}e_1\otimes z^1w_1z^2\otimes e_0}-z^2\otimes w_2z^2\otimes e_0,\nn\eea
  We would like to draw the reader's attention to the blue term. It does not lie in $\Gl_{>0}\SF{A}_1+\SF{A}_1\Gl_{>0}$, and so the resolution is not minimal,
  see the beginning of sec.\ref{sec_Smdatmr}.
  What transpired is that, with the sub-optimal ordering, the term $z^1w_1z^2$ is the tip of a superfluous reduction $z^1w_1z^2\mapsto z^2w_1z^1$, needed solely to give the lhs a leg up the `local minima'. On the other hand, we just found a term in $z^2w_2z^2\in\SF{A}_2$ that maps isomorphically onto the said superfluous tip. One can use a retraction to `cancel' the two terms from $\SF{A}_2$ and $\SF{A}_1$ in pairs.

  We take another example, a four letter ambiguity $w_1\cdotp z^1w_2\cdotp z^2$,
  \bea w_1z^1w_2z^2&\stackrel{\opn{red}^{L'}}{\to}&w_2z^1w_1z^2\to w_2z^2w_1z^1\to -w_1z^1w_1z^1,\nn\\
  &\stackrel{\opn{red}^R}{\to}&-w_1z^1w_1z^1.\nn\eea
  Collecting the tips gives
  \bea
  \partial(w_1z^1w_2z^2)=e_0\otimes w_1z^1w_2\otimes z^2+w_2\otimes z^1w_1z^2\otimes e_0+e_0\otimes w_2z^2\otimes w_1z^1-w_1z^1\otimes w_2z^2\otimes e_0.\nn\eea
  This term is in turn cancelled by a 3-ambiguity $w_2\cdotp z^2\cdotp w_2\cdotp z^2$.
  \begin{remark}\label{rmk_contract}
  In fact, this pattern of cancellation of all terms superfluous persists: even as our resolution $\SF{A}_n$ is nonzero for all $n$, there is an easy chain homotopy that contracts it to the minimal resolution \eqref{min_res_A1}.
  This contraction grants us the latitude of treating the current non-minimal resolution as a minimal one.
  \end{remark}
\end{example}

\subsection{Resolution for $T^*\BB{P}^2$}\label{sec_RfT}
We finish this technical section by giving the $\SF{A}_1$ and $\SF{A}_2$ terms for the quiver \eqref{quiver_TP2} relevant for $T^*\BB{P}^2$.

We first pick an order for the letters
\bea z^1<z^2<z^3<w_1<w_2<w_3.\nn\eea
The reduction system consists of the `essential ones' and `superfluous ones', the former being in 1-1 correspondence with $\SF{T}_2$, while the latter serves to resolve the `local minima'. It is possible that by experimenting with slightly different ordering, the amount of superfluous terms may go down, but we have not invested any effort in that regard.

Starting with the `essential part' of the reduction system
\bea &&\bra z^iz^j\mapsto z^jz^i\ket_{20},~~~~\bra w_jw_i\mapsto w_iw_j\ket_{02},~~~i<j,\nn\\
&&\bra z^iw_j\mapsto w_jz^i\ket_{11},~~i\neq j,~~~~\bra z^1w_1\mapsto w_1z^1\ket_{11},~~\bra z^2w_2\mapsto w_2z^2\ket_{11}\nn\\
&&\bra z^3w_3\mapsto -w_1z^1-w_2z^2\ket_{11},\nn\\
&&\bra w_3z^3\mapsto -w_1z^1-w_2z^2\ket_{11},\nn\\
&&\bra z^3w_3\mapsto -z^1w_1-z^2w_2\ket_{22},\nn\\
&&\bra w_3z^3\mapsto -w_1z^1-w_2z^2\ket_{00},\label{rel_nec}\eea
where the notation $\bra\cdots\ket_{ij}$ is a space saving device for $e_i\otimes(\cdots)\otimes e_j$. Every term here has a clear counterpart in $\SF{T}_2$ of \eqref{T_1234}, and hence essential.

Now the `superfluous part'
\bea && \bra w_2z^1w_3\mapsto w_3z^1w_2\ket_{12},~~~\bra w_1z^2w_3\mapsto w_3z^2w_1\ket_{12},~~~\bra w_1z^3w_2\mapsto w_2z^3w_1\ket_{12},\nn\\
&&\bra w_1z^1w_2\mapsto w_2z^1w_1\ket_{12},~~~\bra w_1z^2w_2\mapsto w_2z^2w_1\ket_{12},~~~\bra w_2z^2w_3\mapsto w_3z^2w_2\ket_{12},\nn\\
&&\bra w_1z^1w_3\mapsto w_3z^1w_1\ket_{12}.\label{rel_wzw}\\
&&\bra z^2w_1z^3\mapsto z^3w_1z^2\ket_{21},~~~\bra z^1w_2z^3\mapsto z^3w_2z^1\ket_{21},~~~\bra z^1w_3z^2\mapsto z^2w_3z^1\ket_{21},\nn\\
&&\bra z^2w_2z^3\mapsto z^3w_2z^2\ket_{21},~~~\bra z^1w_1z^3\mapsto z^3w_1z^1\ket_{21},~~~\bra z^1w_1z^2\mapsto z^2w_1z^1\ket_{21},\nn\\
&&\bra z^1w_2z^2\mapsto z^2w_2z^1\ket_{21}\label{rel_zwz}.\\
&&\bra w_3w_1z^3\mapsto -w_2w_1z^2-w_1w_1z^1\ket_{01},\nn\\
&&\bra w_3w_2z^3\mapsto -w_2w_2z^2-w_2w_1z^1\ket_{01}.\label{new_rel}\eea
These will get mapped onto by terms from $\SF{A}_2$, to which we turn now.

The $\SF{A}_2$ terms have likewise an essential and a superfluous part.
Our earlier remark \ref{rmk_contract} about the contraction equally applies.
From this understanding, when listing $\SF{A}_2$, we will need, other than those corresponding to $\SF{T}_3$, only those that will serve to cancel the superfluous terms in $\SF{A}_1$.
We have
\bea &&\bra z^iz^jw_k\ket_{21},~~~~\bra z^3w_3z^3\ket_{21}~~~~i<j,\nn\\
&&\bra z^kw_iw_j\ket_{12},~~~~\bra w_3z^3w_3\ket_{12}~~~~i<j,\nn\\
&&\bra z^iw_3z^3\ket_{10},\nn\\
&&\bra w_3z^3w_i\ket_{01},~~~~\bra w_{1,2}w_3z^3\ket_{01}\label{Amb}.\eea
We summarise the differentials of $\SF{A}_2$ in the next lemma without proof, which is but some mind-numbing calculations.
\begin{lemma}\label{prop_diff_TP2}
The 28 ambiguities in \eqref{Amb} map under $\partial$ to $\partial\SF{T}_3$ (12 in total), and onto the superfluous relations Eqs.\ref{rel_wzw}, \ref{rel_zwz}, \ref{new_rel} (16 in total). Concretely
  \bea &&\partial\bra z^iw_3z^3\ket_{10}\sim-\partial(\SF{T}_3^3)^i,~~~i=1,2,3,\nn\\
&& \partial\bra w_3z^3w_3\ket_{01}\sim-\partial(\SF{T}_3^2)_3,\nn\\
&&\partial\bra w_3z^3w_i+w_iw_3z^3\ket_{01}\sim-\partial(\SF{T}_3^2)_i,~~i=1,2\nn\\
&&\partial\bra z^1z^2w_2+z^1z^3w_3\ket_{21}\sim-\partial(\SF{T}_3^1)^1,\nn\\
&&\partial\bra z^1z^2w_1- z^2z^3w_3\ket_{21}\sim\partial(\SF{T}_3^1)^2,\nn\\
&&\partial\bra -z^3w_3z^3+z^1z^3w_1+z^2z^3w_2\ket_{21}\sim\partial(\SF{T}_3^1)^3.\nn\eea
Here the notation $(\SF{T}_3^2)_j$ means the second entry in $\SF{T}_3$ in \eqref{T_1234}, i.e. $\bra w_kz^kw_j-w_kw_jz^k+w_jw_kz^k\ket_{01}$.
\end{lemma}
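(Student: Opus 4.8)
The plan is to compute the action of the Chouhy--Solotar differential $\partial_3$ on each of the 28 ambiguities in \eqref{Amb} using the recipe described just above, namely $\partial_3(a\otimes w\otimes b)=a(\text{tips from }\opn{red}^{L'}-\text{tips from }\opn{red}^R)b$. For each ambiguity $w=u_0u_1u_2$ I would first fix the two reduction paths: $\opn{red}^{L'}$ performs the leftmost reduction on $u_0u_1$ and then reduces the result from the right, while $\opn{red}^R$ reduces from the right throughout. Since $w$ is genuinely an ambiguity, both terminate at the same irreducible word, but the bookkeeping of which tips $s\in S$ (equivalently which elements of $\SF{A}_1$) are consumed differs, and that difference is exactly $\partial_3 w$. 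The key organising principle is the partition of $\SF{A}_1$ into the `essential' relations \eqref{rel_nec} (dual to $\SF{T}_2$) and the `superfluous' ones \eqref{rel_wzw}, \eqref{rel_zwz}, \eqref{new_rel}, together with remark \ref{rmk_contract}: the subset of $\SF{A}_2$ dual to $\SF{T}_3$ should map, modulo $\Gl_{>0}\SF{A}_1+\SF{A}_1\Gl_{>0}$, to $\partial\SF{T}_3$, while the remaining 16 ambiguities in \eqref{Amb} should map \emph{isomorphically} onto the 16 superfluous generators, which is what makes the contraction to the minimal resolution \eqref{min_resolution_app} work.

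I would carry this out family by family. First the four families dual to $\SF{T}_3$: start with $\bra z^iw_3z^3\ket_{10}$ (three elements), where the ambiguity arises because both $z^iw_3$ and $w_3z^3$ are tips; reducing one way uses $\bra z^iw_3\mapsto w_3z^i\ket_{11}$ then $\bra w_3z^3\mapsto -w_1z^1-w_2z^2\ket_{11}$, the other way uses $\bra w_3z^3\mapsto -w_1z^1-w_2z^2\ket_{11}$ then the appropriate $\bra z^iw_j\ket_{11}$'s; matching the collected tips against the third entry $(\SF{T}_3^3)^i$ of \eqref{T_1234}, which up to a sign is $z^jw_kz^k-w_kz^jz^k+w_kz^kz^j$, gives the first claimed relation. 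The analogous computations for $\bra w_3z^3w_3\ket_{01}$, for $\bra w_3z^3w_i+w_iw_3z^3\ket_{01}$ ($i=1,2$) — where one must combine two ambiguities to land exactly on $(\SF{T}_3^2)_j$ — and for the three $\SF{A}_2$-combinations in the $\bra\,\cdot\,\ket_{21}$ sector producing $(\SF{T}_3^1)^{1,2,3}$ proceed the same way, the only subtlety being to identify the right linear combinations of raw ambiguities so that the output is $\partial\SF{T}_3$ rather than merely homologous to it. Then, for each of the remaining 16 ambiguities (those producing the `$wzw$', `$zwz$' and `new' relations), I would check that $\opn{red}^{L'}$ and $\opn{red}^R$ differ by consuming precisely one superfluous generator (plus possibly essential generators that lie in $\Gl_{>0}\SF{A}_1+\SF{A}_1\Gl_{>0}$ and are therefore irrelevant after contraction), establishing the claimed bijection with \eqref{rel_wzw}--\eqref{new_rel}.

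The main obstacle is purely combinatorial stamina rather than conceptual: the reductions in the $\bra\,\cdot\,\ket_{11}$ sector are not confluent in the naive sense (this is exactly why the superfluous part of the reduction system exists), so when tracing $\opn{red}^{L'}$ one repeatedly runs into words like $z^3w_1z^2$ or $w_3w_1z^3$ whose tips must be re-reduced, and one has to keep careful track of signs coming from the antisymmetrisations $z^{[i}z^{j]}$, $w_{[i}w_{j]}$, $[w_i,z^j]_{\mathrm{t.f.}}$ in $\SF{T}_2$ and from \eqref{split_map}. A secondary point requiring care is that the statement is only up to the homotopy of remark \ref{rmk_contract} (hence the symbol $\sim$): one must be explicit that terms in $\partial\SF{A}_2$ landing in $\Gl_{>0}\SF{A}_1+\SF{A}_1\Gl_{>0}$ are exactly the ones killed by the contraction $G_\bullet$ of \eqref{map_F_G}, so that after transporting along $G_\bullet$ the surviving image is $\partial\SF{T}_3$ together with the 16 superfluous generators. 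Given the routine nature of the remaining algebra, I would present the $\bra z^iw_3z^3\ket_{10}$ case in full as a template and assert the rest, which is precisely the stance taken in the lemma statement (``some mind-numbing calculations'').
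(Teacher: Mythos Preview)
Your proposal is correct and follows exactly the approach of the paper, which states the lemma ``without proof, which is but some mind-numbing calculations'' and then works out $\bra z^1w_3z^3\ket_{10}$ and the pair $\bra z^1z^2w_1\ket_{21}$, $\bra z^2z^3w_3\ket_{21}$ as illustrative examples---precisely your template-plus-assertion strategy. One small clarification: the symbol $\sim$ in the lemma does not mean ``modulo $\Gl_{>0}\SF{A}_1+\SF{A}_1\Gl_{>0}$'' but rather ``after replacing each tip by the full relation it represents'' (e.g.\ $\bra z^1z^2\ket_{20}\sim z^{[1}z^{2]}$, $\bra w_3z^3\ket_{11}\sim w_iz^i$); with that dictionary the match with $\partial\SF{T}_3$ is exact, not merely up to homotopy, and the superfluous tips cancel in the indicated linear combinations rather than being projected away.
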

As an example, we take $\bra z^1w_3z^3\ket_{10}$
\bea &&\bra z^1w_3z^3\ket_{10}\stackrel{\opn{red}^{L'}}{\to} w_3z^1z^3\to w_3z^3z^1\to -(w_1z^1+w_2z^2)z^1,\nn\\
&&\hspace{1.75cm}\stackrel{\opn{red}^R}{\to}-z^1(w_1z^1+w_2z^2)\to -w_1z^1z^1-w_2z^1z^2\to -w_1z^1z^1-w_2z^2z^1,\nn\eea
collecting the tips above, we find
\bea \partial\bra z^1w_3z^3\ket_{10}=\bra z^1w_3\ket_{11}z^3+w_3\bra z^1z^3\ket_{20}+\bra w_3z^3\ket_{11}z^1-z^1\bra w_3z^3\ket_{00}+\bra z^1w_1\ket_{11} z^1
+\bra z^1w_2\ket_{11} z^2+w_2\bra z^1z^2\ket_{20}.\nn\eea
We remind the reader that $\bra\cdots\ket_{ij}:=e_i\otimes \cdots\otimes e_j$. In the brackets, we have written only the tips, and if one completes the tip into relations, e.g. $z^1z^2\sim z^1z^2-z^2z^1$ and $w_3z^3\sim w_iz^i$, one finds that the rhs matches $-\partial(\SF{T}_3^3)_1$.

We take a more involved example.
We start with $\bra z^iz^jw_k\ket_{21}$
\bea &&\bra z^1z^2w_1\ket_{21}\stackrel{\opn{red}^{L'}}{\to} \to z^2z^1w_1\to z^2w_1z^1,\nn\\
&&\hspace{1.75cm}\stackrel{\opn{red}^R}{\to} \to z^1w_1z^2\to z^2w_1z^1,\nn\\
&&\partial\bra z^1z^2w_1\ket_{21}=\bra z^1z^2\ket_{20}w_2+z^2\bra z^1w_1\ket_{11}-z^1\bra z^2w_1\ket_{11}-{\color{blue}\bra z^1w_1z^2\ket_{21}},\nn\\
&&\bra z^2z^3w_3\ket_{21}\stackrel{\opn{red}^{L'}}{\to} z^3z^2w_3\to z^3w_3z^2\to -(z^1w_1+z^2w_2)z^2\to -z^2w_1z^1-z^2w_2z^2,\nn\\
&&\hspace{1.75cm}\stackrel{\opn{red}^R}{\to}-z^2(w_1z^1+w_2z^2),\nn\\
&&\partial\bra z^2z^3w_3\ket_{21}=\bra z^2z^3\ket_{20}w_3+z^3\bra z^2w_3\ket_{11}+\bra z^3w_3\ket_{22}z_2-{\color{blue}\bra z^1w_1z^2\ket_{21}}-z^2\bra z^3w_3\ket_{11}.\nn\eea
We see that the tip $\bra z^1w_1z^2\ket_{21}$ gets hit twice, while the differential of the difference $\bra z^1z^2w_1-z^2z^3w_3\ket_{21}$ reproduces $\partial(\SF{T}_3^1)_2$.


\section{Maps between the Bi-module Resolutions}
That there should be chain equivalences between any two projective resolutions of the same module is standard, however, what will be crucial is that we will juggle the bar, minimal and Chouhy-Solotar resolutions together to navigate the otherwise intractable deformation quantisation. Therefore we spend some words spelling out the maps.
\begin{proposition}
Between the Chouhy-Solotar resolution $\P_{\sbullet}\to\Gl$ and the bar resolution $\B_{\sbullet}\to \Gl$, there are bi-module chains maps
\bea F_{\sbullet}:~\P_{\sbullet} \rightleftharpoons \B_{\sbullet}~:G_{\sbullet}.\nn\eea
There is a bi-module chain homotopy $h_n:\,\B_n\to \B_{n+1}$ satisfying
\bea &&G_{\sbullet} F_{\sbullet}=id,~~F_{\sbullet}G_{\sbullet}=id-\{\partial,h\}\label{chain_htpy}\\
&&h_{\sbullet}F_{\sbullet}=G_{\sbullet+1}h_{\sbullet}=h_{\sbullet+1}h_{\sbullet}=0.\label{chain_htpy_extra}\eea
\end{proposition}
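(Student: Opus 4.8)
\medskip
\noindent\emph{Proof plan.}
Existence of \emph{some} bimodule chain maps $F_{\sbullet}\colon\P_{\sbullet}\to\B_{\sbullet}$ and $G_{\sbullet}\colon\B_{\sbullet}\to\P_{\sbullet}$ lifting $\mathrm{id}_{\Gl}$ is just the comparison theorem for projective resolutions; the real content of the proposition is that $F_{\sbullet}$, $G_{\sbullet}$, $h_{\sbullet}$ can be chosen so as to exhibit $\P_{\sbullet}$ as a \emph{strong deformation retract} of $\B_{\sbullet}$ in the category of $\Gl^e$-modules, i.e.\ with the strict equalities \eqref{chain_htpy}--\eqref{chain_htpy_extra} and not mere homotopy equivalences. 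The plan is to produce the triple $(F_{\sbullet},G_{\sbullet},h_{\sbullet})$ by induction on homological degree, feeding on the left $\Gl$-linear contracting homotopy $k$ of the bar complex from \eqref{bar_k} to build $F_{\sbullet}$, and on the reduction system $\{(s,f_s)\}$ together with its operations $\opn{spl}$, $\opn{red}^R$, $\opn{red}^{L'}$ of Section~\ref{sec_RS} to build $G_{\sbullet}$; the homotopy is then essentially forced.

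\smallskip
\noindent\emph{The induction.}
In degrees $0$ and $1$ the two resolutions coincide: $\P_0=\B_0=\Gl\otimes\Gl$ and $\SF{A}_0=Q_1$, so one takes $F_0=G_0=\mathrm{id}$, $F_1$ the inclusion $\Gl\otimes Q_1\otimes\Gl\hookrightarrow\B_1$, and $G_1$ the $\Gl^e$-linear retraction that is the identity on $Q_1$, kills the length-$0$ part of $\Gl[1]$, and sends a path of length $\ge2$ to $\opn{spl}$ of it. For the step, assume $(F,G)$ defined, $\Gl$-bilinear, compatible with the differentials and with $G_jF_j=\mathrm{id}$ for $j<n$. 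On a free bimodule generator $1\otimes\sigma\otimes1$ of $\P_n$ (so $\sigma\in\SF{A}_{n-1}$) set
\[ F_n(1\otimes\sigma\otimes1):=k\bigl(F_{n-1}(\partial^{\P}(1\otimes\sigma\otimes1))\bigr) \]
and extend $\Gl$-bilinearly; since $\partial^{\P}(1\otimes\sigma\otimes1)$ is a $\partial^{\P}$-cycle and $\{k,\partial^{\B}\}=\mathrm{id}$, a one-line computation gives $\partial^{\B}F_n=F_{n-1}\partial^{\P}$. One checks that $F_n$ is a split bimodule monomorphism with free complement $C_n:=\coker F_n$ (this is the crux, see below), and then takes $G_n$ to be a $\Gl^e$-linear chain-map retraction of $F_n$ lifting $G_{n-1}$ with $G_nF_n=\mathrm{id}$, which exists by a diagram chase using the splitting $\B_n\cong F_n(\P_n)\oplus C_n$ and the inductive hypotheses.

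\smallskip
\noindent\emph{Homotopy, side conditions, main obstacle.}
With $F_{\sbullet},G_{\sbullet}$ in hand, set $h_{\sbullet}:=k\circ(\mathrm{id}-F_{\sbullet}G_{\sbullet})$. Then $h_0=0$, the relation $\{\partial^{\B},h\}=\mathrm{id}-FG$ follows from $\{k,\partial^{\B}\}=\mathrm{id}$ and the chain-map property of $F$ and $G$, and $h_{\sbullet}F_{\sbullet}=0$ is automatic because $G_{\sbullet}F_{\sbullet}=\mathrm{id}$. The two remaining side conditions $G_{\sbullet+1}h_{\sbullet}=0$ and $h_{\sbullet+1}h_{\sbullet}=0$ are then imposed by the standard side-condition reduction of homological perturbation theory: replace $h$ first by $(\mathrm{id}-FG)h$, then by $h\,\partial^{\B}h$; both replacements preserve $hF=0$ and the contraction identity and leave $F$ and $G$ untouched. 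The one genuinely delicate point --- and precisely the reason $G_{\sbullet}F_{\sbullet}=\mathrm{id}$ holds \emph{on the nose} rather than only up to homotopy --- is the claim that each $F_n$ is a split bimodule monomorphism with free cokernel, equivalently that the Chouhy--Solotar resolution differs from the bar resolution only by contractible free summands (cf.\ remark~\ref{rmk_contract}). Making this precise amounts to unwinding the explicit degree $\le3$ formulas of \cite{CHOUHY201522} in the convention of \cite{BarmeierWang} and propagating them upward via the recursion above; the rest --- $\Gl$-bilinearity, the chain-map identities, and $\ep\,\partial^{\P}_1=0$ --- is routine bookkeeping.
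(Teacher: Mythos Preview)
The paper itself does not prove this proposition; it defers to \cite{BarmeierWang} and then writes down explicit low-degree formulae for $F_{\sbullet},G_{\sbullet},h_{\sbullet}$ built combinatorially out of the reduction system (the maps $\opn{spl}$, $\opn{spl}'$, $\opn{red}^R$). Your inductive construction of $F_n$ via $F_n(1\otimes\sigma\otimes1)=k\,F_{n-1}\partial^{\P}(1\otimes\sigma\otimes1)$ followed by bilinear extension is fine, and in fact is exactly how the paper produces $F_3$ (see the lemma proving \eqref{F_3}, where it is written ``we can write $F_3=k_2F_2\partial$'').

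The genuine gap is in your construction of the homotopy. You set $h_{\sbullet}:=k\circ(\mathrm{id}-F_{\sbullet}G_{\sbullet})$, but $k$ from \eqref{bar_k} is only a \emph{left} $\Gl$-module map, not a $\Gl^e$-module map: it absorbs the rightmost $\Gl$-factor into the bar. Since $\mathrm{id}-FG$ is $\Gl$-bilinear but does not land in the subspace where the right coefficient is an idempotent, the composite $k(\mathrm{id}-FG)$ is not right $\Gl$-linear. Concretely, in degree~$1$ your formula gives
\[
h_1(a[x_1x_2]y)=a[x_1x_2|y]-\overline{ax_1}[x_2|y]-a[x_1|\overline{x_2y}],
\]
whereas $h_1(a[x_1x_2]1)\cdot y=-a[x_1|x_2]y$; these disagree, so your $h_1$ is not a bi-module map. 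The paper's $h_1(a[x_1\cdots x_n]b)=-a\,\opn{spl}'(x_1\cdots x_n)\,b$ and the $h_2$ of \eqref{h_2} are instead manufactured directly from the reduction data and are manifestly bilinear by inspection. The subsequent side-condition reductions you invoke ($h\mapsto(\mathrm{id}-FG)h$, $h\mapsto h\partial h$) are compositions of bilinear maps with $h$, so they preserve whatever linearity $h$ has but cannot create right $\Gl$-linearity that was absent to begin with.

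A secondary point: your argument that $G_nF_n=\mathrm{id}$ strictly rests on the assertion that each $F_n$ is a split $\Gl^e$-monomorphism with free cokernel, which you flag as the crux and defer. That is honest as a sketch, but it is precisely the content of the explicit degree-by-degree formulae in \cite{BarmeierWang}, so at this point your ``proof'' and the paper's ``proof'' both reduce to citing that reference.
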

The proofs are found in sec.5 of \cite{BarmeierWang}. We will content ourselves with describing these maps in low degrees explicitly and some examples.

First at degree 0, we have $\P_0=\B_0=\Gl\otimes\Gl$, so we set $F_0=G_0=id$ and $h_0=0$. At degree 1, $\P_1=\Gl\otimes Q_1\otimes\Gl$, and we set
\bea F_1(a\otimes x\otimes b)=a[x]b.\nn\eea
The map $G_1$ in the other way is a split: take $a[x_1\cdots x_n]b$ with $x_1\cdots x_n$ irreducible and $x_i$ being one letter words, then
\bea G_1(a[x_1\cdots x_n]b)&=&a\opn{spl}(x_1\cdots x_n)b\nn\\
&=&\overline{ax_1\cdots x_{n-1}}\otimes x_n\otimes b+\overline{ax_1\cdots x_{n-2}}\otimes x_{n-1}\otimes\overline{x_nb}+\cdots+a\otimes x_1\otimes\overline{x_2\cdots x_nb}.\nn\eea
It is easy to verify \eqref{chain_htpy}, if one sets $h_1$ to be
\bea &&h_1(a[x_1\cdots x_n]b)=-a\opn{spl}'(x_1\cdots x_n)b,\nn\\
&&\opn{spl}'(x_1\cdots x_n):=[x_1|x_2\cdots x_n]+x_1[x_2|x_3\cdots x_n]+\cdots+x_1\cdots x_{n-2}[x_{n-1}|x_n].\label{split_prime}\eea
Note the map $\opn{spl}'$ here is defined for $\B_{\sbullet}$ while $\opn{spl}$ above is for $\P_{\sbullet}$.

At degree 2, consider $a\otimes s\otimes b$ with $s\mapsto f_s$ being part of the reduction system, we set
\bea F_2(a\otimes s\otimes b)=a\opn{spl}'(s-f_s)b\label{F_2}\eea
For the more important $G_2$ map, given $[u|v]$ with $u,v$ irreducible, then $G_2([u|v])$ gives the collection of tips of relations used in applying $\opn{red}^R$ to $uv$.
\begin{example}
  Continuing from sec.\ref{sec_RfT}. Consider $[w_3w_1|z^3w_2]_{02}$, the right reduction goes as
\bea\opn{red}^R:~~w_3\udl{w_1z^3w_2}\stackrel{\eqref{rel_wzw}}{\to} \udl{w_3w_2z^3}w_1\stackrel{\eqref{new_rel}}{\to}-w_2w_1z^1w_1-w_2w_2z^2w_1.\nn\eea
The underlined terms are the tips picked up in the process.
Therefore $G_2[w_3w_1|z^3w_2]_{02}=w_3\otimes w_1z^3w_2\otimes e_2+e_0\otimes w_3w_2z^3\otimes w_1$.

Mapping this back to $\B_2$ applies the split \eqref{F_2}
\bea F_2G_2[w_3w_1|z^3w_2]&=&
w_3\opn{spl}'(w_{[1}z^3w_{2]})+\opn{spl}'(w_3w_2z^3+w_2w_2z^2+w_2w_1z^1)w_1\nn\\
&=&w_3[w_{[1}|z^3w_{2]}]+w_3w_{[1}[z^3|w_{2]}]+[w_3|w_2z^3]w_1+w_3[w_2|z^3]w_1\nn\\
&&+[w_2|w_2z^2]w_1+w_2[w_2|z^2]w_1+[w_2|w_1z^1]w_1+w_2[w_1|z^1]w_1.\nn\eea
\end{example}
The homotopy $h_2$, which will be used in the homotopy transfer later, has two parts
\bea h_2(a[x|y]b)=\sum_{\ell(d_i)>0} \underbrace{ac_i[\opn{spl}'(s_i-f_i)|d_i]b}_{h_2'}-a[\opn{spl}'(x)|y]b.\label{h_2}\eea
The second part is quite self-explanatory: one splits $x$ as in \eqref{split_prime} and stacks it to $y$.
To explain the first part, assume that in right reducing $xy$, we have applied the following relations $c_is_id_i\to c_if_id_i$. Then for each $i$ with the $d_i$ of non-zero length $\ell(d_i)>0$, we apply $\opn{spl}'(s_i-f_i)$ and stack it to $d_i$. One can check \eqref{chain_htpy_extra} i.e. $h_2F_2=0=G_2h_1$.
\begin{example}
  Take again $[w_3w_1|z^3w_2]_{02}$, we saw above that the following relations were used in the $\opn{red}^R$
  \bea w_3(w_1z^3w_2)\to w_3(w_2z^3w_1),~~(w_3w_2z^3)w_1\to -(w_2w_2z^2+w_2w_1z^1)w_1.\nn\eea
  The first item will not contribute to $h_2'$, so
  \bea h_2([w_3w_1|z^3w_2])&=&[w_3|w_2z^3|w_1]+w_3[w_2|z^3|w_1]+[w_2|w_2z^2|w_1]\nn\\
  &&+w_2[w_2|z^2|w_1]+[w_2|w_1z^1|w_1]+w_2[w_1|z^1|w_1]-[w_3|w_1|z^3w_2].\nn\eea
  The reader may check that \eqref{chain_htpy} holds here.
\end{example}

We need a recursive formula for $G_2$ and $h_2'$ whose significance will become clear later.
\begin{lemma}\label{lem_G2}(see lem 5.16 in \cite{BarmeierWang})
  Let $u$ and $v$ be two irreducible words. Write $u$ as two parts $u=u_1u_2$, then
  \bea G_2[u|v]=u_1G_2[u_2|v]+G_2[u_1|\overline{u_2v}].\label{used_VIII}\eea
  Remember that a bar on $u_2v$ stresses that $u_2v$ be reduced, even though this should be tacit.
  Further if $uv=asb$ with $s$ being the rightmost tip appearing in $uv$, then
  \bea G_2[u|v]=a\otimes s\otimes b+aG_2[f_s|b]+G_2[a|\overline{f_sb}].\label{carry_on}\eea
  Similarly for $h_2'$ (the first term of $h_2$ in \eqref{h_2}), we have
 \bea
 &&h'_2[u|v]=u_1h'_2[u_2|v]+h_2'[u_1|\overline{u_2v}],\label{h_2_recursive}\\
 &&h'_2[u|v]=ah'_2[\opn{spl}'(s-f_s)|b]+ah_2'[f_s|b]+h_2'[a|\overline{f_sb}].\nn\eea
\end{lemma}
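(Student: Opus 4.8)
The plan is to prove Lemma~\ref{lem_G2} directly from the explicit descriptions of $G_2$ and $h_2'$ given just above the statement, namely: for irreducible words $u,v$, the element $G_2[u|v]\in\Gl\otimes\SF{A}_1\otimes\Gl$ is the sum over all tips encountered while right-reducing the word $uv$ (each tip $s$ occurring in a context $c\,s\,d$ contributing $\bar c\otimes s\otimes\bar d$), and $h_2'[u|v]$ is the corresponding sum $\sum_{\ell(d_i)>0}\bar c_i\,\opn{spl}'(s_i-f_i)\,d_i$ over those same reduction steps whose right-context $d_i$ has positive length. So the whole lemma is really a bookkeeping statement about the sequence of rewriting steps in $\opn{red}^R(uv)$, and everything reduces to the confluence/uniqueness of the reduction system (Lem.~2.10 of \cite{CHOUHY201522}), which guarantees the multiset of reduction steps — hence $G_2$ and $h_2'$ — is well defined and independent of the order in which non-overlapping tips are eliminated.

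For \eqref{used_VIII}: I would write $u=u_1u_2$ and observe that right-reducing $uv = u_1(u_2v)$ can be organized so that one \emph{first} right-reduces the suffix $u_2 v$ to its normal form $\overline{u_2 v}$, collecting exactly the tips that make up $u_2 G_2[u_2|v]$ — here the extra left factor $u_1$ simply rides along as a prefix, multiplying each collected tip on the left, which is the meaning of $u_1 G_2[u_2|v]$ — and then right-reduces $u_1\,\overline{u_2v}$, collecting the tips comprising $G_2[u_1|\overline{u_2 v}]$. The point to check is that this ordered reduction is a legitimate execution of $\opn{red}^R$ and that no tip straddling the $u_1/u_2$ cut is lost or double counted; this follows because once $u_2 v$ is in normal form, any further tip in $u_1\overline{u_2v}$ is located by $\opn{red}^R$ by scanning from the right exactly as in reducing $u_1$ against $\overline{u_2v}$, and uniqueness of the normal form means the total collection is the same as that of any other valid reduction order. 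Equation \eqref{carry_on} is the special case/telescoping of this: if $s$ is the rightmost tip in $uv$, peel it off first — contributing $a\otimes s\otimes b$ where $uv=asb$ — replace $s$ by $f_s$, and then the remaining word is $a\,f_s\,b = (a f_s)\, b$; applying \eqref{used_VIII} with the split $a f_s \mid b$ (after reducing $f_s b$) gives $a\,G_2[f_s|b] + G_2[a|\overline{f_s b}]$, which is precisely the claimed recursion. One should be mildly careful that $f_s$ need not be irreducible, so $a f_s$ and $f_s b$ must be understood as reduced before splitting, but the bar notation already encodes that.

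For the $h_2'$ recursions I would argue in exactly parallel fashion: $h_2'$ is built from the \emph{same} list of reduction steps $c_i s_i d_i \to c_i f_i d_i$ used in $\opn{red}^R(uv)$, weighted by $\opn{spl}'(s_i - f_i)$ and restricted to steps with $\ell(d_i)>0$. Splitting $u = u_1 u_2$ and reducing the suffix first partitions this list into the steps occurring inside $\overline{u_2 v}$ (each acquiring the prefix $u_1$, giving $u_1 h_2'[u_2|v]$ — note that prepending $u_1$ does not change the length of the right context $d_i$, so the $\ell(d_i)>0$ condition is preserved) and the steps occurring while reducing $u_1 \overline{u_2 v}$ (giving $h_2'[u_1|\overline{u_2v}]$). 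The second displayed formula for $h_2'$ is again the rightmost-tip specialization: the first reduction $s\to f_s$ with right context $b$; if $\ell(b)>0$ it contributes $a\,\opn{spl}'(s-f_s)\,b$, but one then also has to continue reducing $a f_s b$, and here the subtlety is that $f_s$ itself may need reducing, so the term $a\,h_2'[\opn{spl}'(s-f_s)|b]$ accounts for the reductions internal to bringing $f_s$ (stacked against $b$) to normal form, while $a\,h_2'[f_s|b]$ and $h_2'[a|\overline{f_s b}]$ account for the rest, exactly mirroring \eqref{carry_on}. The main obstacle — really the only non-mechanical point — is pinning down precisely this interplay between $\opn{red}^R$'s right-to-left scanning and the recursive peeling, i.e.\ verifying that the ordered reduction schedule I use is an admissible run of $\opn{red}^R$ and that the length condition $\ell(d_i)>0$ behaves correctly under prepending prefixes and under the substitution $s\mapsto f_s$; once that is set up, matching terms is routine, and in any case the statement is quoted as lem.~5.16 of \cite{BarmeierWang}, so I would at worst reproduce that argument.
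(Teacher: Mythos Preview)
Your approach is essentially identical to the paper's own (sketched) proof: both argue directly from the description of $G_2$ and $h_2'$ as collecting tips during the right-reduction $\opn{red}^R(uv)$, and both establish \eqref{used_VIII} by organizing $\opn{red}^R$ on $u_1u_2v$ as first reducing the suffix $u_2v$ (with $u_1$ riding along as prefix) and then reducing $u_1\overline{u_2v}$, with \eqref{carry_on} being the peel-off-the-rightmost-tip specialisation. Your write-up is in fact more careful than the paper's sketch (e.g.\ you flag the $\ell(d_i)>0$ bookkeeping and the need for the staged reduction to genuinely coincide with $\opn{red}^R$); one small caveat is that confluence alone does not guarantee order-independence of the \emph{multiset} of reduction steps in general --- what you actually need (and later use) is the more specific fact that the rightmost tip in $u_1w$ equals the rightmost tip in $w$ whenever $w$ is reducible, which makes your staged reduction literally an execution of $\opn{red}^R$.
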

\begin{proof}(Sketch)
We keep in mind the explicit description of $G_2$ as collecting tips of the relations used in right reducing $uv$. If $u_2v$ is reducible, then after having reduced it to $\overline{u_2v}$, one goes on with the reduction of $u_1\overline{u_2v}$. This gives the first statement.

For \eqref{carry_on}, after encountering in $uv$ the first tip $s$, which gets mapped to $1\otimes s\otimes 1$, one replaces $asb\to af_sb$ and continues with the remaining reduction. This is the rhs of \eqref{carry_on} in words.

The same argument also gives the statement about $h_2'$, as $h_2'$ and $G_2$ are both based on the right reduction procedure.
\end{proof}

Now we give the formula for $F_3$.
\begin{lemma}
Let $uvw\in \SF{A}_2$ be an ambiguity: $u$ has one letter, $uv=s$, $vw=t$ with reductions $s\mapsto f_s$ and $t\mapsto f_t$, then
\bea F_3(1\otimes uvw\otimes 1)=-[\opn{spl}'(s)|w]-h_2[f_s|w]+h_2[u|f_t]\label{F_3}.\eea
\end{lemma}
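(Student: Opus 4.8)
The plan is to verify directly that the proposed $F_3$ is a chain map over $\partial_3$ on the left-hand side and $\partial_3$ (the bar differential) on the right, i.e.\ that $\partial F_3 = F_2 \partial_3$ on the generator $1\otimes uvw\otimes 1\in\SF{A}_2$, and that $F_3$ is compatible with the already-fixed lower-degree maps $F_2, F_1$ and the homotopy $h_2$. The building blocks are all in hand: $\partial_3$ on an ambiguity $uvw$ collects the tips used in $\opn{red}^{L'}$ minus those used in $\opn{red}^R$ (Eq.~\eqref{partial_3_P}); $F_2$ is the split $a\otimes s\otimes b\mapsto a\,\opn{spl}'(s-f_s)\,b$ (Eq.~\eqref{F_2}); and $h_2$ has the explicit two-part form \eqref{h_2} with the recursion \eqref{h_2_recursive}. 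So the first step is to expand $F_2\partial_3(1\otimes uvw\otimes1)$ using that $\opn{red}^{L'}$ on $uvw=uw'$ (with $uv=s$) begins by applying the single left reduction $s\mapsto f_s$, turning $uvw$ into $\overline{f_s w}$, and then proceeds by right reduction; whereas $\opn{red}^R$ begins with $vw=t\mapsto f_t$, turning $uvw$ into $\overline{u f_t}$, then right-reduces. Hence $\partial_3(uvw) = \bigl(1\otimes s\otimes w + (\text{tips of }\opn{red}^R\text{ on }\overline{f_s w})\bigr) - \bigl(u\otimes t\otimes 1 + (\text{tips of }\opn{red}^R\text{ on }\overline{u f_t})\bigr)$, and the tips of $\opn{red}^R$ are exactly what $G_2$ records, so those two remaining bundles are $G_2[f_s|w]$ and $u\,G_2$-type terms respectively — but more precisely one wants to recognize them via Lemma~\ref{lem_G2}.

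The second step is to compute $\partial h_2[f_s|w]$ and $\partial h_2[u|f_t]$ using the homotopy identity \eqref{chain_htpy}: since $F_\sbullet G_\sbullet = \mathrm{id} - \{\partial,h\}$ in degree $2$, we have $\partial h_2 + h_1\partial = \mathrm{id} - F_2 G_2$ on $\B_2$. Apply this to $[f_s|w]$ and to $[u|f_t]$. This rewrites $\partial h_2[f_s|w] = [f_s|w] - F_2G_2[f_s|w] - h_1\partial[f_s|w]$, and similarly for $[u|f_t]$; the $F_2G_2$ terms here are what will cancel against the $F_2(\text{tips from }\opn{red}^R)$ pieces coming from step one, because $G_2[f_s|w]$ literally is the bundle of right-reduction tips of $\overline{f_s w}$. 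The first term of $\partial F_3$, namely $-\partial[\opn{spl}'(s)|w]$, is handled by the bar differential formula: $\partial[\opn{spl}'(s)|w] = [\opn{spl}'(s)]\cdot w - \opn{spl}'(s)\otimes w + (\text{terms from collapsing inside }\opn{spl}'(s))$, and one checks the outermost pieces reproduce $s\otimes w$ (matching the $1\otimes s\otimes w$ term of $\partial_3$ after applying $F_2$, since $F_2(1\otimes s\otimes w)=\opn{spl}'(s-f_s)w$) and that the inner collapses are absorbed by $h_1\partial$ terms. Then one assembles everything and checks term-by-term cancellation, tracking signs carefully; the recursion \eqref{carry_on} for $G_2$ (peeling off the rightmost tip) is the tool that lets one match the iterated right-reduction tip collections on both sides inductively on word length.

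The main obstacle I expect is purely bookkeeping: reconciling the sign conventions in $\opn{spl}'$ (which carries the $(-1)^{i-1}$ inherited from \eqref{bar_k}), in the bar differential, in the Gerstenhaber/chain-homotopy relations, and in the definitions of $\opn{red}^{L'}$ versus $\opn{red}^R$ — a sign slip anywhere makes the cancellation fail. A secondary subtlety is that $\opn{red}^{L'}$ is \emph{defined} to do only its first reduction from the left and the rest from the right; this is exactly what makes $\partial_3(uvw)$ split cleanly as ``$s\otimes w$ plus right-reduction tips of $\overline{f_sw}$'', so one must use this definitional choice rather than a generic left reduction, and one should remark that a different admissible choice of reductions for $\SF{A}_2$ (allowed by prop.~6.6 of \cite{CHOUHY201522}) would change $F_3$ accordingly. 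A cleaner route that sidesteps some of the combinatorics is to invoke uniqueness of chain maps up to homotopy together with the normalization constraints $h_\sbullet F_\sbullet = 0$ from \eqref{chain_htpy_extra}: once one knows $\partial F_3 = F_2\partial_3$ holds, the particular representative is pinned down by requiring compatibility with $h_2$ (e.g.\ $h_3 F_3 = 0$), and the stated formula is then forced; I would present the direct verification but fall back on this structural argument to streamline the sign-chase.
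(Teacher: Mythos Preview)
Your plan would work, but it is not how the paper argues. The paper does not verify $\partial F_3=F_2\partial_3$ for the stated formula; rather it \emph{derives} the formula by setting $F_3:=k_2F_2\partial_3$, where $k_2$ is the left-$\Gl$-module homotopy of the bar complex from \eqref{bar_k}. Since $\{k,\partial\}=1$, this choice automatically satisfies $\partial F_3=F_2\partial_3$ with no further computation, and the only work is to evaluate $k_2F_2\partial_3$ on the generator $1\otimes uvw\otimes 1$. Each tip $a\otimes s'\otimes b$ collected by $\partial_3$ is fed through $F_2=\opn{spl}'$, and then $k_2$ simply stacks the rightmost factor $b$ of $a[\opn{spl}'(s'-f_{s'})]b$ into the bar (or kills the term when $\ell(b)=0$, which disposes of the $u\otimes t\otimes 1$ contribution from $\opn{red}^R$). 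The sum over tips is then recognised termwise as $-[\opn{spl}'(s-f_s)|w]$, $-h_2'[f_s|w]$, and $+h_2'[u|f_t]$, and the final regrouping from $h_2'$ to $h_2$ via \eqref{h_2} is immediate (using that $\opn{spl}'(u)=0$ since $u$ has one letter).

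Your route---expanding $\partial h_2[\,\cdot\,|\,\cdot\,]$ via the bi-module homotopy identity $F_2G_2=1-\{\partial,h\}$ and matching against $F_2\partial_3$---is correct in principle but trades the one-line chain-map argument for the sign chase you anticipate. The paper's use of the one-sided homotopy $k$ is precisely the shortcut that makes the chain-map property free and reduces the lemma to bookkeeping. Note also that your final concern about pinning $F_3$ down uniquely via $h_3F_3=0$ is beside the point: the lemma only asserts that the displayed expression is \emph{a} valid $F_3$, not the unique one, so no normalisation argument is required.
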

\begin{proof}
  Since we will construct $F_3$ such that $\partial F_3=F_2\partial$, we can write $F_3=k_2F_2\partial$, where $k_2$ is the chain-homotopy of the bar complex \eqref{bar_k}.
  For an ambiguity $uvw$, we reviewed in \eqref{partial_3_P} that $\partial(1\otimes uvw\otimes 1)$ is computed by collecting the tips arising from $\opn{red}^{L'}$ minus those from $\opn{red}^R$. After that $F_2$ sends a tip $a\otimes s\otimes b$ to $a[\opn{spl}'(s-f_s)]b$, and $k_2$ moves $b$ into the bar, provided $b$ is not of length zero.

  Putting these words into practice, the first reduction in $\opn{red}^{L'}$ starts from the left, picking up the tip $1\otimes s\otimes w$. The remaining $k_2F_2$ action turns it into $-[\opn{spl}'(s-f_s)|w]$. We apply afterwards right reduction to $f_sw$ and applies $k_2F_2$, which can be written as $-h'_2[f_s|w]$.

  %
  The reduction $\opn{red}^R$ is a straightforward right reduction. But do note that its first reduction gives $u\otimes t\otimes 1$, which gives zero after applying $k_2F_2$ due to the peculiar way $k_2$ works.
  Altogether
  \bea F_3(1\otimes uvw\otimes 1)=-[\opn{spl}'(s-f_s)|w]-h'_2[f_s|w]+h'_2[u|f_t]=-[\opn{spl}'(s)|w]-h_2[f_s|w]+h_2[u|f_t].\nn\eea
\end{proof}

\subsection{The Homotopy Transfer}
We use the notation $\P^{\sbullet}=\hom_{\Gl^e}(\P_{\sbullet},\Gl)$, and $\B^{\sbullet}=\hom_{\Gl^e}(\B_{\sbullet},\Gl)$ for the cochain complex and $\gd, F^{\sbullet},G^{\sbullet}$ and $h^{\sbullet}$ for the induced maps on the cochains
\bea &G^{\sbullet}:~\P^{\sbullet}\rightleftharpoons \B^{\sbullet}:~F^{\sbullet},\nn\\
&F^{\sbullet}G^{\sbullet}=id,~~~G^{\sbullet}F^{\sbullet}=id-\{\gd,h\},\nn\\
&F^{\sbullet}h^{\sbullet+1}=h^{\sbullet}G^{\sbullet}=h^{\sbullet}h^{\sbullet+1}=0.\nn\eea
As we have reviewed in sec.\ref{sec_BR}, the cochain complex $\B^{\sbullet}$ has a dg-Lie-algebra structure. We can transfer this structure to an $L_{\infty}$-structure on $\P^{\sbullet}$ (see sec 7.2 \cite{BarmeierWang}). The 2-ary Lie bracket leads to a host of higher brackets
\bea [\underbrace{\sbullet,\cdots,\sbullet}_n]:~\P^{k_1}\otimes\cdots\otimes \P^{k_n}\to \P^{k_1+\cdots+k_n-2n+3}\nn\eea
satisfying some generalised Jacobi-identity.
The formula for the higher brackets are expressed as trees and there is an $L_{\infty}$-quasi-isomorphism that intertwines the induced $L_{\infty}$-structure on $\P^{\sbullet}$ and the dg-Lie structure on $\B^{\sbullet}$.
In the following we shall focus on $\P^2$ and $\B^2$.
Given $g\in\P^2$, we need to solve the MC equation
\bea \gd g-\sum_{n=2}^{\infty}\frac{1}{n!}[\underbrace{g,\cdots,g}_n]=0.\label{MC_L_infty}\eea
This is the generalisation of the MC-equation of a dg-Lie algebra to an $L_{\infty}$-algebra, see sec.4 of \cite{Getzler_Lie}).
We introduce the $L_{\infty}$-map that will rewrite this equation in $\B^{\sbullet}$.


We use a \begin{tikzpicture}[scale=.8]
\draw [blue,-] (0,0) -- (0,0.2) -- (-0.15,0.35);
\draw [blue,-] (0,0.2) -- (0.15,0.35);
\end{tikzpicture} for the Lie bracket $[-,-]$ in $\B^{\sbullet}$ and a thick dot to mean the homotopy $h^3:\,\B^3\to \B^2$.
\bea
\begin{tikzpicture}[scale=1]
\draw [blue,-] (0,0) -- (0,0.2) -- (-0.15,0.35);
\draw [blue,-] (0,0.2) -- (0.15,0.35);
\end{tikzpicture}=[-,-],~~~~
\begin{tikzpicture}[scale=1]
\draw [blue,-] (0,0) -- (0,0.2) -- (-0.15,0.35);
\draw [blue,-] (0,0.2) -- (0.15,0.35);
\draw [fill=black] (0,0.1) circle (0.03 cm);
\end{tikzpicture}=h^3[-,-].\nn\eea
Let now $g\in\P^2$, we use \begin{tikzpicture}[scale=0.8]
\draw [blue,-] (0,0) -- (0,0.4);
\end{tikzpicture} to denote $G^2g$. With these notations, the $L_{\infty}$ map sends $g$ to
\bea g\mapsto T(g)=\begin{tikzpicture}[scale=1]
\draw [blue,-] (-.5,0) -- (-0.5,0.3);
\end{tikzpicture}+\frac12
\begin{tikzpicture}[scale=1]
\draw [blue,-] (0,0) -- (0,0.2) -- (-0.15,0.35);
\draw [blue,-] (0,0.2) -- (0.15,0.35);
\draw [fill=black] (0,0.1) circle (0.02 cm);
\end{tikzpicture}
+\frac12
\begin{tikzpicture}[scale=1]
\draw [blue,-] (0,0) -- (0,0.2) -- (-0.3,0.5);
\draw [blue,-] (-0.15,0.35) -- (-0.05,0.5);
\draw [blue,-] (0,0.2) -- (0.15,0.35);
\draw [fill=black] (0,0.1) circle (0.02 cm);
\draw [fill=black] (-0.07,0.275) circle (0.02 cm);
\end{tikzpicture}+\frac18\begin{tikzpicture}[scale=1]
\draw [blue,-] (0,0) -- (0,0.2) -- (-0.15,0.35) -- (-0.3,0.5);
\draw [blue,-] (-0.15,0.35) -- (-0.05,0.5);
\draw [blue,-] (0,0.2) -- (0.3,0.5);
\draw [blue,-] (0.15,0.35) -- (0.12,0.5);
\draw [fill=black] (0,0.1) circle (0.02 cm);
\draw [fill=black] (-0.07,0.275) circle (0.02 cm);
\draw [fill=black] (0.07,0.275) circle (0.02 cm);
\end{tikzpicture}+\cdots \in\B^2\label{tree_formula}\eea
The sum runs over all the binary planar trees. Using the $T$ map, \eqref{MC_L_infty} reads
\bea \gd g-\frac12F^3[T(g),T(g)]=0,\label{MC_L_infty_F}\eea
that is, the sum involving higher brackets in \eqref{MC_L_infty} is induced from $[T(g),T(g)]$.
The $T$ map intertwines the MC-equation \eqref{MC_L_infty} with the MC-equation in $\B^{\sbullet}$
\bea \gd T(g)-\frac12[T(g),T(g)]=0\label{MC_B}.\eea
That $T$ is an $L_{\infty}$-quasi-isomorphism means that \eqref{MC_B} and \eqref{MC_L_infty_F} mutually imply. This will be crucial as it is far easier checking \eqref{MC_L_infty_F}.

\begin{remark}
\begin{enumerate}
  \item We find it more useful to read the tree formula \eqref{tree_formula} from the root up. Given $[u|v]\in \B_2$, plug it to the root of the tree and from the leaves out comes an element in $\Gl$. The first term in \eqref{tree_formula} applies $G_2$ and evaluates with $g$: $[u|v]\mapsto g(G_2[u|v])$. The second term uses $h_2$ to split $[u|v]$ into three parts and evaluates with $[G^2g,G^2g]\in\B^3$, i.e. $[u|v]\mapsto [G^2g,G^2g](h_2[u|v])$, and so on.
  \item The simplest way to keep track of the combinatorial factors of the trees is to remember that a tree is self-similar: at each node of the tree, what has come to pass so far is also what is going to happen. This is easily expressed also as a picture
  \bea T(g)=~\begin{tikzpicture}[scale=1]
  \draw [fill = blue,opacity=.5] (0,0) circle (0.2 cm);
  \draw [blue,-] (0,-0.4) -- (0,-0.2);
\end{tikzpicture}~=~
  \begin{tikzpicture}[scale=1]
  \draw [blue,-] (-1,0) -- (-1,0.5);
  \node at (-0.5,0.3) {$+\frac12$};
\draw [blue,-] (0,0) -- (0,0.5) -- (-0.35,0.85);
\draw [blue,-] (0,0.5) -- (0.35,0.85);
\draw [fill = blue,opacity=.5] (-0.4,.9) circle (0.2 cm);
\draw [fill = blue,opacity=.5] (0.4,.9) circle (0.2 cm);
\draw [fill=black] (0.00,0.25) circle (0.03 cm);
\end{tikzpicture}\label{recursive_tree}\eea
that is, each blob is defined as the sum of the straight line and a \begin{tikzpicture}[scale=.8]
\draw [blue,-] (0,0) -- (0,0.2) -- (-0.15,0.35);
\draw [blue,-] (0,0.2) -- (0.15,0.35);
\end{tikzpicture} connected to two blobs, and this persists ad infinitum.
\item We also assume tacitly that $g$ has some built-in small parameter or formal parameter $\hbar$, and the outcome $T(g)$ is either a convergent power series or a formal series.
  \end{enumerate}
\end{remark}

\subsection{Combinatorial Deformation Quantisation}\label{sec_CDQ}
Recall that $\P_2$ is generated by the tips in the reduction system. Let $g\in\P^2$ be a 2-cochain that sends $1\otimes s\otimes 1$ (where $s$ is a tip) to $g_s$. Then $m+T(g)\in\B^2$ gives a deformation of $m$, whose associativity will be determined by the MC-equation \eqref{MC_B}.
But before discussing this, we give a combinatorial description of this deformed product.
In fact we state a theorem that is the technical underpinning of the current work
\begin{theorem}\label{thm_combo_star}(Theorem 7.21 in \cite{BarmeierWang})
Given any $g\in\P^2$, we can define a multiplication rule as
\bea u\star v:=\overline{uv}+T(g)([u|v]).\label{star}\eea
This multiplication is not associative for now, but can be described as applying the right reduction with the deformed reduction system
\bea s\mapsto f_s+g_s.\label{deform_reduction}\eea
to $uv$ until it becomes irreducible.
\end{theorem}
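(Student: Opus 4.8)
The plan is to recast the statement as a comparison of two explicit linear operations and then prove their equality by induction along the well-order $<_{\go}$. Write $\opn{red}^{\star}$ for the operation that takes a path in $\Gc$ (or a $k[[\hbar]]$-combination of paths), scans for the rightmost reducible segment, applies the \emph{deformed} rule \eqref{deform_reduction}, and repeats; the theorem asserts that
\[
u\star v=\overline{uv}+T(g)([u|v])=\opn{red}^{\star}(uv)\qquad\text{for irreducible }u,v.
\]
I would first check that $\opn{red}^{\star}$ is well defined: since $f_s<_{\go}s$ and, $g$ carrying a positive power of $\hbar$, every monomial of $g_s$ is $<_{\go}s$ modulo higher $\hbar$-order, each elementary replacement strictly lowers the path in $<_{\go}$ at fixed $\hbar$-degree; with the reduction-finiteness of the system $S$ (Lemmas~2.9--2.10 of \cite{CHOUHY201522}) this makes $\opn{red}^{\star}(uv)$ terminate order by order in $\hbar$. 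The identity is then proved by induction on $uv$ in the order $<_{\go}$, refined by $\hbar$-degree.

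\emph{Base case: $uv$ irreducible.} Here $\opn{red}^{\star}(uv)=uv=\overline{uv}$, so one must show $T(g)([u|v])=0$. The $G^2g$-leaf of \eqref{tree_formula} contributes $g(G_2[u|v])$, and $G_2[u|v]=0$ because no tip is met while right-reducing the irreducible word $uv$. For the remaining trees, $h_2[u|v]=h_2'[u|v]-[\opn{spl}'(u)|v]=-[\opn{spl}'(u)|v]$ since the first term of \eqref{h_2} collects tips of $\opn{red}^R(uv)$, of which there are none; and every bar-$3$-chain occurring in $[\opn{spl}'(u)|v]$ is built from consecutive products that are sub-words of the irreducible word $uv$, hence irreducible, so on them both $G^2g$ and---by the inductive hypothesis, the relevant sub-words being $<_{\go}uv$---$T(g)$ vanish. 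Thus all higher tree terms evaluate to $0$.

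\emph{Inductive step: $uv$ reducible.} Let $s$ be the rightmost tip in $uv$ and write $uv=asb$ with $b$ irreducible. By definition of $\opn{red}^{\star}$ one has $\opn{red}^{\star}(uv)=\opn{red}^{\star}(af_sb)+\opn{red}^{\star}(ag_sb)$, and every path in $af_sb+ag_sb$ is $<_{\go}uv$, so the inductive hypothesis applies to each. On the other side $\overline{uv}=\overline{af_sb}$, and I would peel the tip $s$ off $T(g)([u|v])$ using the self-similar form \eqref{recursive_tree} of $T(g)$ together with the recursions \eqref{carry_on} and \eqref{h_2_recursive} of Lemma~\ref{lem_G2}: the $G^2g$-leaf gives, via $g(a\otimes s\otimes b)=\overline{ag_sb}$ and \eqref{carry_on}, the reduced word $\overline{ag_sb}$ plus the $g$-linearised correction attached to the \emph{ordinary} right reduction of $af_sb$, while the bracket part $\tfrac12[T(g),T(g)]\,(h_2[u|v])$ splits, along the three summands of $h_2[u|v]$ governed by \eqref{h_2} and \eqref{h_2_recursive}---the pieces carried by $[f_s|b]$, by $[a|\overline{f_sb}]$, and by $[\opn{spl}'(s-f_s)|b]$---into (i) the terms upgrading $\overline{ag_sb}$ to $\opn{red}^{\star}(ag_sb)$ and (ii) the terms which, with the linearised correction above, assemble the full $T(g)$-correction of $af_sb$. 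Matching these against $T(g)$ on the corresponding bar-$2$-chains (all products being $<_{\go}uv$) and invoking the inductive hypothesis yields
\[
T(g)([u|v])=\bigl(\opn{red}^{\star}(ag_sb)-\overline{ag_sb}\bigr)+\bigl(\opn{red}^{\star}(af_sb)-\overline{af_sb}\bigr),
\]
hence $\overline{uv}+T(g)([u|v])=\opn{red}^{\star}(af_sb)+\opn{red}^{\star}(ag_sb)=\opn{red}^{\star}(uv)$, closing the induction.

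\emph{Where the difficulty lies.} The one genuinely laborious point is the regrouping in the inductive step: one must verify that after a \emph{single} unfolding \eqref{recursive_tree} the whole infinite tree sum \eqref{tree_formula} reorganises \emph{without remainder} into the tip contribution and the two sub-reductions, with no cross-terms surviving. The subtlety is that $g_s$ is itself generically reducible, so the $G^2g$-leaf produces only $\overline{ag_sb}$ and it is the higher trees (those with $\ge 2$ leaves) that must reconstitute the \emph{further} deformed reduction of $g_s$; isolating which branches do this is exactly what the recursive identities \eqref{carry_on} for $G_2$ and \eqref{h_2_recursive} for $h_2'$ are designed to make bookkeepable. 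A more combinatorial alternative would expand $\opn{red}^{\star}(uv)$ as a sum over reduction histories---each elementary step choosing $f_s$ or a fixed monomial of $g_s$---group histories by the number $k$ of $g$-branchings, and match the $k$-fold-$g$ part with the sum of $k$-leaf trees in \eqref{tree_formula}; this makes the statement transparent but moves the work onto checking the Catalan-type weights, so I would keep the recursive induction as the main line and use the history picture only as a consistency check.
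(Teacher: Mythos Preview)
Your base case is correct and matches the paper's first step. The gap is in the inductive step.

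You claim that $h_2[u|v]$ splits into the three pieces carried by $[f_s|b]$, $[a|\overline{f_sb}]$, and $[\opn{spl}'(s-f_s)|b]$. But those three pieces come from the recursion \eqref{h_2_recursive} for $h_2'$, not for $h_2$; by \eqref{h_2} one has $h_2[u|v]=h_2'[u|v]-[\opn{spl}'(u)|v]$, and the extra term $-[\opn{spl}'(u)|v]$ depends on the bar split $u|v$, not on the tip decomposition $uv=asb$. When $|u|>1$ this term is nonzero and is precisely what couples the split to the reduction. Your regrouping therefore does not close: after peeling off the tip you are left with contributions from $[\opn{spl}'(u)|v]$ and, once you convert $h_2'[a|\overline{f_sb}]$ back to $h_2[a|\overline{f_sb}]$, from $[\opn{spl}'(a)|\overline{f_sb}]$ as well, neither of which you account for. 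Relatedly, to invoke the inductive hypothesis on $af_sb$ and $ag_sb$ you need specific bar splits, and you never identify which bar-2-chains your recursion actually produces.

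The paper sidesteps this by separating the argument into two clean computations rather than a single induction on $<_\go$. First it proves, for $u_1$ a single letter, that $(u_1u_2)\star v = u_1\star(u_2\star v)$; this is exactly where the term $[\opn{spl}'(u)|v]$ is absorbed, via the identity $\opn{spl}'(u_1u_2)=u_1\opn{spl}'(u_2)+[u_1|u_2]$ and the vanishing $T([u_1|u_2])=0$ for the irreducible product $u_1u_2=u$. This reduces everything to $|u|=1$, where $\opn{spl}'(u)=0$ and $a$ is empty. Second, in that one-letter case, a short direct computation using \eqref{carry_on} and \eqref{recursive_tree} (including the key evaluation $\tfrac12[T,T]([\opn{spl}'(s)|b])=T([g_s|b])$, which you do not mention) gives $u\star v = (f_s+g_s)\star b$, i.e.\ one step of the deformed right reduction. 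Your direct induction would have to perform both computations inside a single inductive step; the piece you flagged as merely ``laborious'' is in fact the missing content, because handling $[\opn{spl}'(u)|v]$ amounts to proving the one-letter peeling identity, which you have not done.
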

\begin{proof}(Sketch)
In the proof, a bar placed on a word e.g. $\overline{u_1\cdots u_n}$ means the word is reduced. Note that the undeformed reduction system is such that one can reduce a word in any way but have the same outcome. We also use the shorthand $T(g)\to T$.

We first prove that if $uv$ is also irreducible then $T([u|v])=0$. Indeed,
plug $[u|v]$ to the root of the tree in fig.\ref{recursive_tree}. The term \begin{tikzpicture}[scale=1]
\draw [blue,-,thick] (0,0) -- (0,0.3);
\end{tikzpicture} gives the evaluation $gG_2[u|v]$, which is zero by the assumed irreducibility.
Next one applies $h_2$ to $[u|v]$ and get $-[\opn{spl}'(u)|v]$ only. We write $[\opn{spl}'(u)|v]=\sum_ix_i[y_i|z_i|v]$.
The recursion says that we evaluate $T[y_i|z_i]$ and $T[z_i|v]$. For both terms, we iterate the same argument except now that $y_i,z_i$ have shorter length and eventually $\opn{spl}'$ will vanish on them.

Now assume $uv$ is reducible but $u$ has only one letter. We write $v=v_1\cdots v_n$ and $uv=sb$ with $s$ the rightmost tip appearing in $uv$
\bea uv=\underbrace{uv_1\cdots v_i}_s\underbrace{v_{i+1}\cdots v_n}_b\nn\eea
Note $b$ is irreducible here.
To evaluate $gG_2[u|v]$, we use \eqref{carry_on} and get $\overline{g_sb}+gG_2[f_s|b]$.
The remaining term in $T$ gives
\bea \frac12[T,T](h_2[u|v])=\frac12[T,T]\big([\opn{spl}'(s-f_s)|b]+h_2'[f_s|b]\big)
=\frac12[T,T]\big([\opn{spl}'(s)|b]\big)+\frac12[T,T](h_2[f_s|b]).\nn\eea
We recall from \eqref{Lie_bkt_bar} that
\bea \frac12[T,T]=T\circ_1T-T\circ_2T,\nn\eea
leading to
\bea \frac12[T,T]([\opn{spl}'(s)|b])=T\circ_1T([\opn{spl}'(s)|b])=T([g_s|b]).\nn\eea
To see the last two equalities, we spell out $[\opn{spl}'(s)|b]$ as
\bea [\opn{spl}'(s)|b]=[u|v_1\cdots v_i|b]+u[v_1|v_2\cdots v_i|b]+\cdots+\overline{uv_1\cdots v_{i-2}}[v_{i-1}|v_i|b].\nn\eea
Evaluating $[v_j\cdots v_i|b]$ on $T$ vanishes since $v_j\cdots v_ib$ is irreducible, hence $T\circ_2T[\opn{spl}'(s)|b]=0$.
As for $T\circ_1T[\opn{spl}'(s)|b]$, we need to
evaluate $T[\opn{spl}'(s)]$. This is only nonzero for the summand $[u|v_1\cdots v_i]$, with the others vanishing due to irreducibility. We have $T[u|v_1\cdots v_i]=gG_2[u|v_1\cdots v_i]=g_s$. 
Putting these together
\bea u\star v&=&\overline{uv}+\overline{g_sb}+gG_2[f_s|b]+T([g_s|b])+\frac12[T,T](h_2[f_s|b])\nn\\
&=&\overline{f_sb}+\overline{g_sb}+T([f_s+g_s|b])=(f_s+g_s)\star b.\nn\eea
This formula says when multiplying $u$ to $v$, one reduces the first encountered tip $s$ to $f_s+g_s$ and starts over with $(f_s+g_s)\star b$.
The next step will show that for $u$ of length larger than 1, one can always reduce it to the length 1 case.
In particular, to compute $(f_s+g_s)\star b$, one splits $f_s+g_s$ into one letter words and multiply them to $b$ one by one. This would finish the proof.

Write $u$ as $u_1u_2$ with $u_1$ having one letter
\bea (u_1u_2)\star v&=&\overline{u_1u_2v}+\underbrace{\overline{u_1gG_2([u_2|v])}+gG_2([u_1|\overline{u_2v}])}_{\eqref{used_VIII}}+\frac12[T,T](\underbrace{u_1h'_2[u_2|v]+h_2'[u_1|\overline{u_2v}]}_{\eqref{h_2_recursive}}-[\opn{spl}'(u_1u_2)|v]\big)\nn\\
&=&\overline{u_1u_2v}+\overline{u_1T[u_2|v]}+gG_2([u_1|\overline{u_2v}])+\frac12[T,T](h_2'[u_1|\overline{u_2v}]-[u_1|u_2|v]\big)\nn\\
&=&\overline{u_1u_2v}+\overline{u_1T[u_2|v]}+T([u_1|\overline{u_2v}])+\frac12[T,T](-[u_1|u_2|v]\big)\nn\eea
where we used $[\opn{spl}'(u_1u_2)|v]=u_1[\opn{spl}'(u_2)|v]+[u_1|u_2|v]$. Note that
\bea \frac12[T,T](-[u_1|u_2|v]\big)=T\circ_2T([u_1|u_2|v])=T([u_1|u_2\star v-\overline{u_2v}]),\nn\eea
therefore
\bea (u_1u_2)\star v&=&\overline{u_1u_2v}+\overline{u_1T[u_2|v]}+T([u_1|\overline{u_2v}])+T([u_1|u_2\star v])-T([u_1|\overline{u_2v}])\nn\\
&=&\overline{u_1(u_2\star v)}+T([u_1|u_2\star v])=u_1\star(u_2\star v).\nn\eea
Using this procedure, one can successively reduce the length of $u$ until it is of length 1.
\end{proof}
As we have reviewed, if $T(g)$ satisfies the MC-equation \eqref{MC_B}, then $\star$ would be associative. Solving the MC-equation is a major undertaking and the next theorem lets us evade it.
Recall that the degree 3 term of the Chouhy-Solotar resolution \eqref{non_min_res} is generated by the ambiguities $\SF{A}_2$ (see sec.\ref{sec_RS}). If $\star$ were to be associative, then to say the least, for all ambiguities $uvw$, we must have
\bea (u\star v)\star w=u\star (v\star w)\label{least}.\eea
\begin{theorem}\label{thm_ass_chk}(Theorem 7.37 in \cite{BarmeierWang})
The condition \eqref{least} is also sufficient, that is, the star product \eqref{star} is associative iff \eqref{least} holds for all the ambiguities generating $\SF{A}_2$.
\end{theorem}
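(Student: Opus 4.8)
The plan is to recognise the claimed equivalence as the statement that the Maurer--Cartan equation for $g$ lives in a $\Gl$-bimodule which is \emph{free} on the $2$-ambiguities $\SF{A}_2$, so that an a priori infinite family of conditions collapses to one condition per generator.

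First I would note that, writing $\mu:=m+T(g)\in\B^2$ so that $u\star v=\mu([u|v])$ in $\Gl$, associativity of $\star$ is exactly $[\mu,\mu]=0$ in the bar complex (as $[\mu,\mu]=2(\mu\circ_1\mu-\mu\circ_2\mu)$ and $\mu\circ_1\mu-\mu\circ_2\mu$ is the associator); since $[m,m]=0$ and $[m,-]=-\gd$, this is the Maurer--Cartan equation $\gd T(g)=\tfrac12[T(g),T(g)]$ of \eqref{MC_B}. Because the tree map $T$ is an $L_{\infty}$-quasi-isomorphism, \eqref{MC_B} is equivalent to the transferred equation \eqref{MC_L_infty_F}, i.e.\ $\gd g=\tfrac12 F^3[T(g),T(g)]$, now an identity in $\P^3=\hom_{\Gl^e}(\Gl\otimes\SF{A}_2\otimes\Gl,\Gl)$. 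Since $\Gl\otimes\SF{A}_2\otimes\Gl$ is the free $\Gl$-bimodule on the $S$-bimodule $\SF{A}_2$, two bimodule maps out of it coincide iff they coincide on the generators $1\otimes\omega\otimes1$ with $\omega$ running over the ambiguities that span $\SF{A}_2$. Hence \eqref{MC_L_infty_F} holds iff it holds after evaluation on each $1\otimes\omega\otimes1$, and (granting the identification below) the theorem follows; the ``only if'' direction is trivial.

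The heart of the proof is then the pointwise identity
\bea \Big(\gd g-\tfrac12 F^3[T(g),T(g)]\Big)(1\otimes\omega\otimes1)=\pm\big(u\star(v\star w)-(u\star v)\star w\big),\qquad \omega=uvw\in\SF{A}_2,\nn\eea
with $u$ one letter and $uv=s$, $vw=t$ the two tips. To establish it I would expand the left side: the first summand is $g\big(\partial_3(1\otimes\omega\otimes1)\big)$, and by \eqref{partial_3_P} this records the tips met along $\opn{red}^{L'}(\omega)$ minus those met along $\opn{red}^R(\omega)$; for the second summand I would use the explicit formula $F_3(1\otimes uvw\otimes1)=-[\opn{spl}'(s)|w]-h_2[f_s|w]+h_2[u|f_t]$ of \eqref{F_3} together with $\tfrac12[T(g),T(g)]=T(g)\circ_1T(g)-T(g)\circ_2T(g)$. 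Arguing exactly as in the proof of Theorem~\ref{thm_combo_star}, the $-[\opn{spl}'(s)|w]$ contribution collapses to the insertion $T(g)([g_s|w])$, while the $-h_2[f_s|w]$ and $h_2[u|f_t]$ contributions, fed through the recursions \eqref{used_VIII}, \eqref{carry_on}, \eqref{h_2_recursive} for $G_2$ and $h_2'$ and the letter-by-letter rule $(u_1u_2)\star v=u_1\star(u_2\star v)$ of Theorem~\ref{thm_combo_star}, reassemble into the fully reduced $(u\star v)\star w$ (the $s$-first route $\omega\to(f_s+g_s)w\to\cdots$) and $u\star(v\star w)$ (the $t$-first route $\omega\to u(f_t+g_t)\to\cdots$) respectively. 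Matching the two routes term by term yields the identity, and hence the theorem.

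The main obstacle is precisely this last matching: one has to track the binary-tree sum defining $T(g)$ and the two slot-insertions in the Gerstenhaber bracket against the two nested deformed reductions of $\omega$, handle the case in which $f_s+g_s$ has length greater than one (which forces the re-splitting into letters used in Theorem~\ref{thm_combo_star}), and keep everything meaningful order by order in $\hbar$ so that $(u\star v)\star w$ makes sense at each order. Conceptually the statement is a Bergman-type diamond lemma for the deformed rewriting system $s\mapsto f_s+g_s$: because no element of $S$ is a proper sub-path of another there are no inclusion ambiguities, so the overlap ambiguities $\SF{A}_2$ are the only obstructions to confluence, and Newman's lemma (applied order by order in $\hbar$) upgrades local confluence at these finitely many overlaps --- which is exactly \eqref{least} --- to global confluence, hence to associativity; the homological steps above are the packaging that makes ``finitely many checks suffice'' precise.
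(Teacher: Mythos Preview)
Your proposal is correct and follows essentially the same approach as the paper: you reduce associativity to the MC equation in $\B^{\sbullet}$, transfer via the $L_{\infty}$-quasi-isomorphism to the MC equation in $\P^{\sbullet}$, use freeness of $\P_3$ on $\SF{A}_2$ to reduce to generators, and then verify the pointwise identity $MC(g)(1\otimes uvw\otimes 1)=(u\star v)\star w - u\star(v\star w)$ using the explicit formula \eqref{F_3} for $F_3$ and the recursions from Theorem~\ref{thm_combo_star}. Your closing Bergman/Newman diamond-lemma paragraph is a helpful conceptual gloss not present in the paper, but it is commentary rather than a different proof.
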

This result is somewhat expected: since the ambiguities are what potentially make our reduction based multiplication rule dependent on parenthesising, but if associativity can be secured on all ambiguities, we are through.
\begin{proof}(Sketch)
It is crucial here that the $L_{\infty}$-map $T$ be invertible and so $g\in \P^2$ satisfies \eqref{MC_L_infty_F} iff $T(g)\in \B^2$ satisfies \eqref{MC_B}. Therefore we need only check the former on all generators of $\P_3=\SF{A}_2$. Let $uvw$ be an ambiguity,
by def.\ref{def_ambiguity}, $u$ has one letter and $uv=s$ is a tip and $vw=t$ is also a tip.
The tips have reduction $s\mapsto f_s$ and $t\mapsto f_t$, while the cochain $g$ has $g(1\otimes s\otimes 1)=g_s$ and $g(1\otimes t\otimes 1)=g_t$.


We need to check
\bea (\gd g-\frac12F^3[T,T])(1\otimes uvw\otimes1)\stackrel{?}{=}0\label{question}.\eea
The first term gives $g(1\otimes s\otimes w-u\otimes t\otimes1)+gG_2[f_s|w]-gG_2[u|f_t]$ according to the description of $\partial_3$ in \eqref{partial_3_P}. Thus
\bea (\gd g)(1\otimes uvw\otimes1)=\overline{g_sw}-\overline{ug_t}+gG_2[f_s|w]-gG_2[u|f_t].\nn\eea
For the remaining terms in \eqref{question}, we use \eqref{F_3} for $F_3$
\bea -F_3(1\otimes uvw\otimes1)=[\opn{spl}'(s)|w]+h_2[f_s|w]-h_2[u|f_t]\nn\eea
We compute each term
\bea
\frac12[T,T]([\opn{spl}'(s)|w])&=&(T\circ_1T-T\circ_2T)[\opn{spl}'(s)|w]\nn\\
&=&T\circ_1T([u|v|w]+\hcancel[blue]{u[\opn{spl}'(v)|w]})-T\circ_2T(([u|v|w]+\hcancel[blue]{u[\opn{spl}'(v)|w]})\nn\\
&=&T[g_s|w]-T[u|g_t],\label{used_VII}\\
\frac12[T,T](h_2[f_s|w]-h_2[u|f_t])&\stackrel{\eqref{recursive_tree}}{=}&T[f_s|w]-gG_2[f_s|w]-T[u|f_t]+gG_2[u|f_t]\nn\eea
where the crossed out terms are due to $T[a|b]=0$ if $a,b,ab$ are all irreducible, proved in thm.\ref{thm_combo_star}.
So altogether we get
\bea  MC(g)(1\otimes uvw\otimes1)&=&\overline{g_sw}-\overline{ug_t}+gG_2[f_s|w]-gG_2[u|f_t]\nn\\
&&+T[g_s|w]-T[u|g_t]+T[f_s|w]-gG_2[f_s|w]-T[u|f_t]+gG_2[u|f_t]\nn\\
&=&\overline{(f_s+g_s)w}-\overline{u(f_t+g_t)}+T[f_s+g_s|w]-T[u|f_t+g_t]\nn\\
&=&(f_s+g_s)\star w-u\star(f_t+g_t)\nn\\
&=&(u\star v)\star w-u\star(v\star w).\nn\eea
%
%
This finishes the proof.
\end{proof}

\section{Warm up examples}
\subsection{$A_1$}
We now deal with the $A_1$-singularity example. Recall that we eventually aim at the deformed singularity
\bea X_{r^2}=\{(x_1,x_2,x_3)\in\BB{C}^3|f_{r^2}(x)=x_1^2+x_2^2+x_3^2-r^2=0\}.\nn\eea
For generic $r$, we call the real locus $M_{r^2}$, and the quantisation via branes procedure is applied to the embedding $M_{r^2}\hookrightarrow X_{r^2}$.
Note that one can take the real locus in different ways, e.g. after a change of variables one can write $f(x)=x_+x_-+4x_3^2$ with $x_{\pm}=(x_1\pm ix_2)/2$.
Setting $x_{\pm},x_3$ real leads to a different $M'$ (non-compact). This is extensively discussed in \cite{Gukov:2008ve}, but we shall focus on compact $M$.

We start however not with $f_{r^2}$ but $f_0$, and obtain a quiver algebra which is a non-commutative crepant resolution of $f_0$. We will calculate $\HH^2$ of this algebra and carry out deformation quantisation. We get two deformation parameters, one of which would correspond to a Poisson structure, another would be a Kodaira-Spencer type deformation and go to the $r^2$ parameter in the deformation above (i.e. we get to the deformed singularity by way of the resolved singularity).
To summarise the result, for arbitrary parameters $q,t$, the polynomial algebra of $x_{\pm},x_3$ are deformed into
\bea &&x_+\star x_-=x_-\star x_++(q-q^{-1})x_3,\nn\\
&& x_3\star x_+=x_+\star x_3+2(q-q^{-1})x_+\nn\\
&&x_3\star x_-=x_-\star x_3-2(q-q^{-1})x_-,\nn\\
&&2(x_+\star x_-+x_-\star x_+)+x_3\star x_3=(t-q)(t+q-2q^{-1}).\label{A_1_deform_tq}\eea
Moreover, if $t$ satisfies the integrality condition
\bea \frac{t-q}{q-q^{-1}}=-(n+2),\nn\eea
then we recover the geometric quantisation on $(S^2,\go=\re\Go)$ with $\go/2\pi$ having period $n$.

\vskip 0.5cm

We start with the combinatorial model of $D^b(X)$ as a quiver algebra in ex.\ref{Ex_A1}. The variables $x_{1,2,3}$ are related to $z^{1,2},w_{1,2}$ as
\bea (x_1,x_2,x_3)=[w_1,w_2]\gs^{1,2,3}
\begin{bmatrix}
  z^1 \\ z^2
\end{bmatrix}=(w_1z^2+w_2z^1,iw_2z^1-iw_1z^2,w_1z^1-w_2z^2)\label{xyz}\eea
where $\gs^{1,2,3}$ are the three Pauli-matrices. The relation $w_iz^i=0$ implies $f_0=0$.

As a first step, we will consider the Hochschild cohomology $\HH^2(\Gl)$ to capture the infinitesimal deformations.
Though it is easily computable from the HKR map \eqref{HKR}, we will need explicit representatives of $\HH^2(\Gl)$ at the cochain level.
\begin{proposition}
  The Hochschild cohomology is graded by the path length, at grading $-2$, we have the explicit cocycles
  \bea &g(e_1\otimes z^2w_2\otimes e_1)=t_1e_1,~~~~g(e_0\otimes w_2z^2\otimes e_0)=t_0e_0,\nn\\
  &g(e_0\otimes w_1z^1w_2\otimes e_1)=(t_0-t_1)e_0w_2e_1,~~~g(e_1\otimes z^1w_1z^2\otimes e_0)=(t_1-t_0)e_1z^2e_0.\nn\eea
  The second line is determined from the first, since the relations appearing in the second line are redundant.
\end{proposition}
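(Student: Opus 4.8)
\emph{Proof plan.} I would work throughout in the Chouhy--Solotar complex $\P_{\sbullet}\to\Gl$ for the $A_1$ quiver with the sub-optimal ordering \eqref{order_bad}, so that $\SF{A}_1$ is generated by the four tips $z^2w_2,\ w_2z^2$ (the essential ones, of path length $2$) and $z^1w_1z^2,\ w_1z^1w_2$ (the superfluous ones, of path length $3$), while $\SF{A}_2$ is generated by the two $2$-ambiguities $z^2w_2z^2$ and $w_2z^2w_2$ of Example~\ref{Ex_A1_II}. Since $\P_2=\Gl\otimes_S\SF{A}_1\otimes_S\Gl$ is a free $\Gl$-bimodule, a cochain $g\in\P^2$ is exactly a choice of the four values $g(1\otimes s\otimes1)$, and being homogeneous of grading $-2$ (i.e.\ lowering path length by $2$) constrains them: $g(e_1\otimes z^2w_2\otimes e_1)\in e_1\Gl_0e_1=\BB{C}e_1$ and $g(e_0\otimes w_2z^2\otimes e_0)\in e_0\Gl_0e_0=\BB{C}e_0$, hence equal $t_1e_1$ and $t_0e_0$ for scalars $t_0,t_1$; while $g(e_1\otimes z^1w_1z^2\otimes e_0)\in e_1\Gl_1e_0=\BB{C}\{z^1,z^2\}$ and $g(e_0\otimes w_1z^1w_2\otimes e_1)\in e_0\Gl_1e_1=\BB{C}\{w_1,w_2\}$, say $a_1z^1+a_2z^2$ and $b_1w_1+b_2w_2$. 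So the degree-$(-2)$ part of $\P^2$ is six-dimensional before closedness is imposed.

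Next I would impose $\gd g=g\circ\partial_3=0$; since $\P_3=\Gl\otimes_S\SF{A}_2\otimes_S\Gl$ is free, this is the pair of equations $g(\partial_3(1\otimes z^2w_2z^2\otimes1))=0$ and $g(\partial_3(1\otimes w_2z^2w_2\otimes1))=0$. I would compute $\partial_3$ on these two ambiguities by running $\opn{red}^{L'}$ and $\opn{red}^R$ and collecting the tips of the relations used, exactly as in Example~\ref{Ex_A1_III}, and then apply $g$ through the bimodule rule $g(a\otimes s\otimes b)=a\,g(1\otimes s\otimes1)\,b$. The first equation then reads $t_1z^2-(a_1z^1+a_2z^2)-t_0z^2=0$, forcing $a_1=0$ and $a_2=t_1-t_0$; the second, by the mirror computation, gives $b_1=0$ and $b_2=t_0-t_1$. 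Hence the closed degree-$(-2)$ cochains are precisely those displayed in the statement, with $t_0,t_1$ the only free parameters and the values on the two redundant relations $z^1w_1z^2$, $w_1z^1w_2$ forced. Finally, to see these are genuine cohomology classes, note that the degree-$(-2)$ part of $\P^1$ would consist of bimodule maps sending the length-$1$ generators $Q_1$ into $\Gl$ of path length $-1$, hence vanishes; so $\HH^2(\Gl)_{-2}\cong\BB{C}^2$ with the two displayed cocycles as a basis, matching the HKR count of holomorphic bivectors.

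I do not expect a real obstacle here; the one place demanding care is the bookkeeping of $\partial_3$ in the Chouhy--Solotar complex --- choosing the reduction order ($\opn{red}^{L'}$ versus $\opn{red}^R$), tracking the signs, and in particular keeping the correct trailing letter in each collected tip --- since it is exactly the cancellation of the $t_1$- and $t_0$-contributions in $g(\partial_3(1\otimes z^2w_2z^2\otimes1))$ that simultaneously pins down the value on $z^1w_1z^2$ and confirms that $t_0,t_1$ remain unconstrained.
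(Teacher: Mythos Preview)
Your proposal is correct and follows essentially the same route as the paper: work in the Chouhy--Solotar complex for the sub-optimal ordering, note that a grading-$(-2)$ cochain is determined by two scalars $t_0,t_1$ on the essential tips plus four scalars on the superfluous length-$3$ tips, impose $g\circ\partial_3=0$ on the two ambiguities $z^2w_2z^2$ and $w_2z^2w_2$ to pin the latter in terms of $t_0,t_1$, and observe that no grading-$(-2)$ coboundaries exist. Your write-up is in fact more explicit than the paper's, which only asserts that the values on the superfluous tips are determined (via the isomorphic image under $\partial_3$) and that the cocycle condition is ``easily checked''; you actually carry out the linear algebra and confirm $a_1=b_1=0$, $a_2=t_1-t_0$, $b_2=t_0-t_1$. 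One cosmetic remark: your closing sentence attributes the two-dimensional answer to ``the HKR count of holomorphic bivectors'', but as the paper goes on to explain, only one combination $t_1-t_0$ corresponds to the Poisson bivector while the other is the Kodaira--Spencer direction --- this does not affect the proof, only the interpretation.
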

\begin{proof}
To compute the Hochschild cohomology, we can either use the minimal resolution resulting from the optimal ordering \eqref{order_best}, or the sub-optimal one \eqref{order_bad}. We use the latter for illustrative purposes.

Let $g\in \HH^2$ be 2-cochain. At grading $-2$, i.e. $g$ reduces the path lengths by 2, one can easily check that the assignment
\bea g(e_0\otimes w_2z^2\otimes e_0)=t_0,~~g(e_1\otimes z^2w_2\otimes e_1)=t_1,\label{used_I}\eea
would lead to a cocycle. Besides, keeping in mind that the tips $e_1z^1w_1z^2e_0$, $e_0w_1z^1w_2e_1$ get mapped onto isomorphically from $e_1z^2w_2z^2e_0$ and $e_0w_2z^2w_2e_1$, which live in $\P^3$. Thus $g$ evaluated on $e_1z^1w_1z^2e_0$, $e_0w_1z^1w_2e_1$ is completely determined from \eqref{used_I}.
It is equally clear that \eqref{used_I} cannot be a coboundary, since coboundaries cannot have grading $-2$.
\end{proof}
We can match the two cocyles to the Poisson bracket. From the HKR map, we expect the following
\bea gG_2([a|b]-[b|a])\sim \{a,b\}.\nn\eea
  On the lhs $a,b$ are paths, and on the rhs, they are identified as functions via \eqref{xyz}.
  The Poisson bracket is induced from the holomorphic symplectic form
  \bea \hspace{2cm}\Go=\frac{1}{x_3}dx_1\wedge dx_2~~\To~~\{x_1,x_2\}=x_3~~\{x_2,x_3\}=x_1,~~\{x_3,x_1\}=x_2.\label{PB}\eea
Note the Poisson bracket has $\{f_0,-\}=0$ and so descends to $X_0$.

Now compute the `commutators' $gG_2([w_2z^1|w_1z^2]-[w_1z^2|w_2z^1])$ and $gG_2([w_2z^1|w_1z^1]-[w_1z^1|w_2z^1])$, for which we need (as usual we omit the $e_i$'s when they are clear)
\bea gG_2[w_2z^1|w_1z^2]&=&-(t_1-t_0)w_1z^1+t_0w_1z^1,\nn\\
gG_2[w_1z^2|w_2z^1]&=&w_1z^1t_1\nn\\
gG_2[w_2z^1|w_1z^1]&=&0,\nn\\
gG_2[w_1z^1|w_2z^1]&=&(t_0-t_1)w_2z^1.\nn\eea
Translating these into Poisson brackets
\bea
\{x_+,x_-\}&=&(t_1-t_0)x_3\nn\\
\{x_3/2,x_-\}&=&(t_0-t_1)x_-,\nn\eea
and comparing with \eqref{PB}, we get
\bea gG_2([a|b]-[b|a])=2i(t_1-t_0)\{a,b\}.\nn\eea
%

We see so far that the difference $t_0-t_1$ controls the deformation due to Poisson structure. Another linear combination of $t_0,t_1$ would control the Kodaira-Spencer deformation. Recall that for a variety $X$ the cohomology group $H^1(X,TX)$ controls the infinitesimal deformation of the complex structure. But if $X$ is affine, defined by the vanishing of a polynomial $f_0$, then $H^1(X,TX)$ can also be regarded as deforming $f$. According to \cite{GNTjurina_1969}, such deformations are given by
\bea H^1(X,TX)\simeq\frac{\BB{C}[x_1,x_2,x_3]}{\bra f,\partial_{x_i}f\ket}.\label{match_Poisson_A1}\eea
For our $f_0$, the above quotient is $\BB{C}$, i.e. a constant deformation $f_0\to f_0+\ep$. 

Having identified the infinitesimal deformations, we use the combinatorial version of deformation quantisation in sec.\ref{sec_CDQ} to get a finite deformation.
With the criterion of thm.\ref{thm_ass_chk}, the procedure is a breeze.
We have the following deformation of the reduction system
\bea
&&z^2w_2\to -z^1w_1+(t-q^{-1})e_1,\nn\\
&&w_2z^2\to -w_1z^1+(t-q)e_0,\nn\\
&&z^1w_1z^2\to z^2w_1z^1+(q-q^{-1})z^2,\nn\\
&&w_1z^1w_2\to w_2z^1w_1-(q-q^{-1})w_2.\label{A_1_deform_q}\eea
The right reduction using the deformed reduction system leads to the $\star$ product.
Here the parameter $q$ corresponds to the Poisson structure and $t-q$ corresponds to $f_0\rightsquigarrow f_{r^2}$.
To ensure that this gives an associative algebra, we need only check the associativity on the ambiguities $z^2w_2z^2$ and $w_2z^2w_2$, and the next remark says that even this is unnecessary.
\begin{remark}
We point out that the third and fourth line in \eqref{A_1_deform_q} are completely fixed from the first two, by demanding associativity on $z^2w_2z^2$ and $w_2z^2w_2$. This is hardly surprising, seeing as these two ambiguities map isomorphically onto $z^1w_1z^2$ and $w_1z^1w_2$ (see ex.\ref{Ex_A1_III} and the remark therein).
In fact, the algebra has a length 2 minimal resolution had we used the ordering \eqref{order_best}. In such case, there is no ambiguity to start with. We will use the same reasoning with other global dimension 2 algebras such as $A_n$ and $D_4$ below.
\end{remark}

We can translate \eqref{A_1_deform_q} into a deformed multiplication written in $x_{1,2,3}$.
First \eqref{xyz} needs rewriting
\bea (x_+,x_-,x_3)=(w_1z^2,w_2z^1,2w_1z^1-t+q).\nn\eea
Then we compute the star product
\bea x_+\star x_-&=&(w_1z^2)\star(w_2z^1)=-w_1z^1w_1z^1+(t-q^{-1})w_1z^1,\nn\\
x_-\star x_+&=&(w_2z^1)\star(w_1z^2)=w_2\star(z^2w_1z^1)+w_2\star z^2(q-q^{-1})\nn\\
&=&-w_1z^1w_1z^1+(t-q)w_1z^1-w_1z^1(q-q^{-1})+(t-q)(q-q^{-1}),\nn\\
x_3\star x_+&=&2w_1z^2w_1z^1+2(q-q^{-1})x_+-(t-q)x_+,\nn\\
x_+\star x_3&=&2w_1z^2w_1z^1-(t-q)x_+,\nn\\
x_3\star x_-&=&2w_2z^1w_1z^1-2(q-q^{-1})x_--(t-q)x_-,\nn\\
x_-\star x_3&=&2w_2z^1w_1z^1-(t-q)x_-.\nn\eea
From these relations we get \eqref{A_1_deform_tq}. In the remainder of this section we omit $\star$.

\smallskip So far $q,t$ are arbitrary, but as $t-q$ corresponds to the deformation parameter in $f_{r^2}$ and hence controls the size of the real locus of $x_1^2+x_2^2+x_3^2=r^2$, which is a 2-sphere. The size here translates to the size of the K\"ahler form in the geometric quantisation and we expect some integrality to appear.

We let $\Gl_{q,t}$ be the deformed path algebra and
\bea e_i\Gl_{q,t}e_j=\{\textrm{Deformed paths from $e_j$ to }e_i\}.\nn\eea
From this identification emerges the brane picture:
\bea e_0\Gl_{q,t}e_0\sim \hom(B_{cc},B_{cc}).\nn\eea
The paths $e_0\Gl e_0$ have been identified as holomorphic functions on $X_0$ as in \eqref{xyz}, and so $e_0\Gl_{q,t}e_0$ are the $q,t$-deformed functions on $X_{r^2}$. In the language of sec.\ref{sec_S}, one regards these as strings stretched between $B_{cc}$.

To understand this algebra, we make some redefinitions
\bea E=\frac{1}{q-q^{-1}}w_1z^2,~~~F=\frac{1}{q-q^{-1}}w_2z^1,~~~H=\frac{1}{q-q^{-1}}(w_1z^1-\frac12(t-q)),\nn\eea
these satisfy the standard commutation relations of $\FR{sl}(2)$
\bea [E,F]=2H,~~[H,E]=E,~~[H,F]=-F.\label{sl(2)}\eea
In view of the relations \eqref{sl(2)}, we have the identification
\bea e_0\Gl_{q,t}e_0\sim U_{\FR{sl}(2)}\nn\eea
the universal enveloping algebra of $\FR{sl}(2)$.

Next we would like to find some interesting quotient modules, namely branes and in particular branes corresponding to the geometric quantisation of $S^2$.
We impose the following conditions on $e_0\Gl_{q,t}e_0$
\bea e_0w_1z^1=\gl e_0,~~e_0w_2z^2=(t-q-\gl)e_0,~~e_0w_2z^1=0.\nn\eea
This is only consistent if $\gl=-q+q^{-1}$:
\bea 0=e_0w_2z^1w_1=-e_0w_2z^2w_2+(t-q^{-1})e_0w_2=-(t-q-\gl)e_0w_2+(t-q^{-1})e_0w_2=(q-q^{-1}+\gl)e_0w_2\nn\eea
In a path, we can always use \eqref{sl(2)} to move the $H$ and $F$ to the left, which acts on $e_0$ with weight $\gl$ or kills it.
It would not have escaped the reader that this is just the Verma module
\bea \BB{C}_{\mu}\otimes_{B_-}U_{\FR{sl}(2)}.\label{verma_mod}\eea
Here $B_-$ is generated by $H,F$. The tensor product over $B_-$ means that $H,F$ can be moved over to act on $\BB{C}_{\mu}$, with $H$ acting with weight $\mu$ and $F$ killing it. Our Verma module looks backwards because our paths go to the left, alas. But at any rate, this is regarded as $\hom(B,B_{cc})$: strings stretched between some brane $B$ and $B_{cc}$.

To find the $B$ that corresponds to the brane for geometric quantisation, we need find some finite dimensional quotient modules, which is where the integrality comes in.
\begin{lemma}
  If we set
  \bea \frac{t-q}{q-q^{-1}}=-(n+2)\nn\eea
  then we can quotient $e_0\Gl_{q,t}e_0$ from the left by $e_0(w_1z^2)^{n+1}e_0$ and get a module of dimension $n+1$.
\end{lemma}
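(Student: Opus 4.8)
The plan is to reduce the statement to elementary $\FR{sl}(2)$ representation theory, using the identification $e_0\Gl_{q,t}e_0\cong U_{\FR{sl}(2)}$ and the Verma module $V_\mu:=\BB{C}_\mu\otimes_{B_-}U_{\FR{sl}(2)}$ just described; this $V_\mu$ is the module ``$\hom(B,B_{cc})$'' in which the left quotient of the lemma is to be taken. Write $v_c$ for the image of $e_0E^c$ in $V_\mu$, equivalently $(q-q^{-1})^{-c}$ times the image of $e_0(w_1z^2)^c$, so that $v_0$ is the cyclic generator with $v_0F=0$ and $v_0H=\mu v_0$, and $\{v_c\}_{c\ge0}$ is a weight basis with $v_cE$ a scalar multiple of $v_{c+1}$. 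The element $e_0(w_1z^2)^{n+1}e_0$ is then $(q-q^{-1})^{n+1}v_{n+1}$, so it suffices to show that, under the hypothesis, the $U_{\FR{sl}(2)}$-submodule generated by $v_{n+1}$ equals $\mathrm{span}\{v_c:c\ge n+1\}$: the quotient is then spanned by the classes of $e_0,e_0(w_1z^2),\dots,e_0(w_1z^2)^n$, which have distinct $H$-weights and so are linearly independent, giving dimension exactly $n+1$.

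First I would compute the highest weight $\mu$. From $H=\tfrac1{q-q^{-1}}\bigl(w_1z^1-\tfrac12(t-q)\bigr)$, $e_0w_1z^1=\gl e_0$ and $\gl=-q+q^{-1}=-(q-q^{-1})$ one gets $\mu=\tfrac1{q-q^{-1}}\bigl(\gl-\tfrac12(t-q)\bigr)=-1-\tfrac{t-q}{2(q-q^{-1})}$, which equals $n/2$ precisely when $\tfrac{t-q}{q-q^{-1}}=-(n+2)$, i.e.\ exactly under the hypothesis of the lemma.

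Next, applying $E$ or $H$ to $v_{n+1}$ keeps us inside $\mathrm{span}\{v_c:c\ge n+1\}$, so the only way the generated submodule could descend to lower weights is through $F$. Using $[E,F]=2H$, $[H,E]=E$, $[H,F]=-F$ together with $v_0F=0$, a one-line induction on $c$ gives $v_cF=c(2\mu-c+1)v_{c-1}$; hence $v_{n+1}F=(n+1)(2\mu-n)v_n$, which vanishes exactly because $\mu=n/2$. Therefore the submodule generated by $v_{n+1}$ is $\mathrm{span}\{v_c:c\ge n+1\}$, and $V_\mu/\langle v_{n+1}\rangle$ has dimension $n+1$ — indeed it is the irreducible $(n+1)$-dimensional $\FR{sl}(2)$-module, i.e.\ the space produced by the geometric quantisation of $(S^2,\go=\re\Go)$ with $\go/2\pi$ of period $n$, which is what motivates the statement.

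The mathematics here is routine; the only real care needed is bookkeeping. The ``backwards'' orientation of the path algebra means the cyclic vector is annihilated by $F$ rather than $E$ and that $E$ \emph{lowers} the weight along $v_0,v_1,v_2,\dots$; and one must chase the scalars through the chain of identifications $w_1z^2\leftrightarrow E$, $w_1z^1\leftrightarrow H$, $\gl=-q+q^{-1}$, $\tfrac{t-q}{q-q^{-1}}=-(n+2)$ to land on $\mu=n/2$. Once that is in place, the vanishing of $v_{n+1}F$ is immediate, and it is precisely this vanishing that forces the integrality condition.
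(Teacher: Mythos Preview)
Your proposal is correct and follows essentially the same route as the paper: both arguments boil down to computing $e_0E^{n+1}F$ (your $v_{n+1}F$) via $[E,F]=2H$, observing it equals $(n+1)(2\mu-n)$ times the previous vector, and noting this vanishes precisely when $\mu=n/2$, which unwinds to the stated integrality condition. You are slightly more careful than the paper in spelling out why the resulting quotient has dimension exactly $n+1$ (distinct $H$-weights give linear independence of $v_0,\dots,v_n$, and the vanishing of $v_{n+1}F$ pins the submodule at $\operatorname{span}\{v_c:c\ge n+1\}$), whereas the paper simply appeals to ``consistency''; but the core computation and logic are identical.
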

With hindsight, we should have used $t-q$ as the deformation parameter, then \eqref{A_1_deform_tq} would also look prettier.
Also one could do entirely the same analysis for $e_1\Gl_{q,t}e_0$.
\begin{proof}
  We need only check consistency. Since $e_0E^{n+1}$ is set to zero, we multiply $F$ to it from the right
  \bea e_0E^{n+1}F=(n+1)e_0(2H-n)E^n.\nn\eea
  Consistency requires that $2e_0H=ne_0$ and hence
  \bea \frac{1}{q-q^{-1}}(-q+q^{-1}-\frac12(t-q))=\frac{n}{2}\nn\eea
  solving this gives us the condition stated.
\end{proof}
The quotient we are taking here is familiar in the context of Verma modules. Note $H$ acts on $e_0$ with weight
\bea \mu=-1-\frac12\frac{t-q}{q-q^{-1}}.\nn\eea
For generic $\mu$, the Verma module \eqref{verma_mod} is simple. But if $\mu$ happens to be a dominant weight of $\FR{sl}(2)$, then there is a finite dimensional quotient, being none other than the highest weight representation of weight $\mu$. This is also the picture from geometric quantisation, itself being a special case of the Borel-Weil theorem.

\subsection{$A_n$}
The $A_n$ case is a rather easy generalisation of $A_1$. Start with $\{(x_1,x_2,x_3)\in\BB{C}^3|f(x)=x_1^2+x_2^2-x_3^{n+1}=0\}$, we get to its Mckay-Auslander-Reiten quiver \eqref{quiver_An}. The minimal resolution of this quiver algebra is also very easy (for quiver algebras of global dimension 2, their minimal resolution have the same form). From this resolution we will obtain $\HH^2(\Gl)$ and proceed with the deformation quantisation.

\smallskip

To start, the relations of the quiver is in \eqref{relation_A_n}
\bea R_i=(e_i\ot e_{i-1}\ot e_i)-(e_i\ot e_{i+1}\ot e_i)=0,\nn\eea
for ease of notation, we name these $R_i$. The resolution reads
\bea \P_{\sbullet}:~~\bigoplus_i\Gl e_i\otimes R_i\otimes e_i\Gl  \to \bigoplus_i(\Gl e_{i+1}\otimes e_i\Gl \oplus\Gl e_{i-1}\otimes e_i\Gl ) \to \bigoplus_i \Gl e_i\otimes e_i\Gl.\nn\eea
From the experience garnered in the $A_1$ case, we consider the explicit cocyle representing $\HH^2$
\bea g(e_i\otimes R_i\otimes e_i)=t_ie_i.\nn\eea
This leads to a deformation
\bea (e_i\ot e_{i-1}\ot e_i)=(e_i\ot e_{i+1}\ot e_i)+t_ie_i\label{relation_A_n_def}\eea
giving an associative algebra with no need for further checks.

To translate the quiver algebra back to functions on $X$, we set
\bea x_+&=&e_0\ot e_n\ot e_{n-1}\ot\cdots \ot e_1\ot e_0,\nn\\
x_-&=&e_0\ot e_1\ot \cdots \ot e_{n-1}\ot e_n\ot e_0,\nn\\
x_3&=&e_0\ot e_1 \ot e_0.\nn\eea
The deformed relations \eqref{relation_A_n_def} immediately gives
\bea &&x_3\star x_+=x_+\star x_3-\sum_0^nt_ix_+,\nn\\
&&x_3\star x_-=x_-\star x_3+\sum_0^nt_ix_-,\nn\\
&&x_-\star x_+=(x_3-\sum_1^nt_i)(x_3-\sum_1^{n-1}t_i)\cdots (x_3-t_1)x_3,\nn\\
&&x_+\star x_-=(x_3+t_0+\sum_1^nt_i)(x_3+t_0+\sum_2^nt_i)\cdots (x_3+t_0).\label{An_deform}\eea
This algebra has been studied in \cite{Smith_sl2} by Smith, who defined an algebra similar to $U\FR{sl}(2)$
\bea
&&[H,E]=E,~~[H,F]=-F,~~[E,F]=f(H),\nn\eea
for some polynomial function $f$.
Let us set $\sum_0^nt_i=t$ and identify $x_+=E$, $x_-=F$ and $x_3=H$,
then we have from \eqref{An_deform}
\bea &&[H,E]=E,~~[H,F]=-F,~~FE=v(H),~~EF=v(H+t),\nn\\
&&v(x)=(x-\sum_1^nt_i)(x-\sum_1^{n-1}t_i)\cdots (x-t_1)x.\nn\eea
Thus we can set $f(x)=v(x+t)-v(x)$. This result has been pointed out in \cite{HODGES1993271}, where various properties of this algebra have also been studied.

We now repeat the reasoning for the $A_1$ case and identify some branes. Consider again $e_0\Gl_{\vec t}e_0$, and we would like to impose
\bea e_0H=\gl e_0,~~~e_0F=0,~~~e_0E^N=0.\nn\eea
The consistency requires
\bea &&0=e_0F~\To~0=e_0FE=e_0v(H),\nn\\
&&0=e_0E^N~\To~0=e_0E^NF=e_0E^{N-1}v(H+t)=e_0v(H+Nt)E^{N-1}\nn\eea
This shows that both $\gl$ and $\gl+Nt$ must be roots of $v(x)$.

\subsection{$D_4$}\label{sec_D4}
The method for studying the deformation of $A_n$-singularity in \cite{HODGES1993271} gets impractical for other Kleinian singularities. In \cite{WCBMPH} another simpler approach is used in that one directly deforms the skew group algebra. Here we show that this deformation also arises from deformation quantisation using $D_4$ as example.

The $D_4$ singularity is given by $X=\{(u,v,w)\in\BB{C}^3|f=v^2-u^2w+4w^3=0\}$. The Mckay-Auslander-Reiten quiver is given in \eqref{D4_quiver},
 \bea
   \begin{tikzpicture}
   \matrix (m) [matrix of math nodes, row sep=3em, column sep=4em]
     {    & e_0 & \\
          e_1 & e_c & e_3 \\
            & e_2 &  \\ };
\path[->,bend left=10]
   (m-2-1) edge (m-2-2)
   (m-2-3) edge (m-2-2)
   (m-1-2) edge (m-2-2)
   (m-3-2) edge (m-2-2);
\path[->,bend left=10]
   (m-2-2) edge (m-2-1)
   (m-2-2) edge (m-2-3)
   (m-2-2) edge (m-1-2)
   (m-2-2) edge (m-3-2);

   \node at (5.5,0) {$\begin{array}{cc}
    (e_i\ot e_c\ot e_i)=0,~~~i=0,1,2,3  \\
    \sum_i(e_c\ot e_i\ot e_c)=0 \\
  \end{array}$};
\end{tikzpicture}\label{rel_D4}\eea
We deform the relations to
\bea &&(e_i\ot e_c\ot e_i)=t_ie_i,~~~i=0,1,2,3,\nn\\
&&\sum_i(e_c\ot e_i\ot e_c)=se_c.\nn\eea
We translate the quiver algebra back to the $u,v,w$ variables. Consider the following paths starting and ending at $e_0$
\bea &&z_1=(e_0\ot e_c \ot e_1 \ot e_c \ot e_0),~~~z_2=(e_0\ot e_c \ot e_2 \ot e_c \ot e_0),\nn\\
&&z_{21}=(e_0 \ot e_c \ot e_2 \ot e_c \ot e_1 \ot e_c \ot e_0).\nn\eea
By following the maps given in \eqref{D4_quiver}, \eqref{D4_maps}, we get their relation to $u,v,w$
\bea & w=z_1,~~u=-4z_2-2z_1,~~v=4z_{21},\nn\\
& v^2-u^2w+4w^3=0~\Leftrightarrow~z_{21}z_{21}=z_2z_1z_1+z_2z_2z_1.\nn\eea
After deformation, these relations turn into
\bea
  z_1\star z_2&=&z_2\star z_1-2s_tz_{21}-s_t(s_{03}s_0t_0-(s_{01}+s_{03})z_1-(s_{02}+s_{03})z_2),\nn\\
  z_1\star z_{21}&=&z_{21}\star z_1-s_t(z_1\star z_1+2z_2\star z_1-(s_{12}+s_{13})z_{21})\nn\\
&&-\big(-(s_{12}+s_{13}+s_{03})s_0t_0+s_{13}s_1t_1+(s_{12}+s_{13})(s_{01}s_{13})\big)z_1,\nn\\
z_2\star z_{21}&=&z_{21}z_2+s_tz_2\star z_2+s_t(2z_2\star z_1-(s_{12}+s_{23})z_{21}+(-s_0t_0+s_{23}s_{02})z_2),\nn\\
z_{21}\star z_{21}&=&z_2\star z_1\star z_1+z_2\star z_2\star z_1-(s_{23}+s_{12})z_{21}\star z_1-(s_{12}+s_{13})z_{21}\star z_2\nn\\
&&+(-s_t(s_{12}+s_{13})+s_{02}s_{23}-s_0t_0)z_2\star z_1-s_t(s_{12}+s_{13})z_2\star z_2\nn\\
&&+\big((s_{12}+s_{13})(s_2t_2+s_{02}s_{12}+s_ts_{23})-s_{13}s_1t_1\big)z_{21}\nn\\
&&-s_t(s_{12}+s_{13})(-s_0t_0+s_{23}s_{02})z_2,\label{hairy_algebra}\eea
where $s_{i\cdots k}:=s-t_i-\cdots-t_k$ and $s_t=s_{01}+s_{23}$. It is not recommended that the reader check the associativity by hand, writing a few lines of codes could be helpful.

In closing, we hope to have convinced the reader of our slogan `doing more is doing less' in the introduction: rather than solely deforming the commutative algebra $\BB{C}[u,v,w]/\bra f\ket$, we deform the entire derived category of sheaves, which requires less effort, and yet the resulting deformed algebra \eqref{hairy_algebra} is quite involved. However it is unclear to us if the algebra \eqref{hairy_algebra} is of any independent interest.

\section{$T^*\BB{P}^2$}
In the above warmup section, all quiver algebras have global dimension 2 and so there is no potential obstruction to solving the MC equation order by order \eqref{order_by_order}. In this section, we deal with an example of global dimension 4, so as to illustrate the merit of the combinatorial deformation quantisation reviewed in sec.\ref{sec_CDQ}. We first give our motivation.

\subsection{Motivation: Generalised K\"ahler structure}
The generalised K\"ahler structure (GKS) \cite{Gualtieri2014} appeared as the target space of certain 2D supersymmetric sigma model. This geometry mixes complex and symplectic geometry in an organic way and has been expected to play important roles in understanding mirror symmetry.

It was proved in \cite{Gualtieri2014} that GKS is equivalent to the bi-Hermitian geometry $(M,g,I_+,I_-)$ with $I_{\pm}$ being two complex structure whose torsions satisfy some constraints. In the work \cite{Bischoff:2018kzk}, Bischoff, Gualtieri and Zabzine were able to elucidate on the geometric nature of the generalised K\"ahler potential and thereby `solve' the GKS. The key ingredient of their construction involves a holomorphic symplectic groupoid $(X,J,\Go)$, where $M\hookrightarrow X$ serves as the space of units.
From the groupoid one can write down the two complex structures $I_{\pm}$ rather neatly, and more importantly, it was pointed out in \cite{Bischoff:2021ixy} that the embedding $M\hookrightarrow X$ is a much better version of the complexification required in the brane quantisation in \cite{Gukov:2008ve}.

The quantisation via brane picture arises in this setting since one can regard $M$ with its prequantisation line bundle as a brane $B$, while $B_{cc}$ fills the entire $X$. The strings i.e. $\hom(B,B_{cc})$ or $\hom(B_{cc},B_{cc})$ get deformed due to $\Go$. One will also need to deform $\Go\to \Go + F$ for some closed real 2-form (part of the data in solving $I_{\pm}$, see sec. 1 of \cite{Bischoff:2018kzk}. The details of $F$ will not concern us here), which deforms the complex structure of $X$. Both these two deformations can be captured in our current frame work.

Note also $X$ is not just any `complexification' of $M$ as it carries also a symplectic groupoid structure, hence its quantisation also ties into the quantisation via symplectic groupoid approach \cite{Hawkins}. Bischoff and Gualtieri in \cite{Bischoff:2021ixy} implemented this quantisation idea for toric surfaces. In the current work, we cannot reach their level of generality, but for certain GKS on $\BB{P}^2$, we can apply our idea of brane quantisation via minimal resolutions.

In \cite{GKS}, a novel GKS on $\BB{P}^2$ was constructed, whose $I_{\pm}$ are written in terms of elliptic functions. Let us get to the geometric setting that actually concerns us. The holomorphic symplectic groupoid $(X,\Go_0)$ in this story is topologically $T^*\BB{P}^2$, but $\Go_0$ is not the canonical one, it reads in local coordinates
\bea i\Go_0=-\xi_1\xi_2 dx^1\wedge dx^2-\xi_1x^2 dx^1\wedge d\xi_2+dx^1\wedge d\xi_1+dx^2\wedge d\xi_2+\xi_2 x^1 dx^2\wedge d\xi_1-x^1x^2 d\xi_1\wedge d\xi_2\label{hol_symp_P2}.\eea
The notations here are: if $[z^1,z^2,z^3]$ are the homogeneous coordinates, then $x^1=z^1/z^3$, $x^2=z^2/z^3$ are the inhomogeneous coordinates on the open $\{z^3\neq 0\}$. The $\xi_{1,2}$ are the coordinates of the cotangent fibre above $(x^1,x^2)$.
The inverse of $\Go_0$ reads
\bea -i\Pi=\underline{\partial_{\xi_1}\wedge\partial_{x^1}+\partial_{\xi_2}\wedge\partial_{x^2}}-\xi_1\xi_2\partial_{\xi_1}\wedge\partial_{\xi_2}+\xi_1 x^2\partial_{\xi_1}\wedge\partial_{x^2}
-\xi_2 x^1\partial_{\xi_2}\wedge\partial_{x^1}-x^1x^2\partial_{x^1}\wedge\partial_{x^2}\label{Pi_TP2}\eea
The underlined terms correspond to the canonical Poisson structure on a cotangent bundle. The two parts of \eqref{Pi_TP2} can have any relative coefficients and still be Poisson.
We leave it to the reader to show that $\Go_0$ and $\Pi$ are globally defined.

As mentioned earlier, we will need to consider a deformation
\bea \Go_0\to \Go_t=\Go_0+tF.\label{Go+F}\eea
Here $F$ is a closed real 2-form; such non-holomorphic object could have been the kryptonite for our approach that heavily relies on algebraic methods. We will need some workaround to evade this.

To summarise this section and set up the goal: we will perform deformation quantisation on functions of $(X=T^2\BB{P}^2,\Go_0)$ and at the same time take into account the deformation \eqref{Go+F}.

\subsection{The Hochschild Cohomology of $T^*\BB{P}^2$}
We first investigate $\HH^2(X)$ to grasp the infinitesimal deformation of $D^b(X)$.
From the HKR theorem,
\bea \HH^2(X)=H^2(X,{\cal O}_X)\oplus H^1(X,TX)\oplus H^0(X,\wedge^2TX).\nn\eea
Let $V$ be a 3-dim complex vector space and $\BB{P}(V)$ the projective space and $X=T^*\BB{P}(V)$.
The cohomology groups are easy to calculate, some results are collected in the appendix.
In particular, they are graded (by essentially the fibre power). We use $u^2$ as the formal parameter to keep track of this grading.
From prop.\ref{prop_for_HH2}, to the lowest gradings we have
\bea
\HH^2\big|_{u^{-4}}&=&0,\nn\\
\HH^2\big|_{u^{-2}}&=&\BB{C}\oplus\BB{C},\nn\\
\HH^2\big|_{u^0}&=&{\tiny\yng(3)}\oplus {\tiny\yng(3,3)}\oplus 2\,{\tiny\yng(2,1)}.\label{HH2_rep}\eea
Here one of the two summands $\BB{C}$ corresponds to $H^1(X,TX)$ and is the only non-vanishing $H^1$. That $H^1(X,TX)$ be of dimension 1 will be important later when we try to match the deformation \eqref{Go+F}.

Just having the $\HH^2$ is not enough, we will need its explicit cochain representative in the bar complex and in the minimal resolution. Unfortunately, we found no clever way of doing this, and we leave in the appendix a brute force calculation using the minimal resolution.
The upshot is that for each term in \eqref{HH2_rep}, we found an explicit cochain listed in prop.\ref{thm_HH2_0} in the $SL(3)$ representation ${\tiny\yng(3)}$, ${\tiny\yng(3,3)}$ and ${\tiny\yng(2,1)}\times 2$, matching that of \eqref{HH2_rep}.

There remains the problem of finding out how to match the Poisson structure \eqref{Pi_TP2} to $\HH^2$. For this, we try to decompose \eqref{Pi_TP2} into representations of $SL(3)$, and match them to \eqref{HH2_rep}.
\begin{lemma}\label{lem_Pi_rep}
  The first part of $\Pi$ of \eqref{Pi_TP2} transforms in the trivial representation, while the remaining terms transform in ${\tiny\yng(3)}\oplus{\tiny\yng(3,3)}$.
\end{lemma}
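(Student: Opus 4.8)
The plan is to match the two pieces of $-i\Pi$ in \eqref{Pi_TP2} against the $SL(3)=SL(V)$-isotypic decomposition of $\HH^2(X)=H^0(X,\wedge^2 TX)$ recorded in prop.\ref{prop_for_HH2}, using the fibre-power grading by $u$. For the underlined piece this is immediate: $\partial_{\xi_1}\wedge\partial_{x^1}+\partial_{\xi_2}\wedge\partial_{x^2}$ is the bivector inverse of the tautological symplectic form on $T^*\BB{P}(V)$, which is canonically attached to the base $\BB{P}(V)$ and hence invariant under the cotangent lift of the $SL(V)$-action; so it spans a trivial summand. (It has fibre degree $-1$, i.e. sits inside the $u^{-2}$ piece $\BB{C}\oplus\BB{C}$ of \eqref{HH2_rep}, being one of those two lines.)

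For the remaining four terms the first move is a one-line rearrangement: in the chart they equal, up to sign, the \emph{decomposable} bivector $D_1\wedge D_2$ with $D_1=x^1\partial_{x^1}-\xi_1\partial_{\xi_1}$ and $D_2=x^2\partial_{x^2}-\xi_2\partial_{\xi_2}$. I would then identify $D_1,D_2$ as the cotangent lifts to $X$ of two independent vector fields on $\BB{P}(V)$ generating a maximal torus of $PGL(V)$ — e.g. $2D_1+D_2$ and $D_1+2D_2$ lift $\diag(1,0,-1)$ and $\diag(0,1,-1)$ — so that $\bra D_1,D_2\ket_{\BB C}$ is a Cartan subalgebra $\FR h\subset\FR{sl}(V)\subset H^0(X,TX)$ and, all four terms being of fibre degree $0$, $\Pi_{\mathrm{rem}}$ lies in the image of $\wedge^2\FR h\subset\wedge^2\FR{sl}(V)$ inside $\HH^2(X)|_{u^0}$.

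From here it is pure representation theory. One uses the standard decomposition $\wedge^2\FR{sl}(V)\cong\FR{sl}(V)\oplus\opn{Sym}^3 V\oplus\opn{Sym}^3 V^{*}={\tiny\yng(2,1)}\oplus{\tiny\yng(3)}\oplus{\tiny\yng(3,3)}$ (dimensions $8+10+10=28$), in which the Lie bracket $\wedge^2\FR{sl}(V)\to\FR{sl}(V)$ is equivariant and onto, hence an isomorphism on the adjoint summand and zero on the other two by Schur; so $\ker={\tiny\yng(3)}\oplus{\tiny\yng(3,3)}$. Since $D_1,D_2\in\FR h$ commute, $D_1\wedge D_2$ lies in this kernel, and any equivariant map out of ${\tiny\yng(3)}\oplus{\tiny\yng(3,3)}$ into $\HH^2(X)|_{u^0}={\tiny\yng(3)}\oplus{\tiny\yng(3,3)}\oplus2\,{\tiny\yng(2,1)}$ must, again by Schur, land in the ${\tiny\yng(3)}\oplus{\tiny\yng(3,3)}$-isotypic part; so $\Pi_{\mathrm{rem}}$ has no adjoint component. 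That both of ${\tiny\yng(3)}$ and ${\tiny\yng(3,3)}$ actually occur I would get from the $V\leftrightarrow V^{*}$ self-duality of $X=T^{*}\BB{P}(V)$: it is realised by an involution $\tau$ of $X$ intertwining $SL(V)$ and $SL(V^{*})$ via the outer automorphism $\theta:\sigma\mapsto-\sigma^{\mathsf T}$; $\theta$ acts as $-1$ on $\FR h$, so $\tau$ fixes $D_1\wedge D_2$, while $\theta$ (hence $\tau$) interchanges ${\tiny\yng(3)}\leftrightarrow{\tiny\yng(3,3)}$. Writing $\Pi_{\mathrm{rem}}=v+w$ with $v\in{\tiny\yng(3)}$, $w\in{\tiny\yng(3,3)}$, $\tau$-invariance forces $w=\tau_{*}v$, so both vanish together; as $\Pi_{\mathrm{rem}}\neq0$ (its base part $-x^1x^2\,\partial_{x^1}\wedge\partial_{x^2}$ already shows this), neither does, and $\Pi_{\mathrm{rem}}$ generates exactly ${\tiny\yng(3)}\oplus{\tiny\yng(3,3)}$.

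The only genuine labour will be the routine-but-fiddly parts: the rearrangement and the verification that $D_1,D_2$ are honest cotangent lifts spanning a Cartan (which needs the explicit $SL(V)$-action on $T^{*}\BB{P}(V)$), and the bookkeeping that matches the abstract summands of $\wedge^2\FR{sl}(V)$ with the concrete cochain-level $\HH^2(X)|_{u^0}$ of prop.\ref{thm_HH2_0} — in particular confirming the image of $\wedge^2\FR{sl}(V)$ meets both summands, for which the $\tau$-argument is the efficient substitute for a cochain computation. A fully computational alternative, should the $\tau$-setup prove awkward, is to probe $\Pi_{\mathrm{rem}}$ against the two obvious equivariant maps out of $\HH^2(X)|_{u^0}$: restriction to the zero section gives the cubic $z^1z^2z^3\in\opn{Sym}^3 V^{*}={\tiny\yng(3,3)}$, and the same probe composed with the $V\leftrightarrow V^{*}$ duality witnesses the ${\tiny\yng(3)}$-component; everything else is Schur's lemma.
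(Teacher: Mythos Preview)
Your argument is correct and takes a genuinely different route from the paper's.

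The paper passes to the homogeneous coordinates $[z^1,z^2,z^3;w_1,w_2,w_3]$ with $z^iw_i=0$, rewrites $\Pi$ explicitly as in \eqref{Poisson_hol_pre}, and then reads off the representation content by tensor-index bookkeeping: e.g.\ the term $\sum_a z^a\partial_{z^a}\wedge z^{a+1}\partial_{z^{a+1}}$ is a tensor $T^{ab}_{cd}$ symmetric in $ab$, antisymmetric in $cd$, and dualising the lower pair with $\ep^{cde}$ yields a totally symmetric $D^{abe}\in{\tiny\yng(3)}$; the other terms are treated similarly. The homogeneous-coordinate expression is then verified by an explicit coordinate change back to the chart.

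Your approach instead observes the factorisation $\Pi_{\mathrm{rem}}=-D_1\wedge D_2$ with $D_i=x^i\partial_{x^i}-\xi_i\partial_{\xi_i}$ the cotangent lifts of a Cartan of $\FR{sl}(V)$, and then invokes the abstract decomposition $\wedge^2\FR{sl}(V)\simeq{\bf 8}\oplus{\bf 10}\oplus\overline{\bf 10}$ together with the observation that the bracket kills exactly ${\bf 10}\oplus\overline{\bf 10}$. Schur then forces the image in $\HH^2(X)|_{u^0}$ to lie in the ${\tiny\yng(3)}\oplus{\tiny\yng(3,3)}$ isotypic, and the outer-automorphism involution (or your alternative probe via restriction to the zero section) supplies the nonvanishing of both pieces.

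What each buys: the paper's computation is heavier but yields the explicit homogeneous form \eqref{Poisson_hol_pre}, which is what is actually matched against the quiver $2$-cocycles in the subsequent lemmas. Your argument is cleaner and coordinate-free, and the factorisation $\Pi_{\mathrm{rem}}=-D_1\wedge D_2$ is a pleasant structural fact; but it does not by itself produce the homogeneous expression needed downstream. One small slip: in the paper's conventions $V^*\simeq{\tiny\yng(1)}$, so $\opn{Sym}^3V^*={\tiny\yng(3)}$, not ${\tiny\yng(3,3)}$; your restriction-to-zero-section probe therefore witnesses the ${\tiny\yng(3)}$ component directly (and the dual probe the ${\tiny\yng(3,3)}$), but the argument is unaffected.
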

\begin{proof}
  Instead of using $x^{1,2}$ and $\xi_{1,2}$ as local coordinates, which obscures the $SL(3)$ symmetry, we use the homogeneous coordinates $[z^1,z^2,z^3;w_1,w_2,w_3]$ with $z^iw_i=0$.
  We claim that $\Pi$ can be written as
  \bea -i\Pi&=&\frac13\sum_{a=1}^3\Big(-z^a\partial_{z^a}\wedge z^{a+1}\partial_{z^{a+1}}+z^a\partial_{z^a}\wedge(w_{a+1}\partial_{w_{a+1}}-w_{a-1}\partial_{w_{a-1}})-w_a\partial_{w_a}\wedge w_{a+1}\partial_{w_{a+1}}\Big)\nn\\
  &&+\partial_{w_a}\wedge \partial_{z^a}.\label{Poisson_hol_pre}\eea
  The first line transforms in either ${\tiny\yng(3)}$ or ${\tiny\yng(3,3)}$. To see this, note the irreducible tensor representations of $SL(3)$ are $T^{a_1\cdots a_m}_{b_1\cdots b_n}$, with the $a,b$ indices symmetric and $T$ is traceless in any pair of $a,b$ indices (see e.g. \cite{Coleman_fun}). Look at $z^a\partial_{z^a}\wedge z^{a+1}\partial_{z^{a+1}}$, it has tensor components $T^{ab}_{cd}$ with $ab$ symmetric,  $cd$ anti-symmetric and $T^{a,a+1}_{a,a+1}=1$. A pair of anti-symmetric contravariant indices can be traded for a covariant one : $T^{ab}_{cd}\ep^{cde}:=D^{abe}$, one can easily see that $D$ is totally symmetric with $D^{123}=1$, so $D\in {\tiny\yng(3)}$. Similarly look at the middle term that corresponds to a tensor structure $T^{ab}_{cd}$ such that $T^{ba}_{dc}=-T^{ab}_{cd}$ and $T^{a,a+1}_{a,a+1}=1$. Thus either $a,b$ are symmetric $c,d$ anti-symmetric or the other way round. One gets ${\tiny\yng(3)}$ or ${\tiny\yng(3,3)}$ in the two cases.

  Finally for the claim, if in \eqref{Poisson_hol_pre} we change coordinates to
  \bea x^{1,2}=z^{1,2}/z^3,~~~\xi_{1,2}=w_{1,2}z^3,~~~\gl=z^3,~~~\mu=z^iw_i,\nn\eea
  then we get
  \bea \eqref{Poisson_hol_pre}&=&-x^1x^2\partial_{x^1}\wedge \partial_{x^2}-\partial_{x^i}\wedge \partial_{\xi_i}+x^1\xi_2\partial_{x^1}\wedge \partial_{\xi_2}-x^2\xi_1\partial_{x^2}\wedge\partial_{\xi^1}-\xi_1\xi_2\partial_{\xi_1}\wedge\partial_{\xi_2}\nn\\
&&+\gl\partial_{\gl}\wedge\partial_{\mu}+\frac13\gl\partial_{\gl}\wedge\big(-x^1\partial_{x^1}+x^2\partial_{x^2}
+\xi_1\partial_{\xi_1}-\xi_2\partial_{\xi_2}\big).\nn\eea
When restricted to functions invariant under the rescaling
\bea [z^1,z^2,z^3;w_1,w_2,w_3]\to [z^1\gl,z^2\gl,z^3\gl;w_1/\gl,w_2/\gl,w_3/\gl]\nn\eea
then \eqref{Poisson_hol_pre} equals $\Pi$.
\end{proof}

What about the deformation \eqref{Go+F}. As $F$ is not holomorphic, we need to reach a compromise to incorporate $F$ and at the same time still be able to use our algebraic approach. But the main purpose of $F$ is to deform the complex structure on $X$. Indeed $\iota_v\Go=0$ for any $v$ that is anti-holomorphic, so we define a new complex structure by declaring $v$ anti-holomorphic if $\iota_v(\Go+tF)=0$. That $\Go,F$ are closed means that the new complex structure is integrable. Also recall that the infinitesimal deformations of complex structures are encoded by $H^1(X,TX)$. In the following, we will treat $F$ as merely serving to deform the complex structure, and thereby match it with $H^1(X,TX)$ and subsequently to $\HH^2(X)$.
\begin{lemma}\label{lem_Kodaira_Spencer}
  The cohomology $H^1(X,TX)$ is 1-dim for $X=T^*\BB{P}(V)$. It corresponds to the deformation from $w_iz^i=0$ to $w_iz^i=t$.
\end{lemma}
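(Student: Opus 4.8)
The plan is in two stages: first pin down $H^1(X,T_X)$ by pushing everything down along the affine projection $\pi\colon X=T^*\BB{P}(V)\to\BB{P}(V)=:P$, and then exhibit a nonzero \v{C}ech representative coming from the family $\{w_iz^i=t\}$; by a dimension count these must then agree.

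First I would use that $\pi_*\mathcal{O}_X=\bigoplus_{k\ge0}S^kT_P$ and, $\pi$ being affine, $H^\bullet(X,\mathcal{F})=H^\bullet(P,\pi_*\mathcal{F})$ for quasi-coherent $\mathcal{F}$. Pushing down the vertical--horizontal sequence $0\to\pi^*\Omega^1_P\to T_X\to\pi^*T_P\to0$ weight by weight in the fibre grading (the one tracked by $u^2$): in fibre-weight $w\le-2$ the push-down vanishes; for $w=-1$ only the vertical summand survives and equals $\Omega^1_P$; for $w\ge0$ one gets an extension of $S^wT_P\otimes T_P$ by $S^{w+1}T_P\otimes\Omega^1_P$. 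For $w=0$ this extension has $H^1=0$ (since $H^1(P,T_P)=0$ and $H^1(P,\Omega^1_P\otimes T_P)=0$, i.e. $T_{\BB{P}^2}$ is infinitesimally rigid), and for $w\ge1$ the Bott-formula computations on $\BB{P}^2$ carried out in the appendix (prop.~\ref{prop_for_HH2}) again give $H^1=0$. Hence $H^1(X,T_X)=H^1(P,\Omega^1_P)=\BB{C}$, sitting in the weight labelled $u^{-2}$ in \eqref{HH2_rep}.

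Second, I would realise $X$ as $\{(z,w)\in(V\setminus0)\times V^*\mid w_iz^i=0\}/\BB{C}^*$ with $\BB{C}^*$ acting with weights $(1,-1)$, and let $X_t:=\{w_iz^i=t\}/\BB{C}^*$, a flat family over $\BB{A}^1_t$ with $X_0=X$. On the standard cover $U_i=\{z^i\ne0\}$ of $P$, pulled back to $\mathcal{U}_i\subset X$, the family is trivialised by the slice $z^i=1$, the fibre coordinate dual to $z^i$ being solved for from $w_jz^j=t$; so the only $t$-dependence of the transition between $\mathcal{U}_i$ and $\mathcal{U}_j$ is an affine shift of a single fibre coordinate. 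Differentiating at $t=0$ one finds, e.g. on $\mathcal{U}_3\cap\mathcal{U}_1$, the \v{C}ech cochain $v_{31}=\tfrac{1}{x^1}\partial_{\xi_1}$ with $x^1=z^1/z^3$; under the canonical identification of the vertical tangent bundle with $\pi^*\Omega^1_P$ (so $\partial_{\xi_a}\leftrightarrow dx^a$) this is $\pi^*\,d\log(z^1/z^3)$, and symmetrically $v_{ij}=\pi^*\,d\log(z^i/z^j)$ --- the standard \v{C}ech cocycle for $c_1(\mathcal{O}_P(1))\in H^1(P,\Omega^1_P)$, which is nonzero.

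Putting the two stages together: the Kodaira--Spencer class of $\{w_iz^i=t\}$ lands in $H^1(X,T_X)\cong H^1(P,\Omega^1_P)=\BB{C}$ and equals $c_1(\mathcal{O}_P(1))\ne0$, hence spans, so $w_iz^i=0\rightsquigarrow w_iz^i=t$ is the (up to scale) unique nontrivial first-order deformation of $X$ --- which is the statement. A global cross-check: for $t\ne0$ the matrix $M^i{}_j=z^iw_j/t$ is a rank-one idempotent in $\FR{gl}(V)$ with trace $1$, so $X_t\cong\{(\ell,H)\in\BB{P}(V)\times\BB{P}(V^*)\mid\ell\not\subset H\}$, the complement of an ample divisor and hence affine --- visibly not $X_0=T^*\BB{P}^2$. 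The only real obstacle in this argument is the weight-$\ge1$ Bott bookkeeping on $\BB{P}^2$ needed for the dimension count, which is precisely what the appendix supplies; granting it, the rest is the short explicit computation sketched above.
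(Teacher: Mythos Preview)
Your proof is correct and follows essentially the same two-step strategy as the paper: compute $H^1(X,T_X)$ by pushing down to $\BB{P}(V)$ and then exhibit an explicit \v{C}ech representative tied to the $w_iz^i=t$ family. A few comparative remarks are worth making. First, you work with the vertical--horizontal short exact sequence $0\to\pi^*\Omega^1_P\to T_X\to\pi^*T_P\to 0$, whereas the paper simply writes $T_X\simeq\pi^*(T_P\oplus T^*_P)$; your treatment is more honest, and the connecting maps indeed vanish so the answers agree. Second, for the horizontal piece at weight $w\ge1$ you need $H^1(\BB{P}(V),S^wT\otimes T)=0$, but prop.~\ref{prop_for_HH2} only records the $S^nT\otimes T^*$ case; the $T$-case follows by the same Euler-sequence argument, so this is a referencing slip rather than a gap. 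Third, your second stage runs in the opposite direction to the paper's: you differentiate the family $\{w_iz^i=t\}$ and read off the Kodaira--Spencer cocycle, while the paper posits the cocycle $\tfrac{1}{x^2}\partial_{\xi_2}$ on $U_{23}$ and traces it back to the deformed transition $\sum w_iz^i=-\ep$. These are inverse computations and both land on (a multiple of) the standard \v{C}ech representative of $c_1(\mathcal{O}_P(1))\in H^1(P,\Omega^1_P)$; your explicit identification with $d\log(z^i/z^j)$ and the observation that $X_t\cong\{(\ell,H):\ell\not\subset H\}$ is affine for $t\ne0$ are pleasant extras not in the paper.
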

\begin{proof}
  That $H^1(X,TX)$ is 1-dim is proved in prop.\ref{prop_for_HH2} where we show that $H^1(\BB{P}(V),T\otimes T^*\BB{P}(V))=\BB{C}$ is the only contribution to $H^1(X,TX)$.

  We also recall that for a hyper-surface, then $H^1(X,TX)$ is realised as a deformation of the equation defining $X$. For our $X$, it can be realised as follows. Consider $(V\backslash\{0\})\times V^*$ with coordinates $z^{1,2,3},w_{1,2,3}$ such that $w_iz^i=0$. One then takes a quotient with respect to a $\BB{C}^*$ action $[\vec z,\vec w]\to [\gl \vec z,\gl^{-1}\vec w]$. Due to the presence of the quotient, we need to go to C\v{e}ch covers and find explicit C\v{e}ch representative of $H^1(X,TX)$.

  We pick the opens $U_i=\{z^i\neq 0\}$ and $U_{ij}=U_i\cap U_j$, then $H^1(X,TX)$ is represented as a vector field locally defined on $U_{ij}$ satisfying a cocycle condition. Explicitly set
  \bea V\big|_{U_{23}}=\frac{1}{x^2}\frac{\partial}{\partial \xi_2}\nn\eea
  where $\xi_{1,2}=w_{1,2}z^3$ and $x^{1,2}=z^{1,2}/z^3$. One makes a cyclically symmetric choice in other $U_{ij}$'s and one can check that this gives the unique representative of $H^1(X,TX)$. These vector fields defined on intersections $U_{ij}$ deforms the transition function $U_i\to U_j$. For example, in $U_2$, pick local coordinates as $y^{1,3}=z^{1,3}/z^2$ and $\eta_{1,3}=w_{1,3}z^2$, with the transition function
  \bea \eta_3=-x^2(x^1\xi_1+x^2\xi_2),~~\eta_1=\xi_1x^2,~~~y^1=x^1/x^2,~~~y^3=1/x^2,~~~V|_{U_{23}}=(1/x^2)\partial_{\xi_2}={\color{blue}-(1/y^3)}\partial_{\eta_3}.\nn\eea
  The deformed transition function reads
  \bea \eta_3=-x^2(x^1\xi_1+x^2\xi_2)+\ep{\color{blue}(-1/y^3)}.\nn\eea
  Rewrite the equation in terms of $z,w$: $w_3z^2=-(z^2/z^3)(z^1w_1+z^2w_2)+\ep(-z^2/z^3)$, we get $\sum w_iz^i=-\ep$.
\end{proof}

\subsection{The Explicit Cocycle Representing $\HH^2$}
In the appendix sec.\ref{sec_ECoHUtMR} we obtain the explicit cochain representing $\HH^2(T^*\BB{P}^2)$
\begin{proposition}\label{thm_HH2_0}
  At order $u^{-2}$, $\HH^2(T^*\BB{P}^2)$ is of dimension 2, whose generators are denoted as $g_{\emptyset,0}$ and $g_{\emptyset,1}$.
  They evaluate on $\SF{T}_2$ in \eqref{T_1234} to give

\bea \begin{array}{|c|c|c|c|c|}
  \hline
         & [z\cdotp w]_{11} & [w\cdotp z]_{11} & [z\cdotp w]_{22} & [w\cdotp z]_{00} \\
         \hline
         g_{\emptyset,0} & e_1 & -e_1 & \frac13e_2 & -\frac13e_0 \\
\hline
g_{\emptyset,1} & e_1 & e_1 & e_2 & e_0 \\
\hline
\end{array},\nn\eea
where $[z\cdotp w]_{11}$ is short for $e_1\otimes z\cdotp w\otimes e_1$.

At order $u^0$, $\HH^2(T^*\BB{P}^2)$ has four components, arranged according to the representation under $SL(3)$: $g_{\tiny\yng(3)}$, $g_{\tiny\yng(3,3)}$, $g_{{\tiny\yng(2,1)},s}$ and $g_{{\tiny\yng(2,1)},a}$. They evaluate on $\SF{T}_2$ to give
\bea &&\begin{array}{|c|c|c|c|c|c|c|}
  \hline
         & [z\cdotp w]_{11} & [w\cdotp z]_{11} & [z\cdotp w]_{22} & [w\cdotp z]_{00} & [z^{[i}z^{j]}]_{20} & [w_{[i}w_{j]}]_{02} \\
         \hline
{\tiny\yng(3)} & 0 & 0 & 0 & 0 & \ep^{ijr}D_{rkl}[z^kz^l]_{20} & 0 \\
\hline
{\tiny\yng(3,3)} & 0 & 0 & 0 & 0 & 0 & -\ep_{ijr}\bar D^{rkl}[w_kw_l]_{02} \\
\hline
{\tiny\yng(2,1)} & U^i_j[w_iz^j]_{11} & -U^i_j[w_iz^j]_{11} & X^i_j[z^jw_i]_{22} & -X^i_j[w_iz^j]_{00} & 0 & 0\\
\hline
\end{array},\nn\\
&&\begin{array}{|c|c|}
\hline
& [w_i,z^j]_{tf,11} \\
\hline
 {\tiny\yng(3)}   &  \ep^{jlr}D_{rik}[w_lz^k]_{11} \\
\hline
 {\tiny\yng(3,3)} &  \ep_{rik}\bar D^{rjl}[w_lz^k]_{11} \\
\hline
{\tiny\yng(2,1)} & (3X-U)^k_i[w_kz^j]_{11}+(3X-U)^j_l[w_iz^l]_{11}-(2X-\frac23U)^k_l\gd^j_i[w_kz^l]_{11} \\
\hline
\end{array}\nn\eea
where $\bar D^{ijk},D_{ijk}$ are totally symmetric and $X^i_j,U^i_j$ are traceless tensors.
\end{proposition}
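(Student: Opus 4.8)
The plan is to compute $\HH^2(T^*\BB{P}^2)$ directly on the minimal resolution \eqref{min_resolution_app} with the $\SF{T}_\bullet$ terms \eqref{T_1234}, rather than via the HKR isomorphism, and to pin down the cochain representatives grading by grading. A $2$-cochain is an $S$-bimodule map $g\colon\Gl\otimes_S\SF{T}_2\otimes_S\Gl\to\Gl$, i.e.\ an assignment $r\mapsto g(1\otimes r\otimes 1)\in e_i\Gl e_j$ to each of the six generators of $\SF{T}_2$ in \eqref{T_1234}, and the differential $\gd_3\colon\B^2\to\B^3$ is dual to $\partial_3$. Using the description of $\partial_3$ given just after \eqref{T_1234} — write a generator $t\in\SF{T}_3$ in its two presentations $t=\sum_\alpha x_\alpha r_\alpha=\sum_\beta r'_\beta x'_\beta$ with $x_\alpha,x'_\beta\in Q_1$ and the $r_\alpha,r'_\beta$ among the chosen generators of $\SF{T}_2$ — the cocycle condition $\gd_3 g=0$ collapses to the finite list of identities $\sum_\alpha x_\alpha\,g(1\otimes r_\alpha\otimes 1)=\sum_\beta g(1\otimes r'_\beta\otimes 1)\,x'_\beta$ in $\Gl$, one for each of the four families of $\SF{T}_3$ generators. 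The relevant coboundaries come from $1$-cochains $\Gl\otimes_SQ_1\otimes_S\Gl\to\Gl$ composed with $\partial_2$. Throughout one exploits the $SL(3)$-action (induced from $GL(V)$ acting on the arrows $z^i,w_i$), so it suffices to search among $SL(3)$-equivariant cochains, which is what makes the linear algebra manageable.

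I would first treat the grading $u^{-2}$ piece. There $g$ must lower path length by two, hence can be nonzero only on the length-two relations $e_1(z\cdot w\oplus w\cdot z)e_1$, $e_2(z\cdot w)e_2$, $e_0(w\cdot z)e_0$, with values that are scalar multiples of the idempotents; imposing the four $\SF{T}_3$ identities on these four scalars cuts the solution space down to the two-dimensional family spanned by $g_{\emptyset,0}$ and $g_{\emptyset,1}$ in the table. No nonzero combination is a coboundary for the trivial reason that $\gd_2$ of a $1$-cochain is supported in length $\ge 0$ while $g_{\emptyset,\bullet}$ have grading $-2$; so $\HH^2|_{u^{-2}}=\BB{C}\oplus\BB{C}$, matching \eqref{HH2_rep}. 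Which combinations are the Kodaira--Spencer class $H^1(X,TX)$ and which is the symplectic (bivector) direction is then read off from Lemmas~\ref{lem_Kodaira_Spencer} and \ref{lem_Pi_rep}.

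The grading $u^0$ piece is the same procedure but heavier. A length-preserving $SL(3)$-equivariant $2$-cochain is parametrised by invariant tensors entering the six $\SF{T}_2$-slots: a totally symmetric $D_{ijk}$ (for the $e_2(z^{[i}z^{j]})e_0$ slot), a totally symmetric $\bar D^{ijk}$ (for the $e_0(w_{[i}w_{j]})e_2$ slot), and traceless mixed tensors in the $z\cdot w$, $w\cdot z$ and $[w_i,z^j]_{\mathrm{tf}}$ slots. Solving $\gd_3 g=0$ on the four $\SF{T}_3$ families forces the solution space to be precisely the four-tensor family in the statement: the ``scalar'' directions $g_{\tiny\yng(3)}$ and $g_{\tiny\yng(3,3)}$ carried by $D$ and $\bar D$, and a two-parameter family carried by two traceless tensors $U^i_j,X^i_j$ (the two copies of $\tiny\yng(2,1)$, the $s$ and $a$ cocycles being specific combinations of $U$ and $X$), whose value on the trace-free commutator is \emph{forced} to be $(3X-U)^k_i[w_kz^j]_{11}+(3X-U)^j_l[w_iz^l]_{11}-(2X-\tfrac23U)^k_l\gd^j_i[w_kz^l]_{11}$. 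Identifying this precise contraction as the unique one compatible with $\gd_3 g=0$ is the technical heart of the argument. Grading again excludes coboundaries, and comparing the output $SL(3)$-content $\tiny\yng(3)\oplus\tiny\yng(3,3)\oplus 2\,\tiny\yng(2,1)$ with the independent HKR computation in Proposition~\ref{prop_for_HH2} together with Lemma~\ref{lem_Pi_rep} confirms that these cochains span all of $\HH^2|_{u^0}$.

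The main obstacle is not conceptual but combinatorial: the $\SF{T}_3$ generators in \eqref{T_1234} are genuine ``relations among relations,'' each a sum of several terms that mix different $\SF{T}_2$ generators, so writing out both expansions $t=\sum x_\alpha r_\alpha=\sum r'_\beta x'_\beta$ (reducing modulo $R$ throughout) and then contracting every $SL(3)$ index consistently — while propagating the definitions of $D,\bar D,U,X$ through all the $\partial_3$ identities — is where essentially all the labour lies. The practical trick is to decompose the relation modules into $SL(3)$ isotypic components \emph{before} imposing $\gd_3 g=0$, so that the linear system splits into small blocks; this is the brute-force bookkeeping that I would relegate to the appendix.
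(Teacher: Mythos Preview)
Your overall strategy matches the paper's: work on the minimal resolution, parametrise $2$-cochains by $SL(3)$-equivariant tensors on the generators of $\SF{T}_2$, and impose the four $\SF{T}_3$ identities. The $u^{-2}$ argument is fine and essentially identical to what the paper does.

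However, there is a genuine gap in your $u^{0}$ argument. You write ``Grading again excludes coboundaries''. This is false at $u^{0}$: a $1$-cochain such as $[z^i]_{10}\mapsto T^i_j[z^j]_{10}$ (or its three cousins on $[z^i]_{21}$, $[w_i]_{01}$, $[w_i]_{12}$) preserves path length, and so produces a nontrivial $2$-coboundary sitting in the same grading. The paper computes these four coboundaries explicitly and uses them in two essential ways. First, the singlet cocycle $[w_i,z^j]_{\rm tf}\mapsto[w_iz^j]_{11}$ survives the $\gd_3$-check but is exact, so must be removed by hand. Second, in the ${\bf 8}$ sector the cocycle equations alone leave a larger parameter space $(U,V,X,Y,S,A,P,Q)$ subject only to the constraints $-\tfrac13 A+\tfrac23(P+Q)=0$ and $U+V-X-Y=0$; it is only after quotienting by the four coboundary directions that one may gauge $P=Q=A=0$, $V=-U$, $Y=-X$, and $S=10X-\tfrac{10}{3}U$, arriving at the two-parameter family in the statement. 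In particular, the specific form of the $[w_i,z^j]_{\rm tf}$ entry is not ``forced'' by $\gd_3 g=0$ alone, as you claim, but by $\gd_3 g=0$ together with a choice of representative modulo these coboundaries.

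So the missing step is exactly the coboundary analysis at $u^0$; without it your cocycle count is too large and the explicit representatives in the table are underdetermined.
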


We have shown in lem.\ref{lem_Pi_rep} that $\Pi$ transforms in the trivial representation and in ${\bf 10}\oplus{\bf 10}^*$. Thus we make a table of the evaluation of the following 2-cocycle
\bea g=ag_{\tiny\yng(3)}+bg_{\tiny\yng(3,3)}+cg_{\emptyset,0}+tg_{\emptyset,1}\label{used_V}\eea
with $\bar D^{123}=1=D_{123}$.
\begin{lemma}
On the tips of the reduction system, $g$ evaluates to
 \bea 1:&&\bra z^1z^2\ket_{20}\to(2a)z^2z^1,~~~\bra z^2z^3\ket_{20}\to(2a)z^3z^2,~~~\bra z^1z^3\ket_{20}\to(-2a)z^3z^1\nn\\
2:&&\bra w_1w_2\ket_{02}\to(-2b)w_2w_1,~~~\bra w_2w_3\ket_{02}\to(-2b)w_3w_2,~~~\bra w_1w_3\ket_{02}\to(2b)w_3w_1\nn\\
3:&&\bra z^1w_2\ket_{11}\to(b-a)w_2z^1,~~~\bra z^1w_3\ket_{11}\to(a-b)w_3z^1,~~~\bra z^2w_3\ket_{11}\to(b-a)w_3z^2\nn\\
4:&&\bra z^2w_1\ket_{11}\to(a-b)w_1z^2,~~~\bra z^3w_1\ket_{11}\to(b-a)w_1z^3,~~~\bra z^3w_2\ket_{11}\to(a-b)w_2z^3\nn\\
5:&&\bra z^1w_1\ket_{11}\to(a+b)w_3z^3+(-a-b)w_2z^2+\frac{2c}{3}\nn\\
6:&&\bra z^2w_2\ket_{11}\to(a+b)w_1z^1+(-a-b)w_3z^3+\frac{2c}{3}\nn\\
7:&&\bra z^3w_3\ket_{11}\to(-a-b)w_1z^1+(a+b)w_2z^2+(t-\frac{c}{3})\nn\\
8:&&\bra w_3z^3\ket_{11}\to(t-c),~~~\bra w_3z^3\ket_{00}\to(t-\frac{c}{3}),~~~\bra z^3w_3\ket_{22}\to(\frac{c}{3}+t).\label{2-cocycle}\eea
\end{lemma}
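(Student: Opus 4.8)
\emph{Plan.} The lemma just transports prop.\ref{thm_HH2_0} from the minimal resolution onto the Chouhy--Solotar resolution $\P_{\sbullet}$: there the cocycle \eqref{used_V} is recorded through its values on the generators of $\SF{T}_2$ in \eqref{T_1234}, while here we want it on the elements $1\otimes s\otimes 1$ with $s$ an essential tip of \eqref{rel_nec}. So the steps are: (i) match the $18$ essential tips of \eqref{rel_nec} with the $18$ generators of $\SF{T}_2$; (ii) write each relation $s-f_s$ as a class in $\SF{T}_2$; (iii) substitute prop.\ref{thm_HH2_0} and simplify. For (i): by remark \ref{rmk_contract} -- whose $T^*\BB{P}^2$ incarnation is lem.\ref{prop_diff_TP2}, the superfluous ambiguities of $\SF{A}_2$ cancelling the superfluous tips \eqref{rel_wzw}--\eqref{new_rel} -- the resolution $\P_{\sbullet}$ contracts onto the minimal resolution, so in degree $2$ the induced projection sends the essential tips onto a basis of $\SF{T}_2$; hence $g(e_i\otimes s\otimes e_j)$ is $g$ evaluated on the $\SF{T}_2$-class of $s-f_s$, and only the essential tips contribute to the table.

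For (ii)--(iii) the dictionary is immediate from \eqref{rel_nec} and \eqref{T_1234}. The $i<j$ relations $z^iz^j-z^jz^i$ and $w_jw_i-w_iw_j$ are the antisymmetric generators $e_2(z^{[i}z^{j]})e_0$ and $e_0(w_{[i}w_{j]})e_2$; for $i\neq j$ the relation $z^iw_j-w_jz^i$ equals $-e_1([w_j,z^i]_{\rm t.f.})e_1$ (the trace term drops out since $i\neq j$); the relations $w_3z^3+w_1z^1+w_2z^2$ in strand $1$, the same in strand $0$, and $z^3w_3+z^1w_1+z^2w_2$ in strand $2$ are respectively $e_1(w\cdotp z)e_1$, $e_0(w\cdotp z)e_0$ and $e_2(z\cdotp w)e_2$; and the three remaining strand-$1$ relations $z^aw_a-w_az^a$ ($a=1,2$) and $z^3w_3+w_1z^1+w_2z^2$ decompose -- computed in the free module of relations, \emph{not} modulo the quiver relations -- as combinations of $e_1(z\cdotp w\oplus w\cdotp z)e_1$ with the traceless diagonal part of $e_1([w_i,z^j]_{\rm t.f.})e_1$, e.g.\ $z^3w_3+w_1z^1+w_2z^2=\tfrac13\,z\cdotp w+\tfrac23\,w\cdotp z-[w_3,z^3]_{\rm t.f.}$. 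By lem.\ref{lem_Pi_rep} the cocycle \eqref{used_V} has no ${\tiny\yng(2,1)}$-part, so in prop.\ref{thm_HH2_0} one sets $X=U=0$ and $\bar D^{123}=D_{123}=1$ (totally symmetric); then $g_{{\tiny\yng(3)}}$ and $g_{{\tiny\yng(3,3)}}$ are supported only on the antisymmetric generators and on $[w_i,z^j]_{\rm t.f.}$ -- in particular they kill $z\cdotp w$ and $w\cdotp z$, which is why the diagonal lines $5$--$7$ come out with the coefficient $a+b$ -- while $g_{\emptyset,0}$ and $g_{\emptyset,1}$ are the grading-$(-2)$ pieces producing the scalars $\tfrac{2c}{3}$, $t-c$, $t\pm\tfrac c3$. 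Substituting the classes from (ii), contracting the $\ep D$, $\ep\bar D$ symbols, adding the $c,t$ contributions, and writing the answer in $\Gl$ (reducing e.g.\ $z^1z^2+z^2z^1\mapsto 2z^2z^1$) yields the eight lines after grouping by $a,b,c,t$.

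The one genuinely delicate point is step (ii) on the $11$-strand: the $\SF{T}_2$-classes of $z^aw_a-w_az^a$ and $z^3w_3+w_1z^1+w_2z^2$ must be worked out in the free $S$-bimodule of relations, keeping the correct trace-free split with its $\tfrac13$, $\tfrac23$ coefficients and the sign of $[w_j,z^i]$ relative to the tip $z^iw_j\mapsto w_jz^i$ as well as the eventual reductions in $\Gl$; everything else is mechanical bookkeeping. A convenient self-check is that the $g$ so obtained is a cocycle, i.e.\ $\gd g=0$, equivalently $g$ annihilates $\partial\SF{A}_2$ as spelled out in lem.\ref{prop_diff_TP2}.
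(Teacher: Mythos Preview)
Your proposal is correct and follows the same approach as the paper: identify each essential tip $s$ with the $\SF{T}_2$-class of $s-f_s$, then read off the value of $g$ from the table in prop.\ref{thm_HH2_0} with $X=U=0$ and $D_{123}=\bar D^{123}=1$. The paper's proof consists of exactly this, illustrated on the two representative examples $\bra z^1z^2\ket_{20}$ and $\bra z^1w_1\ket_{11}$; your decomposition $z^3w_3+w_1z^1+w_2z^2=\tfrac13\,z\cdotp w+\tfrac23\,w\cdotp z-[w_3,z^3]_{\rm t.f.}$ is precisely the analogue of the paper's trace/trace-free split for line~5, and your emphasis on doing this split in the free module of relations (not in $\Gl$) is the ``subtlety'' the paper alludes to. One small remark: the absence of the ${\tiny\yng(2,1)}$-component in $g$ is simply the definition \eqref{used_V}, not a consequence of lem.\ref{lem_Pi_rep}; the latter only motivates that choice.
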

We stress that the lhs of every item is a tip of a reduction system, while the rhs lives in $\Gl$ (i.e. all relations in $\Gl$ are enforced). Also we did not write the idempotents on the rhs as they concur with the lhs.
\begin{proof}
  The proof is just an application of the table in \eqref{thm_HH2_0}. We will give some examples of how \eqref{2-cocycle} is obtained, as there is some subtlety.
  For the tip $\bra z^1z^2\ket _{20}$, recall that it represents the relation $[z^{[1}z^{2]}]_{20}=e_2\otimes z^{[1}z^{2]}\otimes e_0$ in \eqref{T_1234}, looking at the table, it evaluates to
  \bea [z^{[i}z^{j]}]_{20}\stackrel{g}{\to} \ep^{ijr}C_{rkl}[z^kz^l]_{20}=2\ep^{123}D_{312}[z^1z^2]_{20}=2[z^2z^1]_{20}.\nn\eea
  Take another example $[z^1w_1]_{11}$, it represents the relation $e_1\otimes [z^1,w_1]\otimes e_1=e_1\otimes [z^1,w_1]_{tf}\otimes e_1+(1/3)e_1\otimes (z^iw_i-w_iz^i)\otimes e_1$, where we have separated the trace free and trace part. It evaluates to
  \bea [z^1w_1-w_1z^1]_{11}&\stackrel{g}{\to}&-a\ep^{1lr}D_{r1k}[w_lz^k]_{11}-b\ep_{r1k}\bar D^{r1l}[w_lz^k]_{11}+\frac13(c+t)-\frac13(-c+t)\nn\\
  &=&-a[w_2z^2-w_3z^3]_{11}-b[-w_3z^3+w_2z^2]_{11}+\frac23c.\nn\eea
\end{proof}
The 2-cocyle evaluated on the remaining superfluous tips listed in Eqs.\ref{rel_wzw}, \ref{rel_zwz} and \ref{new_rel} are completely fixed by the list above, but we record these for the convenience of the reader
\bea
9:&&\bra z^1w_1z^2\ket_{21}\to(-2a-2b)z^2w_2z^2+(\frac{2c}{3})z^2\nn\\
10:&&\bra z^1w_2z^2\ket_{21}\to(-2a-2b)z^1w_1z^1+(-\frac{2c}{3})z^1\nn\\
11:&&\bra z^1w_3z^2\ket_{21}\to(4a-2b)z^2w_3z^1\nn\\
12:&&\bra z^1w_1z^3\ket_{21}\to(-2a-2b)z^3w_2z^2+(\frac{2c}{3})z^3+(-2a-2b)z^3w_1z^1\nn\\
13:&&\bra z^1w_2z^3\ket_{21}\to(2b-4a)z^3w_2z^1\nn\\
14:&&\bra z^2w_1z^3\ket_{21}\to(4a-2b)z^3w_1z^2\nn\\
15:&&\bra z^2w_2z^3\ket_{21}\to(2a+2b)z^3w_1z^1+(\frac{2c}{3})z^3+(2a+2b)z^3w_2z^2\nn\eea
\bea
16:&&\bra w_1z^1w_2\ket_{12}\to(2a+2b)w_2z^2w_2+(-\frac{2c}{3})w_2\nn\\
17:&&\bra w_1z^2w_2\ket_{12}\to(2a+2b)w_1z^1w_1+(\frac{2c}{3})w_1\nn\\
18:&&\bra w_1z^3w_2\ket_{12}\to(2a-4b)w_2z^3w_1\nn\\
19:&&\bra w_1z^1w_3\ket_{12}\to(-\frac{2c}{3})w_3+(2a+2b)w_3z^1w_1+(2a+2b)w_3z^2w_2\nn\\
20:&&\bra w_1z^2w_3\ket_{12}\to(4b-2a)w_3z^2w_1\nn\\
21:&&\bra w_2z^1w_3\ket_{12}\to(2a-4b)w_3z^1w_2\nn\\
22:&&\bra w_2z^2w_3\ket_{12}\to(-\frac{2c}{3})w_3+(-2a-2b)w_3z^2w_2+(-2a-2b)w_3z^1w_1\nn\\
23:&&\bra w_3w_1z^3\ket_{01}\to(t-c)w_1+(4b)w_2w_1z^2+(2b)w_1w_1z^1\nn\\
24:&&\bra w_3w_2z^3\ket_{01}\to(t-c)w_2+(-2b)w_2w_1z^1+(-2b)w_2w_2z^2\label{2-cocycle_more}.\eea

\begin{lemma}
  The Poisson structure (recall that $\Pi_s$ is Poisson for any $s$)
  \bea -i\Pi_s=s(\partial_{\xi_1}\wedge\partial_{x^1}+\partial_{\xi_2}\wedge\partial_{x^2})-\xi_1\xi_2\partial_{\xi_1}\wedge\partial_{\xi_2}+\xi_1 x^2\partial_{\xi_1}\wedge\partial_{x^2}
 -\xi_2 x^1\partial_{\xi_2}\wedge\partial_{x^1}-x^1x^2\partial_{x^1}\wedge\partial_{x^2}\nn\eea
  correspond to the parameters
\bea a=-\frac16,~~b=\frac16,~~~c=-\frac{3s}{2}.\nn\eea
Finally, the parameter $t$ corresponds to the Kodaira-Spencer deformation $w_iz^i=0$ $\to$ $w_iz^i=t$.
\end{lemma}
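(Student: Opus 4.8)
The plan is to exploit the $SL(3)$-equivariance of every object involved, which reduces the identification to fixing three numbers. By Lemma~\ref{lem_Pi_rep} the bivector $-i\Pi_s$ splits into its canonical (underlined) part $s(\partial_{\xi_1}\wedge\partial_{x^1}+\partial_{\xi_2}\wedge\partial_{x^2})$, which sits in the trivial representation, plus a remainder lying entirely in ${\tiny\yng(3)}\oplus{\tiny\yng(3,3)}$; in particular $\Pi_s$ has \emph{no} ${\tiny\yng(2,1)}$-component, so in the general cocycle of Proposition~\ref{thm_HH2_0} the traceless tensors $U^i_j,X^i_j$ must be set to zero, and only the combination \eqref{used_V}, $g=ag_{\tiny\yng(3)}+bg_{\tiny\yng(3,3)}+cg_{\emptyset,0}+tg_{\emptyset,1}$, can represent $-i\Pi_s$ together with a Kodaira--Spencer term. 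Of the two trivial-representation classes at order $u^{-2}$ (see \eqref{HH2_rep}), the class $g_{\emptyset,1}$ is the one whose table in Proposition~\ref{thm_HH2_0} evaluates \emph{uniformly} to the idempotents on the relations $\sum_i z^iw_i$ and $\sum_i w_iz^i$; this is exactly the cochain that shifts those relations by a constant, hence represents the deformation $w_iz^i=0\rightsquigarrow w_iz^i=t$ that Lemma~\ref{lem_Kodaira_Spencer} identifies with the generator of $H^1(X,TX)$. So $t$ is the Kodaira--Spencer parameter and contributes nothing to $\Pi_s$, while $g_{\emptyset,0}$ --- carrying the asymmetric $\pm1$, $\pm\tfrac13$ pattern characteristic of a genuine bivector --- is the trivial-representation part of the holomorphic Poisson tensor and enters with a coefficient $c$ proportional to the size $s$ of the canonical term.

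It remains to pin $a,b,c$ down. I would work in the homogeneous-coordinate presentation \eqref{Poisson_hol_pre}, with its $\partial_{w_a}\wedge\partial_{z^a}$ part rescaled by $s$ so as to represent $\Pi_s$ rather than $\Pi_1$, and use that the quiver arrows $z^i,w_i$ \emph{are} the homogeneous coordinates, so a length-two loop such as $w_iz^j$ is a genuine weight-zero function on $X$. On one side one computes a handful of Poisson brackets $\{z^i,z^j\}_{\Pi_s}$, $\{w_i,w_j\}_{\Pi_s}$, $\{z^i,w_j\}_{\Pi_s}$ directly from \eqref{Poisson_hol_pre}; on the other side one computes the ``commutators'' $gG_2([u|v]-[v|u])$ using the recursion of Lemma~\ref{lem_G2} together with the evaluation table \eqref{2-cocycle}, exactly as in the $A_1$ warm-up where one found $gG_2([a|b]-[b|a])\propto\{a,b\}$. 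Since the HKR comparison is a single natural transformation, one overall normalisation (the analogue of the $2i$ in the $A_1$ case) governs all sectors, so three decoupled brackets suffice: $\{z^1,z^2\}_{\Pi_s}$ pairs against the entry $\langle z^1z^2\rangle_{20}\mapsto(2a)z^2z^1$ of \eqref{2-cocycle} and fixes $a=-\tfrac16$; $\{w_1,w_2\}_{\Pi_s}$ pairs against $\langle w_1w_2\rangle_{02}\mapsto(-2b)w_2w_1$ and fixes $b=\tfrac16$; and the purely constant ($\xi$-independent) bracket $\{z^1,w_1\}_{\Pi_s}$, which comes entirely from the rescaled canonical term $s\,\partial_{w_1}\wedge\partial_{z^1}$, pairs against the term $\tfrac{2c}{3}$ in item~5 of \eqref{2-cocycle} and fixes $c=-\tfrac{3s}{2}$. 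A built-in consistency check is that the non-constant part of $\{z^1,w_1\}_{\Pi_s}$ vanishes, forcing the $(a+b)(w_3z^3-w_2z^2)$ piece of item~5 to vanish, i.e.\ $a+b=0$, in agreement with the above.

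The main obstacle is organisational rather than conceptual: one must keep the dictionary between quiver paths and the affine coordinates $x^a,\xi_b$ (hence $z^i,w_i$) rigorously consistent, carry the factors of $-i$ and of $2$ appearing in the Poisson-bracket and HKR conventions (these absorb the overall normalisation constant), and push the $gG_2$ evaluations honestly through the superfluous-tip-laden reduction system of section~\ref{sec_RfT} using \eqref{2-cocycle} and \eqref{2-cocycle_more}, via Lemma~\ref{lem_G2}. It is also worth re-checking directly against \eqref{Poisson_hol_pre} that $\Pi_s$ genuinely has no ${\tiny\yng(2,1)}$-content --- this is what licenses setting $U=X=0$ and collapses the whole identification to the three scalar equations above; once that is in hand and the three representative brackets are matched, equivariance forces the agreement for all remaining brackets and the lemma follows.
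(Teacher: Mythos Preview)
Your strategy --- exploit $SL(3)$-equivariance, observe that $\Pi_s$ has no ${\bf 8}$-component so $U=X=0$, identify $g_{\emptyset,1}$ with the Kodaira--Spencer direction via Lemma~\ref{lem_Kodaira_Spencer}, and then fix $a,b,c$ by matching a few Poisson brackets --- is exactly the paper's. The difference is in \emph{which} brackets you propose to test.

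You want to compute $\{z^i,z^j\}_{\Pi_s}$, $\{w_i,w_j\}_{\Pi_s}$, $\{z^i,w_j\}_{\Pi_s}$ for the individual arrows, read off from the lifted bivector \eqref{Poisson_hol_pre} on the pre-quotient $(V\setminus\{0\})\times V^*$, and pair these directly against the cochain's values on the tips $\langle z^1z^2\rangle_{20}$, $\langle w_1w_2\rangle_{02}$, $\langle z^1w_1\rangle_{11}$. The trouble is that the HKR comparison $gG_2([a|b]-[b|a])\sim\{a,b\}$ is only meaningful when $a,b$ are honest functions on $X$, i.e.\ closed loops in $e_0\Gl e_0$. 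An individual arrow $z^i$ is a morphism between \emph{distinct} tilting summands, not a function on $X$; the bracket $\{z^1,z^2\}$ you compute lives on the pre-quotient and is not $\BB{C}^*$-invariant, so it does not descend. Your pairing of that bracket with the cochain entry $(2a)z^2z^1$ is therefore a heuristic whose validity you have not argued --- it happens to give the right numbers because equivariance leaves only one scalar per isotype to fix, but the step itself is not the HKR map.

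The paper instead tests with genuine functions: length-two loops $w_iz^j\in e_0\Gl e_0$, whose Poisson brackets are computed in the inhomogeneous coordinates $x^{1,2},\xi_{1,2}$ and compared with $gG_2\big([w_iz^j|w_kz^l]_{00}-[w_kz^l|w_iz^j]_{00}\big)$ evaluated via the reduction system and the table \eqref{2-cocycle}. The bracket $\{w_1z^3,w_3z^1\}$ already forces $a+b=0$ and $c=-3s/2$; a second bracket $\{w_1z^3,w_2z^1\}$ then fixes $b=-a=1/6$. This keeps the comparison on $X$ throughout and makes the ``single overall normalisation'' you invoke an equality rather than an unknown proportionality constant. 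If you want to salvage your shortcut, you would need to explain why the cochain's value on the relation $e_2\otimes z^{[i}z^{j]}\otimes e_0$ agrees with the pre-quotient bracket $\{z^i,z^j\}$ --- plausible, but not free.
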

\begin{proof}
First the statement about $t$ and $g_{\emptyset,1}$ is quite obvious.

Next we consider paths starting and ending at $e_0$, which are functions on $X$. Then we match the Poisson bracket e.g. $\{w_1z^3,w_3z^1\}$ with the commutator $[w_1z^3|w_3z^1]_{00}-[w_3z^1|w_1z^3]_{00}$ evaluated on \eqref{used_V}.
First $w_1z^3=\xi_1$ and $w_3z^1=-\xi_1(x^1)^2-\xi_2x^1x^2$, hence
\bea \{w_1z^3,w_3z^1\}=-s(2\xi_1x^1+\xi_2x^2)=-s(2w_1z^1 + w_2z^2).\label{used_VI}\eea
On the other hand $[w_1z^3|w_3z^1]_{00}$ evaluates to
\bea [w_1z^3|w_3z^1]_{00}&\stackrel{G_2}{\to}&w_1\otimes z^3w_3\otimes z^1-e_0\otimes w_1w_2\otimes z^2z^1\nn\\
&\stackrel{g}{\to}&(-a-b)w_1w_1z^1z^1+(a+3b)w_2w_1z^2z^1+(t-\frac{c}{3})w_1z^1,\nn\\
\left[w_3z^1|w_1z^3\right]_{00}&\stackrel{G_2}{\to}&w_3\otimes z^1w_1\otimes z^3+w_3w_1\otimes z^1z^3\otimes e_0+e_0\otimes w_3w_1z^3\otimes z^1\nn\\
&\stackrel{g}{\to}&(t - \frac{5c}{3})w_1z^1-\frac{2c}{3}w_2z^2+3(a+b)w_1w_1z^1z^1+(5a+7b)w_2w_1z^2z^1+2(a+b)w_2w_2z^2z^2\nn\eea
thus we get
\bea
&&gG_2(\left[w_1z^3|w_3z^1\right]_{00}-\left[w_3z^1|w_1z^3\right]_{00})\nn\\
&&\hspace{2cm}=\frac{4c}{3}w_1z^1+\frac{2c}{3}w_2z^2-2(a+b)w_{2}w_2z^2z^2-4(a+b)w_{1}w_1z^1z^1-4(a+b)w_2w_1z^2z^1.\nn\eea
For this to match \eqref{used_VI}, we need $a=-b$, $c=-3s/2$. To further fix $a,b$, we consider
\bea \left[w_1z^3|w_2z_1\right]_{00}&\stackrel{G_2}{\to}&w_1\otimes z^3w_2\otimes z^1+e_0\otimes w_1w_2\otimes z^3z^1\nn\\
&\stackrel{g}{\to}&(a-3b)w_2w_1z^3z^1,\nn\\
\left[w_2z_1|w_1z^3\right]_{00}&\stackrel{G_2}{\to}&w_2\otimes z^1w_1\otimes z^3+w_2w_1\otimes z^1z^3\otimes e_0\nn\\
&\stackrel{g}{\to}&\frac{2c}{3}w_2z^3-(3a + b)w_2w_1z^3z^1-2(a + b)w_2w_2z^3z^2\nn\eea
Thus if
\bea gG_2(\left[w_1z^3|w_2z_1\right]_{00}-\left[w_2z_1|w_1z^3\right]_{00})=(4a-2b)w_2w_1z^3z^1-\frac{2c}{3}w_2z^3+2(a + b)w_2w_2z^3z^2\nn\eea
is to match $\{w_1z^3,w_2z_1\}=\{\xi_1,\xi_2x^1\}=s\xi_2-\xi_1\xi_2x^1=sw_2z^3-w_2w_1z^3z^1$, we need $b=-a=1/6$.

We have checked that the same $a,b,c$ work for all length two paths.
We point out that the reason why such a match is possible follows from the HKR map \eqref{HKR}, plus the matching of $SL(3)$ representations \eqref{HH2_rep} and lem.\ref{lem_Pi_rep}. But the extra checks do confer a reassuring nod to our result.
\end{proof}
From these explicit cocycles, one can make a pitch for the combinatorial star product. In reality, we solved the MC-equation to the fifth order in $\hbar$ to make sure nothing untoward lurks in the codes
\begin{theorem}
 We have the following all order multiplication rule (where $q=\exp(\hbar/3)$ and we omit the $\star$)
\bea 1:&&(z^1z^2)_{20}= q^{-1}z^2z^1,~~~(z^2z^3)_{20}= q^{-1}z^3z^2,~~~(z^1z^3)_{20}= qz^3z^1\nn\\
2:&&(w_1w_2)_{02}= q^{-1}w_2w_1,~~~(w_2w_3)_{02}= q^{-1}w_3w_2,~~~(w_1w_3)_{02}= qw_3w_1\nn\\
3:&&(z^1w_2)_{11}= qw_2z^1,~~~(z^1w_3)_{11}= q^{-1}w_3z^1,~~~(z^2w_3)_{11} = qw_3z^2\nn\\
4:&&(z^2w_1)_{11}= q^{-1}w_1z^2,~~~(z^3w_1)_{11}= qw_1z^3,~~~(z^3w_2)_{11}= q^{-1}w_2z^3\nn\\
5:&&(z^1w_1)_{11}= w_1z^1-s,~~~(z^2w_2)_{11}= w_2z^2-s\nn\\
6:&&(z^3w_3)_{11}=-w_1z^1-w_2z^2+t+\frac{s}{2}\nn\\
7:&&(w_3z^3)_{11}=-w_1z^1-w_2z^2+(t+\frac{3s}{2})\nn\\
8:&&(w_3z^3)_{00}=-w_1z^1-w_2z^2+(t+\frac{s}{2})\nn\\
9:&&(z^3w_3)_{22}=-z^1w_1-z^2w_2+(t-\frac{s}{2})\label{mult_rule}.\eea
\end{theorem}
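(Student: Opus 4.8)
The plan is to run the combinatorial deformation quantisation of Section~\ref{sec_CDQ}. Read the formulas \eqref{mult_rule} as a deformed reduction system $s\mapsto f_s+g_s$ in the sense of Theorem~\ref{thm_combo_star}, with $q=e^{\hbar/3}$ and with the superfluous relations \eqref{rel_wzw}, \eqref{rel_zwz}, \eqref{new_rel} deformed as forced in \eqref{2-cocycle_more}; by that theorem these data produce a product $u\star v=\overline{uv}+T(g)([u|v])$, computed by right-reducing $uv$. A priori this product need not be associative, and by Theorem~\ref{thm_ass_chk} associativity is equivalent to the finite collection of identities $(u\star v)\star w=u\star(v\star w)$ over the $28$ ambiguities generating $\SF{A}_2$ in \eqref{Amb} --- equivalently, by Remark~\ref{rmk_contract} and the correspondence in Lemma~\ref{prop_diff_TP2}, over the $12$ of them mapping onto $\partial\SF{T}_3$. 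So there are two things to do: (i) identify the $O(\hbar)$ part of \eqref{mult_rule} with the cocycle $g=ag_{\tiny\yng(3)}+bg_{\tiny\yng(3,3)}+cg_{\emptyset,0}+tg_{\emptyset,1}$ at $a=-\tfrac16$, $b=\tfrac16$, $c=-\tfrac{3s}{2}$, so that $\star$ genuinely quantises the Poisson structure $\Pi_s$ together with the Kodaira--Spencer deformation $w_iz^i=t$ of Lemma~\ref{lem_Kodaira_Spencer}; and (ii) verify the ambiguity identities to all orders in $\hbar$.

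\textbf{Linearisation.} For part~(i) I would expand $q^{\pm1}=1\pm\tfrac{\hbar}{3}+O(\hbar^2)$ and treat $s,t$ as $O(\hbar)$. Then the $\hbar$-linear part of rule~$1$, $(z^1z^2)_{20}=q^{-1}z^2z^1$, is $-\tfrac13 z^2z^1$, which is item~$1$ of \eqref{2-cocycle} with $2a=-\tfrac13$; rule~$5$, $(z^1w_1)_{11}=w_1z^1-s$, contributes the constant $-s=\tfrac{2c}{3}$, which is item~$5$ once $a+b=0$; rule~$6$, $(z^3w_3)_{11}=-w_1z^1-w_2z^2+t+\tfrac s2$, contributes $t+\tfrac s2=t-\tfrac c3$, which is item~$7$; and similarly through the remaining rules. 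This reproduces the table of the preceding lemma with exactly the stated $a,b,c$.

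\textbf{The main obstacle: the ambiguity check.} The substance of the proof will be part~(ii). For the ``generic'' ambiguities --- $z^iz^jw_k$ and $z^kw_iw_j$ with $i<j$, and the ones assembled only from rules~$1$--$4$ --- the two parenthesisations reorder the same letters through commuting powers of $q$ and the identities come out at once. The delicate cases are those containing $z^3w_3$ or $w_3z^3$: $z^3w_3z^3$, $w_3z^3w_3$, $z^iw_3z^3$, $w_3z^3w_i$ and $w_{1,2}w_3z^3$. Reducing $z^3w_3$ (resp.\ $w_3z^3$) spawns a \emph{sum} of lower tips --- $z^1w_1,z^2w_2$ (resp.\ $w_1z^1,w_2z^2$) --- plus a constant linear in $s,t$, and each summand can trigger further reductions, so along the two reduction orders one must track how the accumulated powers of $q$ recombine and how the constants (which now reach order $s^2$ and $st$) cancel. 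This is the step I expect to be genuinely laborious; the practical route is to do a few representative ambiguities by hand and certify the rest with a computer algebra system, treating $q,s,t$ as formal symbols (the paper's check to fifth order in $\hbar$ being the weaker numerical version), using Lemma~\ref{prop_diff_TP2} to tell the independent checks from the automatic ones. A strong independent cross-check is that the corner algebra $e_0\Gl_{q,s,t}e_0$ of deformed holomorphic functions must be a flat $q$-deformation of $H^0(T^*\BB{P}^2)$ and, under the dictionary between the $X_i^j$ and the deformed paths, should reproduce the relations \eqref{deformed_TCP2} of the introduction and, at $s=1$, be compatible with the symplectic-groupoid co-structure; these constraints pin down the closed form in \eqref{mult_rule} essentially uniquely.
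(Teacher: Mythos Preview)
Your proposal is correct and follows essentially the same route as the paper: invoke Theorem~\ref{thm_ass_chk} to reduce the proof to checking $(u\star v)\star w=u\star(v\star w)$ on the finite set of ambiguities in \eqref{Amb}, and carry out those checks (the paper does two by hand---$\bra w_3z^3w_1\ket_{01}$ and $\bra z^1z^3w_3\ket_{21}$---and leaves the rest implicit). Your part~(i) is really the content of the lemma preceding the theorem rather than part of its proof, and your reduction from 28 to 12 checks via Lemma~\ref{prop_diff_TP2} and the contraction of Remark~\ref{rmk_contract} is a legitimate optimisation the paper does not make explicit; otherwise the arguments coincide.
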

We have been skimping on notations above, if spelled out fully, the rule should read e.g. $e_2z^1\star z^2e_0=q^{-1}e_2z^2\star z^1e_0$, etc.

For the remaining superfluous tips, their reductions are determined from \eqref{mult_rule}, but we list them still
\bea 10:&&(z^1w_1z^2)_{21}= z^2w_1z^1-sz^2,~~~(z^1w_2z^2)_{21}= z^2w_2z^1+sz^1\nn\\
11:&&(z^1w_3z^2)_{21}= q^{-3}z^2w_3z^1,~~~(z^1w_1z^3)_{21}= z^3w_1z^1-sz^3\nn\\
12:&&(z^1w_2z^3)_{21}= q^3z^3w_2z^1,~~~(z^2w_1z^3)_{21}= q^{-3}z^3w_1z^2\nn\\
13:&&(z^2w_2z^3)_{21} = -sz^3,~~~(w_1z^1w_2)_{12}= sw_2\nn\\
14:&&(w_1z^2w_2)_{12}= -sw_1,~~~(w_1z^3w_2)_{12}= q^{-3}w_2z^3w_1\nn\\
15:&&(w_1z^1w_3)_{12}= sw_3,~~~(w_1z^2w_3)_{12}= q^3w_3z^2w_1\nn\\
16:&&(w_2z^1w_3)_{12}= q^{-3}w_3z^1w_2,~~~(w_2z^2w_3)_{12}=sw_3\nn\\
17:&&(w_3w_1z^3)_{01}= q^{-1}(t+\frac{3s}{2})w_1-q^{-2}w_2w_1z^2-q^{-1}w_1w_1z^1\nn\\
18:&&(w_3w_2z^3)_{01}= q(t+\frac{3s}{2})w_2-qw_2w_1z^1-qw_2w_2z^2\label{mult_rule_more}.\eea
\begin{proof}
  The proof of this theorem consists of checking the associativity on the set of ambiguities \eqref{Amb}. We just give one example $\bra w_3z^3w_1\ket_{01}$.
\bea &&(w_3z^3)w_1=(-w_1z^1-w_2z^2+t+\frac{s}{2})w_1=-w_1(w_1z^1-s)-q^{-1}w_2w_1z^2+(t+\frac{s}{2})w_1\nn\\
&&w_3(z^3w_1)=qw_3w_1z^3=(t+\frac{3s}{2})w_1-q^{-1}w_2w_1z^2-w_1w_1z^1.\nn\eea
We get a match. Or if we take $\bra z^1z^3w_3\ket_{21}$,
\bea &&(z^1z^3)w_3=qz^3z^1w_3=z^3w_3z^1=(-z^1w_1-z^2w_2+(t-\frac{s}{2}))z^1,\nn\\
&&z^1(z^3w_3)=z^1(-w_1z^1-w_2z^2+t+\frac{s}{2})=-z^1w_1z^1-(z^2w_2z^1+sz^1)+tz^1+\frac{s}{2}z^1.\nn\eea

But let us recapitulate the line of reasoning that led to this simple proof. For deformation quantisation, one needs to solve the MC-equation \eqref{MC} in the bar complex $\B^{\sbullet}$. The smaller Chouhy-Solotar model $\P^{\sbullet}$ has maps to $\B^{\sbullet}$ given by a tree formula \eqref{tree_formula}. Reading a tree from the root up gives us a multiplication rule $\star$ as in \eqref{star}, which consists of stacking two words together and use the reduction system to reduce the concatenation to an irreducible word.
Finally thm.\ref{thm_ass_chk} says that checking the MC-equation amounts to checking the associativity of $\star$ on the set of ambiguities.
\end{proof}
\begin{remark}
When solving the MC-equation at each order in $\hbar$, there are usually some freedoms. Some of these can be `gauged away' by using a gauge transformation (see \cite{Getzler_Lie}). Others are simply the freedom to redefine the parameters at each order, which rather resembles the `renormalisation of mass and couplings' in physics. But we have not systematically analysed if there is a moduli space of gauge inequivalent solutions to the MC-equation, i.e. a moduli space of $\star$ product.
\end{remark}
\begin{corollary}
  From the deformed quiver algebra $\Gl_{q,s,t}$, we obtain a deformation of the holomorphic functions on $X=T^*\BB{P}^2$, listed in \eqref{deformed_TCP2}.
\end{corollary}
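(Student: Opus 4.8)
The plan is to read off the relations \eqref{deformed_TCP2} directly from the multiplication rules \eqref{mult_rule}--\eqref{mult_rule_more} of the deformed algebra $\Gl_{q,s,t}$, using the tilting equivalence $D^b(X)\simeq D^b(\opn{mod}_{\Gl})$ of Section~\ref{sec_S}. Under this equivalence the idempotent $e_0$ corresponds to (the pullback of) $\mathcal{O}_{\BB{P}^2}$, so $e_0\Gl e_0=\opn{Hom}_X(\mathcal{O}_X,\mathcal{O}_X)=H^0(X,\mathcal{O}_X)$, which, as recalled in the introduction, is generated by $H^0(\BB{P}(V),T\BB{P}(V))=(V\otimes V^*)_{tf}$. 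The first step would be to fix the dictionary: identify the generator $X_i^j$ with the length-two cycle $e_0 w_i z^j e_0$ at the node $0$, i.e.\ the composite $\mathcal{O}\xrightarrow{z^j}\mathcal{O}(1)\xrightarrow{w_i}\mathcal{O}$, which lands precisely in the fibre-degree-one summand $H^0(\BB{P}^2,T\BB{P}^2)$. The quiver relation $\sum_p e_0 w_p z^p e_0=0$ at the node $a=0$ is then the trace-free condition $\sum_p X_p^p=0$. Since the deformation quantisation is flat (the $L_{\infty}$-transfer changes neither the underlying graded vector space nor the generating set), the same $X_i^j$ generate $e_0\Gl_{q,s,t}e_0$, and the corollary amounts to computing the structure constants, obtained by $\star$-multiplying the $X_i^j$ and normal-ordering via the deformed reduction system.

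The two scalar identities are immediate. Summing lines 5--7 of \eqref{mult_rule} gives $\sum_p z^p\star w_p=(w_1z^1-s)+(w_2z^2-s)+(-w_1z^1-w_2z^2+t+\tfrac12 s)=t-\tfrac32 s$ in $e_1\Gl_{q,s,t}e_1$, hence $X_i^p\star X_p^l=w_i\star\big(\textstyle\sum_p z^p\star w_p\big)\star z^l=(t-\tfrac32 s)X_i^l$; and line 8 gives $\sum_p X_p^p=\sum_p\overline{w_pz^p}=w_1z^1+w_2z^2+(-w_1z^1-w_2z^2+t+\tfrac12 s)=t+\tfrac12 s$. For the braided relations I would compute $X_i^j\star X_k^l=\overline{w_iz^jw_kz^l}$ by right-reducing the concatenated word: its only reducible inner subword is the middle cycle $z^jw_k$ sitting at the node $1$, handled by lines 3--7. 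For $j\neq k$ this reads $z^jw_k\mapsto q^{\pm1}w_kz^j$, producing $q^{\pm1}w_iw_kz^jz^l$, which after ordering $w_iw_k$ and $z^jz^l$ through lines 1--2 becomes a pure power of $q$ times $\overline{w_iz^lw_kz^j}=X_i^l\star X_k^j$; collecting the exponents reproduces the prefactor $q^{j<k}q^{l<j}q^{k<l}$ in the second line of \eqref{deformed_TCP2}. When $j=k$ the middle reduction picks up an inhomogeneous term instead ($-s$ from lines 5--6, or $t+\tfrac12 s$ from line 7), and carrying it through the remaining reductions produces exactly the correction $-s(\gd_k^jX_i^l-q^{j<k}q^{l<j}X_i^j\gd_k^l)$; reducing the middle cycle of $X_i^l\star X_k^j$ (resp.\ $X_k^j\star X_i^l$) the other way gives the third line, and composing the two index swaps $j\leftrightarrow l$ and $i\leftrightarrow k$ gives the fourth, which is the assertion that ``the 2nd and 3rd line imply the 4th''. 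As a sanity check, $X_1^2\star X_2^3=\overline{w_1z^2w_2z^3}\to w_1w_2z^2z^3-sw_1z^3\to q^{-2}w_2w_1z^3z^2-sX_1^3=q^{-3}\,\overline{w_2z^3w_1z^2}-sX_1^3=q^{-3}X_2^3\star X_1^2-sX_1^3$, matching the worked example below \eqref{deformed_TCP2}.

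No separate associativity check is required: associativity of \eqref{deformed_TCP2} is inherited from that of $\Gl_{q,s,t}$, which is part of the theorem establishing \eqref{mult_rule} (via Thm.~\ref{thm_ass_chk}, the associativity check on the ambiguities \eqref{Amb}). The reading of the three parameters---$q=\exp(\hbar/3)$ and the scale $s$ encoding the Poisson structure $\Pi_s$, and $t$ the Kodaira--Spencer deformation $w_iz^i=0\rightsquigarrow w_iz^i=t$---is the content of Lemmas~\ref{lem_Pi_rep} and \ref{lem_Kodaira_Spencer} together with the subsequent parameter-matching lemma, so the corollary follows. The only real labour is the second paragraph, and it is entirely mechanical; the one thing demanding care is the bookkeeping of conventions---the right-to-left reading of paths, the placement of the upper/lower indices of $X_i^j$ relative to the arrows $z^j$ and $w_i$, which node a given inner subword sits on, and the fixed letter order $z^1<z^2<z^3<w_1<w_2<w_3$---since a single misplaced $q^{\pm1}$ or sign propagates through the entire table of relations.
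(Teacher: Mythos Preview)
Your proposal is correct and follows exactly the paper's approach: identify $e_0\Gl_{q,s,t}e_0$ with the deformed $H^0(X)$, set $X_i^j=e_0w_iz^je_0$, and read off the relations \eqref{deformed_TCP2} by right-reducing $w_iz^jw_kz^l$ with the rules \eqref{mult_rule}. The paper's own proof is a two-line sketch (``we can easily write down the deformed multiplication rule between the $X$'s from \eqref{mult_rule}''), so your expansion of the mechanics---the role of the middle $(z^jw_k)_{11}$, the scalar identities from summing lines 5--8, and the worked check of $X_1^2\star X_2^3$---is more detailed than what the paper supplies but not different in substance.
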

\begin{proof}
  We need only consider $e_0\Gl e_0$, which are precisely $H^0(X)$, the holomorphic functions on $X$. We point out that $H^0(X)$ has the following decomposition as $SL(3)$ representations
  \bea H^0(X)=\emptyset\oplus {\tiny\yng(2,1)}\oplus {\tiny\yng(4,2)} \oplus {\tiny\yng(6,3)}\oplus\cdots\nn\eea
  and is generated as the symmetric tensor product of ${\tiny\yng(2,1)}$.

  We name the basis of ${\tiny\yng(2,1)}\sim (V\otimes V^*)_{tf}$ as $X_i^j$ with $\Tr[X]=0$. In fact, $X_i^j\simeq e_0 w_iz^je_0$. We can easily write down the deformed multiplication rule between the $X$'s from \eqref{mult_rule}.
\end{proof}
To obtain some modules over $\Gl_{q,s,t}$, we consider again $e_0\Gl_{q,s,t}e_0$ and impose
\bea &&e_0X_i^j=0,~~i>j,\nn\\
&&e_0X_2^3=e_0X_2^2=e_0X_3^3=0,\nn\\
&&e_0X_1^1=-ns,~~~e_0(X_1^2)^{n+1}=0\nn\eea
to mimic a highest weight module over $U_{\FR{sl}(3)}$.
This produces a module only when the integrality condition below is satisfied
\bea t=-(n+\frac12)s.\nn\eea
This module corresponds to the representation $S^nV^*$ of $SL(3)$ and hence gives the brane supported on the zero section $\BB{P}(V)\subset T^*\BB{P}(V)$, with the pre-quantisation line bundle ${\cal O}(n)$ attached.

The deformed algebra \eqref{deformed_TCP2} bears certain resemblance to the quantum group $U_q\FR{sl}(3)$, though also quite different. For example, we can identify
\bea e_1=-\frac{1}{s}X_1^2,~~f_1=-\frac{1}{s}X_2^1;~~e_2=-\frac{1}{s}X_2^3,~~f_2=-\frac{1}{s}X_3^2;~~e_3=-\frac{1}{s}X_1^3,~~f_3=-\frac{1}{s}X_3^1.\nn\eea
Then $e_3$ (and $f_3$) are given by the $q$-commutator
\bea e_1e_2-q^{-3}e_2e_1=e_3,~~~e_1e_3-q^{-3}e_3e_1=0=e_2e_3-q^3e_3e_2.\nn\eea
Compare with e.g. Eq.9 in \cite{Burroughs:1989dj}. But the remaining Serre relations such as $[e_1,f_1]$ cannot be reproduced.

\smallskip

As a closing remark, in this work all quiver algebras are homogeneous, it would seem that this ensures that the deformation quantisation give a convergent power series in $\hbar$. One can hope to apply the same technique to the quantisation of Higgs moduli space, as in \cite{Gaiotto:2021kma} by Gaiotto and Witten, who investigated the analytical version of the Langland duality. Perhaps more interestingly, if in some scenario one obtains a divergent power series, then one can apply the resurgence to do a re-sum. As the quantisation is supposed to be computable from the A-model on $(X,\im\Go)$. The resurgence may capture some interesting instanton effects in the $A$-model \cite{Gu:2021ize}.

\appendix

\section{Explicit Cochain Representative of $\HH^2(T^*\BB{P}^2)$}
The Hochschild cohomology groups themselves are not hard to compute, but for deformation quantisation, we will need their explicit cochain representative in the bar and the minimal resolutions. But we start with the easier task of computing some sheaf cohomologies.

\subsection{Collection of Some Sheaf Cohomology Groups on $T^*\BB{P}^2$}
Write $\BB{P}^2=\BB{P}(V)$ for some 3-dim vector space $V$ and $X=T^*\BB{P}(V)$.
Our goal is to compute
\bea H^1(X,TX),~~~H^0(X,\wedge^2TX).\nn\eea
We clearly have
\bea &&TX\simeq \pi^*(T\BB{P}(V)\oplus T^*\BB{P}(V)),\nn\\
&&\wedge^2TX\simeq \pi^*(\wedge^2T\BB{P}(V)\oplus \wedge^2T^*\BB{P}(V)\oplus (T\otimes T^*)\BB{P}(V)),\nn\eea
where $\pi:\,T^*\BB{P}(V)\to \BB{P}(V)$ is the projection. We will omit $\pi^*$ next.
The main tool for the computation is the projection formula
\bea H^{\sbullet}(T^*\BB{P}(V),\pi^*F)=\oplus_{n\geq0}H^{\sbullet}(\BB{P}(V),S^nT\BB{P}(V)\otimes F)\label{projection}\eea
where $F$ is a sheaf on the base.
This way we reduce all calculations from $X$ to $\BB{P}(V)$, for which the two Euler sequences are the main tools
\bea 0\to{\cal O}(-1)\to V\otimes{\cal O}\to T(-1)\to0\label{Euler_seq_dual}\\
0\to\Go(1)\to V^*\otimes{\cal O}\to{\cal O}(1)\to0.\label{Euler_seq}\eea
We have some classical results e.g. $H^0(\BB{P}(V),{\cal O}(n))=S^nV^*$, $H^{>0}(\BB{P}(V),{\cal O}(n))=0$ for $n\geq0$ and  $H^{\sbullet}(\BB{P}(V),{\cal O}(-1))=H^{\sbullet}(\BB{P}(V),{\cal O}(-2))=0$.
Combining these with the Euler sequences, we can compute
\begin{proposition}\label{prop_some_coh}
  \begin{enumerate}
  \item $H^{\sbullet}(\BB{P}(V),T^*(1))=0$,
  \item $H^{\sbullet}(\BB{P}(V),T)=(V\otimes V^*)_{tf}[0]$,
  \item $H^{\sbullet}(\BB{P}(V),T^*)=\BB{C}[1]$,
  \item $H^{\sbullet}(\BB{P}(V),T\otimes T^*)=\BB{C}[0],$
  \item $H^{\sbullet}(\BB{P}(V),T^*(2))=\wedge^2V^*[0]$,
  \item $H^{\sbullet}(\BB{P}(V),T^*(3))=\ker(V^*\otimes S^2V^*\to S^3V^*)[0]$,
  \item $H^{>0}(\BB{P}(V),T^*(n))=0$, $n\geq 1$.
\end{enumerate}
\end{proposition}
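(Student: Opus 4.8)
The plan is to derive all seven statements from the two Euler sequences \eqref{Euler_seq_dual} and \eqref{Euler_seq} together with the classical line-bundle cohomology already recalled ($H^0(\mathcal{O}(n))=S^nV^*$ and $H^{>0}(\mathcal{O}(n))=0$ for $n\ge0$, and $H^\bullet(\mathcal{O}(-1))=H^\bullet(\mathcal{O}(-2))=0$). No deeper input is needed: $\BB{P}^2$ is small enough that every sheaf in sight becomes acyclic or explicit after a single twist.

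First I would prove item 7 as the master computation, since it also yields the higher-cohomology halves of items 1, 5 and 6. Twisting \eqref{Euler_seq} by $\mathcal{O}(n)$ gives $0\to T^*(n)\to V^*\otimes\mathcal{O}(n-1)\to\mathcal{O}(n)\to0$. For $n\ge1$ both flanking sheaves have vanishing higher cohomology, so the long exact sequence collapses to
\[
0\to H^0(T^*(n))\to V^*\otimes S^{n-1}V^*\xrightarrow{\ \mu\ } S^nV^*\to H^1(T^*(n))\to 0,
\]
with $H^{\ge2}(T^*(n))=0$, where $\mu$ is the multiplication map. Since $\mu$ is surjective for $n\ge1$ we obtain item 7, and $H^0(T^*(n))=\ker\mu$ gives item 6 verbatim. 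Specialising: for $n=2$ one has $\ker\mu=\wedge^2V^*$ (multiplication $V^*\otimes V^*\to S^2V^*$ is symmetrisation), which is item 5; for $n=1$, $\mu\colon V^*\otimes\mathbb{C}\xrightarrow{\ \sim\ }V^*$, so $\ker\mu=0$ and, together with the $H^{>0}$ vanishing, item 1 follows.

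Next, for item 3 I would instead twist \eqref{Euler_seq} by $\mathcal{O}(-1)$, obtaining $0\to T^*\to V^*\otimes\mathcal{O}(-1)\to\mathcal{O}\to0$; since $H^\bullet(\mathcal{O}(-1))=0$ the long exact sequence gives $H^i(T^*)\cong H^{i-1}(\mathcal{O})$, i.e. $H^\bullet(T^*)=\mathbb{C}[1]$ (a reassuring match with $h^{1,1}(\BB{P}^2)=1$). For item 2, twisting \eqref{Euler_seq_dual} by $\mathcal{O}(1)$ gives $0\to\mathcal{O}\to V\otimes\mathcal{O}(1)\to T\to0$; the connecting maps vanish in degrees $\ge1$, and in degree $0$ the edge map $\mathbb{C}=H^0(\mathcal{O})\to V\otimes V^*$ is $1\mapsto\mathrm{id}$, so $H^0(T)=(V\otimes V^*)/\langle\mathrm{id}\rangle=(V\otimes V^*)_{tf}$ and the higher groups vanish. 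Finally, for item 4 I would tensor this last sequence with the locally free sheaf $T^*$ to get $0\to T^*\to V\otimes T^*(1)\to T\otimes T^*\to0$; by item 1 the middle term is acyclic, hence $H^i(T\otimes T^*)\cong H^{i+1}(T^*)$, and item 3 then yields $H^\bullet(T\otimes T^*)=\mathbb{C}[0]$.

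Every step is a one-line diagram chase, so there is no genuine obstacle; the only points demanding care are the explicit identification of the relevant edge maps — that $\mu$ is the multiplication map with kernel $\wedge^2V^*$ (resp. the stated $\ker$ for $n=3$), and that the degree-$0$ connecting map in item 2 is $1\mapsto\mathrm{id}$ — rather than just their source and target, together with choosing the twist in each item so that the two flanking bundles land in the acyclic range. Tracking the maps, not merely the objects, is the part most prone to slips.
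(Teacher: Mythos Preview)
Your proof is correct and follows essentially the same approach as the paper: both arguments twist the Euler sequences \eqref{Euler_seq_dual} and \eqref{Euler_seq} by suitable line bundles and read off the claims from the resulting long exact sequences, using the known acyclicity of $\mathcal{O}(n)$ for $n\ge -2$. The only difference is organisational: you treat the general twist $0\to T^*(n)\to V^*\otimes\mathcal{O}(n-1)\to\mathcal{O}(n)\to 0$ first and specialise to obtain items 1, 5, 6, 7 simultaneously, whereas the paper handles item 1 first and then remarks that items 5, 6, 7 follow similarly; the underlying computations for items 2, 3, 4 are identical in both.
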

We can use the Young tableau notation
\bea V^*\simeq {\tiny\yng(1)},~~~S^3V^*={\tiny\yng(3)},~~~(V\otimes V^*)_{tf}\simeq {\tiny\yng(2,1)},~~~\ker(V^*\otimes S^2V^*\to S^3V^*)\simeq {\tiny\yng(2,1)}.\nn\eea
\begin{proof}
  First the long exact sequence (LES) associated with \eqref{Euler_seq} gives
  \bea 0\to H^0(\BB{P}(V),T^*(1))\to V^* \stackrel{\simeq}{\to} V^*\to H^1(\BB{P}(V),T^*(1))\to0 \to \cdots\nn\eea
  showing that $H^{\sbullet}(\BB{P}(V),T^*(1))=0$.

  Now tensor ${\cal O}(1)$ to \eqref{Euler_seq_dual}, the resulting LES gives
  \bea 0\to\BB{C}\to V\otimes V^* \to H^0(\BB{P}(V),T)\to 0\nn\eea
  showing that $H^0(\BB{P}(V),T)$ is the trace-free part of $V\otimes V^*$ and the higher cohomologies vanish. Similarly, tensoring ${\cal O}(-1)$ to \eqref{Euler_seq} gives the third result.

  Tensor $T^*(1)=\Go(1)$ to \eqref{Euler_seq_dual}: $0\to T^*\to V\otimes T^*(1)\to T\otimes T^*\to0$. Since $H^{\sbullet}(\BB{P}(V),T^*(1))=0$, we get
  \bea H^i(T\otimes T^*)\simeq H^{i+1}(T^*).\nn\eea
  From item 3, we get the fourth result.

  For item 5, 6 we tensor ${\cal O}(1),{\cal O}(2)$ to \eqref{Euler_seq}, the rest is similar. Finally item 7 also follows from \eqref{Euler_seq}.
\end{proof}

In addition to the calculations above, one still needs to tensor with $S^nT$ as in \eqref{projection}.
The same calculation with Euler sequence becomes cumbersome. But when $n$ gets large enough, the Kodaira vanishing enters and one can just compute the cohomology via the index theorem. The calculation above merely serves to keep track of the higher cohomologies happening at lower $n$.

\begin{proposition}\label{prop_for_HH2}
We have the following contribution to $\HH^2(T^*\BB{P}(V))$ at order $u^{-2}$ and $u^0$
  \bea 
  -2:&&H^0(\BB{P}(V),T^*\otimes T)=\BB{C},~~H^1(\BB{P}(V),T^*)=\BB{C},\nn\\
   0:&&H^0(\BB{P}(V),T\otimes(T^*\otimes T))=2{\tiny\yng(2,1)}\oplus {\tiny\yng(3,3)},~~H^0(\BB{P}(V),\wedge^2T)={\tiny\yng(3)}\nn\eea
   Importantly $H^{>0}(\BB{P}(V),S^nT\otimes T^*)=0$ for all $n>0$.
\end{proposition}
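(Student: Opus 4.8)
The plan is to push the HKR decomposition $\HH^2(X)=H^2(X,{\cal O}_X)\oplus H^1(X,TX)\oplus H^0(X,\wedge^2TX)$ down to the base $B=\BB{P}(V)$ by means of the projection formula \eqref{projection}. Since $TX\simeq\pi^*(TB\oplus T^*B)$ and $\wedge^2TX\simeq\pi^*\big(\wedge^2TB\oplus\wedge^2T^*B\oplus(TB\otimes T^*B)\big)$, each summand of $\HH^2(X)$ becomes $\bigoplus_{n\geq0}H^{\sbullet}\big(B,S^nTB\otimes{\cal E}\big)$ with ${\cal E}$ running over ${\cal O}_B$, $TB$, $T^*B$, $\wedge^2TB$, $\wedge^2T^*B$ and $TB\otimes T^*B$. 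The $u$-grading is exactly this data: a fibre power contributes $u^{2n}$, while each vertical leg of $TX$ (a copy of $\pi^*T^*B$) contributes $u^{-2}$. So order $u^{-4}$ is $H^0(B,\wedge^2T^*B)$ at $n=0$; order $u^{-2}$ collects $H^0(B,TB\otimes T^*B)$ and $H^0(B,\wedge^2T^*B\otimes TB)$ at $n=0,1$ together with $H^1(B,T^*B)$ at $n=0$; order $u^0$ collects $H^0(B,\wedge^2TB)$ at $n=0$, $H^0\big(B,TB\otimes(T^*B\otimes TB)\big)$ at $n=1$, $H^0(B,\wedge^2T^*B\otimes S^2TB)$ at $n=2$, together with the corresponding $H^1$'s coming from $H^1(X,TX)$ and $H^2$'s coming from $H^2(X,{\cal O}_X)$.

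Second, I would evaluate the finitely many groups so produced. Several are already in Proposition \ref{prop_some_coh}: $H^0(B,TB\otimes T^*B)=\BB{C}$ and $H^1(B,TB\otimes T^*B)=0$ (item 4), $H^1(B,T^*B)=\BB{C}$ (item 3), $H^1(B,TB)=0$ (item 2). The remaining ones I would handle uniformly with two identities special to $\BB{P}^2$: the rank-two identity $T^*B\simeq TB\otimes\wedge^2T^*B=TB(-3)$ and the Clebsch--Gordan splitting $S^nTB\otimes TB\simeq S^{n+1}TB\oplus S^{n-1}TB(3)$, followed by the symmetric power of the Euler sequence
\bea 0\to S^{m-1}V\otimes{\cal O}(m-1)\to S^mV\otimes{\cal O}(m)\to S^mTB\to0,\nn\eea
which reduces everything to the elementary cohomology of line bundles on $\BB{P}^2$. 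This gives $H^0(B,\wedge^2TB)=H^0({\cal O}(3))=S^3V^*={\tiny\yng(3)}$, $H^0(B,\wedge^2T^*B\otimes TB)=H^0(TB(-3))=0$, $H^0(B,\wedge^2T^*B\otimes S^2TB)=H^0(S^2TB(-3))=0$ and $H^2(B,{\cal O}_B)=0$; for the $n=1$ mixed term one splits $TB\otimes T^*B={\cal O}_B\oplus(TB\otimes T^*B)_0$, reads the trace part off from $H^0(B,TB)=(V\otimes V^*)_{tf}={\tiny\yng(2,1)}$ (item 2), and uses the Euler resolution on the traceless part to land the remaining ${\tiny\yng(2,1)}\oplus{\tiny\yng(3,3)}$. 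The vanishings of $H^1(B,TB)$, $H^1(B,TB\otimes T^*B)$ and $H^2(B,{\cal O}_B)$ then remove the would-be extra contributions at order $u^0$, and that of $H^0(B,\wedge^2T^*B\otimes TB)$ removes the extra one at order $u^{-2}$, leaving exactly the listed answers.

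Third, for the global statement $H^{>0}(B,S^nTB\otimes T^*B)=0$ for all $n>0$, the same two identities give $S^nTB\otimes T^*B\simeq S^{n+1}TB(-3)\oplus S^{n-1}TB$; applying the Euler resolution to each summand, $S^{n-1}TB$ (for $n\geq1$) is sandwiched between direct sums of ${\cal O}(k)$ with $k\geq0$, and $S^{n+1}TB(-3)$ between sums of ${\cal O}(k)$ with $k\geq-3$, while $S^nV\otimes{\cal O}(n-3)$ in the latter has $H^{>0}=0$ once $n\geq1$. Since $H^1({\cal O}(k))=0$ for all $k$ and $H^2({\cal O}(k))=0$ for $k\geq-2$ on $\BB{P}^2$, a short chase of the two long exact sequences kills $H^1$ and $H^2$ for every $n\geq1$. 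I expect this last item to be the only genuine obstacle, precisely because it must hold \emph{uniformly in $n$}: the rank-two splitting is what keeps the twists that appear from ever dipping below the harmless range, and one should just double-check the boundary case $n=1$ (where $S^{n-1}TB={\cal O}_B$). Everything else is bookkeeping anticipated by Proposition \ref{prop_some_coh}; the laborious part of the appendix is not this group-theoretic computation but the construction of the \emph{explicit cochain representatives} of these classes in the minimal resolution.
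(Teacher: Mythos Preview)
Your argument is correct and complete. The bookkeeping of the $u$-grading is right, the extra pieces at each order (such as $H^0(\wedge^2T^*B\otimes T)$ at $u^{-2}$ and $H^0(\wedge^2T^*B\otimes S^2T)$, $H^1(T\otimes T^*)$, $H^2({\cal O})$ at $u^0$) are correctly identified and shown to vanish, and the uniform vanishing for $n\geq1$ goes through as you say, including the boundary case $n=1$.

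Where you diverge from the paper is in the treatment of the $T^*$ factor. You systematically trade $T^*$ for $T(-3)$ via the rank-two identity and then use the Clebsch--Gordan splitting $S^nT\otimes T\simeq S^{n+1}T\oplus S^{n-1}T(3)$ to reduce everything to $S^mT(k)$, which is dispatched by the Euler sequence and line-bundle cohomology. The paper instead keeps $T^*$ intact: for $H^0(T\otimes T^*\otimes T)$ it splits the two $T$'s as $S^2T\oplus\wedge^2T$, identifies $\wedge^2T\otimes T^*=T^*(3)$ (giving one copy of ${\tiny\yng(2,1)}$ by item~6 of Proposition~\ref{prop_some_coh}), and for $S^2T\otimes T^*$ tensors the squared Euler sequence with $T^*$ and uses $H^{\sbullet}(T^*(1))=0$ to obtain $S^2V\otimes\wedge^2V^*={\tiny\yng(2,1)}\oplus{\tiny\yng(3,3)}$. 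For the uniform vanishing the paper tensors the $S^nT$-Euler sequence with $T^*$ and invokes item~7 of Proposition~\ref{prop_some_coh}, i.e.\ $H^{>0}(T^*(m))=0$ for $m\geq1$, rather than unpacking down to line bundles. Your route is more self-contained (it never needs the $T^*(m)$ results of Proposition~\ref{prop_some_coh}) and makes the $SL(3)$ content transparent via $S^mV$; the paper's route is shorter because it cashes in the lemmas already proved. Both are perfectly valid, and as you note, the real work in the appendix is the explicit cochain representatives, not this computation.
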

Here the grading correspond to the fibre power and eventually will correspond to the path length in the quiver associated to $T^*\BB{P}(V)$.
\begin{proof}
   We compute
   \bea H^0(\BB{P}(V),T\otimes(T^*\otimes T))=H^0(\BB{P}(V),\wedge^2T\otimes T^*)\oplus H^0(\BB{P}(V),S^2T\otimes T^*)\nn\eea
   The first term equals $H^0(\BB{P}(V),T^*(3))={\tiny\yng(2,1)}$ as we computed already. For the second term, we tensor \eqref{Euler_seq_dual} to itself and get
   \bea 0\to V(-1)\to S^2V\to S^2T(-2)\to0~~\To~~0\to V(1)\to S^2V(2)\to S^2T\to0.\nn\eea
   Take the tensor with $T^*$, compute the LES
   \bea \to V\otimes H^{\sbullet}(\BB{P}(V),T^*(1))\to S^2V\otimes H^{\sbullet}(\BB{P}(V),T^*(2))\to H^{\sbullet}(\BB{P}(V),S^2T\otimes T^*)\to \nn\eea
   Using $H^{\sbullet}(\BB{P}(V),T^*(1))=0$, we have
   \bea H^0(\BB{P}(V),S^2T\otimes T^*)\simeq S^2V\otimes H^0(\BB{P}(V),T^*(2))={\tiny\yng(2,2)}\otimes{\tiny\yng(1,1)}={\tiny\yng(2,1)}\oplus {\tiny\yng(3,3)}\nn\eea
   and further $H^{>0}(\BB{P}(V),S^2T\otimes T^*)=0$.

   Finally to establish $H^{>0}(\BB{P}(V),S^nT\otimes T^*)=0$ for all $n>1$ ($n=2$ is already shown), we can argue with Kodaira vanishing or directly. Consider $0\to S^{n-1}V\otimes{\cal O}(-1)\to S^nV\otimes{\cal O}\to S^nT(-n)\to 0$, tensor it with $T^*(n)$: $0\to S^{n-1}V\otimes T^*(n-1)\to S^nV\otimes T^*(n)\to S^nT\otimes T^*\to 0$. The left and middle term have no $H^{>0}$ for $n\geq2$, and so $H^{>0}(\BB{P}(V),S^nT\otimes T^*)=0$.
\end{proof}

\subsection{Explicit Calculation of $\HH^2$ Using the Minimal Resolution}\label{sec_ECoHUtMR}
This section contains a tour de force proof of prop.\ref{thm_HH2_0}.

Let us compute first $\partial\SF{T}_3$,
\bea
\partial (\SF{T}_3^1)^j&=&\partial e_2\otimes(z\cdotp w z^j-z^kz^jw_k+z^jz^kw_k)\otimes e_1\nn\\
&=&e_2z^k\otimes[w_k,z^j]\otimes e_1+e_2z^j\otimes(z\cdotp w)\otimes e_1
-e_2\otimes (z\cdotp w)\otimes z^je_1-e_2\otimes[z^j,z^k]\otimes w_ke_1\nn\\
&=&e_2z^k\otimes[w_k,z^j]_{t.f.}\otimes e_1+\frac13e_2z^j\otimes w\cdotp z\otimes e_1+\frac23e_2z^j\otimes(z\cdotp w)\otimes e_1\nn\\
&&-e_2\otimes (z\cdotp w)\otimes z^je_1-e_2\otimes[z^j,z^k]\otimes w_ke_1.\label{used_IV}\eea
Evaluate the rhs on $g_{\emptyset,0}$ and $g_{\emptyset,1}$
\bea
g_{\emptyset,0}(\partial (\SF{T}_3^1)^j)=(-\frac13+\frac23-\frac13)e_2z^je_1=0,\nn\\
g_{\emptyset,1}(\partial (\SF{T}_3^1)^j)=(\frac13+\frac23-1)e_2z^je_1=0.\nn\eea
One can likewise check the closure for other terms in $\SF{T}_3$. Observe that $g_{\emptyset,0},g_{\emptyset,1}$ cannot be coboundaries, since the latter cannot reduce the path length by 2.

Let us now consider cocyles of order $u^0$, i.e. those that do not change path lengths. Thus e.g. $[z\cdotp w]_{11}$ must evaluate to a length 2 path $e_1\ot\ot e_1$. The two arrows must consist of one $z^i\in V^*\simeq{\tiny\yng(1)}$ and one $w_i\in V\simeq{\tiny\yng(1,1)}$. The traceless
condition $z^iw_i$ in $\Gl$ says that $e_1\ot\ot e_1\in {\tiny\yng(2,1)}$. We conclude
\bea \hom_{\Gl^e}([z\cdotp w]_{11},\Gl)\big|_{u^0}=U^i_j[w_iz^j]_{11}\in{\tiny\yng(2,1)}\nn\eea
where $U$ is a traceless tensor. The analysis on $[w\cdotp z]_{11},[w\cdotp z]_{00},[z\cdotp w]_{22}$ is entirely the same.

Similarly for $[z^{[i}z^{j]}]_{20} \in {\tiny\yng(1,1)}$, it must be mapped to $e_2\stackrel{z^k}{\ot}\stackrel{z^l}{\ot}e_0$ with $k,l$ symmetric, thus
\bea \hom_{\Gl^e}([z^{[i}z^{j]}]_{20},\Gl)\big|_{u^0}\in {\tiny\yng(1)}\otimes {\tiny\yng(2)}\simeq {\tiny\yng(3)}\oplus {\tiny\yng(2,1)},\label{used_II}\\
\hom_{\Gl^e}([w_{[i}w_{j]}]_{02},\Gl)\big|_{u^0}\in {\tiny\yng(1,1)}\otimes {\tiny\yng(2,2)}\simeq {\tiny\yng(3,3)}\oplus {\tiny\yng(2,1)}.\nn\eea
Finally for
$([w_i,z^j]_{tf})_{11}\in {\tiny\yng(2,1)}$. It must likewise evaluate to $e_1\ot\ot e_1\in {\tiny\yng(2,1)}$, thus
\bea \hom_{\Gl^e}(([w_i,z^j]_{tf})_{11},\Gl)\big|_{u^0}\in{\tiny\yng(2,1)}\otimes{\tiny\yng(2,1)}\simeq
\tiny{\yng(4,2)}_{27,s}\oplus \tiny{\yng(3)}_{10,a}\oplus \tiny{\yng(3,3)}_{10^*,a}\oplus 2\tiny{\yng(2,1)}_{8,s,a}\oplus \emptyset_{1,s}.\label{used_III}\eea
where the foot number is the dimension and $a,s$ signifies whether the rhs is (anti-)symmetric when exchanging the two copies ${\tiny\yng(2,1)}$ in the tensor product.

We consider now $\SF{T}_3$ and the maps $\partial:\,\SF{T}_3\to \SF{T}_2$. Take $\SF{T}_3^1$, a 0-grading cochain must evaluate it to a length 3 path $e_2\ot\ot\ot e_1$. The three arrows consist of two $z$'s and one $w$. The two $z$'s transform in $S^2V^*\simeq {\tiny\yng(2)}$ (due to commutativity between the $z$'s), while $w$ is in $V\simeq {\tiny\yng(1,1)}$. Thus $e_2\ot\ot\ot e_1\in {\tiny\yng(3,1)}\oplus {\tiny\yng(1)}$. But the last component vanishes due to the tracelessness of $z^iw_i=0$. As $\SF{T}_3^1\in {\tiny\yng(1)}$, we conclude that
\bea \hom_{\Gl^e}(\SF{T}_3^1,\Gl)\big|_{u^0}\in {\tiny\yng(1,1)}\otimes {\tiny\yng(3,1)}={\tiny\yng(2,1)}\oplus {\tiny\yng(3)}\oplus {\tiny\yng(4,2)}.\nn\eea
Next we consider the image of
\bea \hom_{\Gl^e}(\SF{T}_2,\Gl)\big|_{u^0}\to \hom_{\Gl^e}(\SF{T}_3,\Gl)\big|_{u^0}\label{delta_32}\eea
induced from $\partial$.
Since $\partial$ is a map of $SL(3)$ representations, we can analyse the above map for each representation individually.

We start with ${\bf 27}=\tiny{\yng(4,2)}$, as it appears only in \eqref{used_III}. Explicitly it maps
\bea \left[w_i,z^j\right]_{tf,11} \to Z^{jl}_{ik}[w_lz^k]_{11},\nn\eea
where $Z^{jl}_{ik}$ is symmetric in $jl$ and $ik$ and traceless (as befits the tensor representation ${\bf 27}$).
Under the map \eqref{delta_32}, we get using the explicit formula \eqref{used_IV} for $\partial(\SF{T}_3^1)^j$
\bea \partial(\SF{T}_3^1)^j\to Z^{jl}_{ik}[z^iw_lz^k]_{21}\neq 0.\nn\eea
We conclude that the ${\bf 27}$ component must be absent.

We deal with the singlet in \eqref{used_III}. Explicitly, it gives the evaluation
\bea  \left[w_i,z^j\right]_{tf,11} \to [w_iz^j]_{11}.\nn\eea
Under the map \eqref{delta_32}, we get
\bea \partial (\SF{T}_3^1)^j\to [z^kw_kz^j]_{21}=0.\nn\eea
So the singlet is a cocycle, but it will turn out to be a coboundary later.

We deal with ${\bf 10}={\tiny\yng(3)}$ or ${\bf 10}^*={\tiny\yng(3,3)}$, which we recall are symmetric rank 3 covariant or contravariant tensors $D_{ijk},\overline{D}^{ijk}$.

We name the ${\bf 10}$ component in \eqref{used_II} and \eqref{used_III} as $C_{ijk}$ and $D_{ijk}$, explicitly this means that
\bea [z^{[i}z^{j]}]_{20}\to \ep^{ijr}C_{rkl}[z^kz^l]_{20},\nn\\
\left[w_i,z^j\right]_{tf,11} \to \ep^{jlr}D_{rik}[w_lz^k]_{11}.\nn\eea
Under the map \eqref{delta_32}, we get
\bea
\partial (\SF{T}_3^1)^j&=&e_2z^k\otimes[w_k,z^j]_{t.f.}\otimes e_1+\frac13e_2z^j\otimes w\cdotp z\otimes e_1+\frac23e_2z^j\otimes(z\cdotp w)\otimes e_1\nn\\
&&-e_2\otimes (z\cdotp w)\otimes z^je_1-e_2\otimes[z^j,z^k]\otimes w_ke_1,\nn\\
&\to&\ep^{jlr}D_{rik}[z^iw_lz^k]_{21}+0+0+0+\ep^{ijr}C_{rkl}[z^kz^lw_i]_{21}.\nn\eea
For this to be zero, we need $C=D$. The analysis for ${\bf 10}^*$ is similar: we take
\bea [w_{[i}w_{j]}]_{02}\to \ep_{ijr}\bar C^{rkl}[w_kw_l]_{02},\nn\\
\left[w_i,z^j\right]_{tf,11} \to \ep_{rik}\bar D^{rjl}[w_lz^k]_{11}.\nn\eea
Under the map \eqref{delta_32}, we get
\bea \partial(\SF{T}_3^2)_i&=&e_0w_j\otimes [z^j,w_i]\otimes e_1+e_0w_i\otimes w\cdotp z\otimes e_1
-e_0\otimes w\cdotp z\otimes w_ie_1-e_0\otimes [w_i,w_j]\otimes z^je_1\nn\\
&\to&-\ep_{rik}\bar D^{rjl}[w_jw_lz^k]_{01}-\ep_{ijr}\bar C^{rkl}[w_kw_lz^j]_{01},\nn\eea
and we get $\bar C=-\bar D$.

The most laborious one is the ${\bf 8}={\tiny\yng(2,1)}$ component. To state the issue, we take for example $[w_i,z^j]_{tf}\in\SF{T}_2$, it will evaluate to $T_{ik}^{jl}w_lz^k$, for some tensor $T$ in the tensor product ${\bf 8}\otimes{\bf 8}$ (since the trace over $i,j$ and $k,l$ is zero). Not all such tensor is needed, we want it to correspond to the representation ${\bf 8}$. Similarly $z^{[i}z^{j]}$ will evaluate to $T^{ij}_{kl}z^kz^l$, we will also need $T$ to be in ${\bf 8}$. So we state a lemma to this effect
\begin{lemma}
  Let $X_i^j$ be traceless i.e. it lives in ${\bf 8}$, then
  \bea &&X\mapsto \frac{1}{20}X^{(j}_{(i}\gd^{k)}_{l)}-\frac14X^{[j}_{[i}\gd^{k]}_{l]}\label{used_IX}\\
  &&X\mapsto -\frac{1}{12}X^{(j}_{[i}\gd^{k)}_{l]}{\color{black}+}\frac{1}{12}X^{[j}_{(i}\gd^{k]}_{l)}\label{used_X}\eea
  are two embeddings of ${\bf 8}$ into ${\bf 8}\otimes{\bf 8}$, while
  \bea X\mapsto \frac{1}{3} X^{(j}_{[i}\gd^{k)}_{l]},~~X\mapsto \frac13X^{[j}_{(i}\gd^{k]}_{l)}\nn\eea
  are the embeddings to ${\bf 6}\otimes {\bf 3}$ and ${\bf 6}^*\otimes {\bf 3}^*$ respectively.
\end{lemma}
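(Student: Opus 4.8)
The plan is to prove everything by $SL(3)$-representation theory, leaning on a single fact: any $SL(3)$-equivariant linear map out of the irreducible ${\bf 8}$ is either zero or injective, and when nonzero its image is a copy of ${\bf 8}$. Each of the four assignments in the statement is visibly equivariant, being built only from $X$, the Kronecker $\gd$, and index (anti)symmetrisations, and each is manifestly not the zero map (the ``leading'' term $X^j_i\gd^k_l$ survives with a nonzero coefficient). So each is automatically an embedding of ${\bf 8}$ into the space of four-index tensors $T^{jk}_{il}$, and the only real work is to identify in which irreducible summand of the relevant tensor space the image actually lands.

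For the first two maps I would first record $ {\bf 8}\otimes{\bf 8}={\bf 27}\oplus{\bf 10}\oplus\overline{\bf 10}\oplus{\bf 8}_s\oplus{\bf 8}_a\oplus{\bf 1}$, with symmetric part $S^2{\bf 8}={\bf 27}\oplus{\bf 8}_s\oplus{\bf 1}$ and antisymmetric part $\wedge^2{\bf 8}={\bf 10}\oplus\overline{\bf 10}\oplus{\bf 8}_a$; thus there are exactly two copies of ${\bf 8}$ in ${\bf 8}\otimes{\bf 8}$, distinguished by the sign picked up under exchanging the two factors. Realising $T^{jk}_{il}$ inside $(V\otimes V^*)^{\otimes 2}$, the subspace ${\bf 8}\otimes{\bf 8}$ is precisely where the two ``matching'' traces $T^{jk}_{jl}$ and $T^{jk}_{ik}$ both vanish. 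I would then expand $X^{(j}_{(i}\gd^{k)}_{l)}$ and $X^{[j}_{[i}\gd^{k]}_{l]}$ into their four terms, contract with $\gd^i_j$, and use $X^j_j=0$: the matching traces come out proportional to $\tfrac54 X^k_l$ and $\tfrac14 X^k_l$, so $\tfrac1{20}X^{(j}_{(i}\gd^{k)}_{l)}-\tfrac14 X^{[j}_{[i}\gd^{k]}_{l]}$ is traceless exactly because the two coefficients are in the ratio $1:5$. Being even under $(j,i)\leftrightarrow(k,l)$, it is the embedding into ${\bf 8}_s$. The second map works the same way: $X^{(j}_{[i}\gd^{k)}_{l]}$ and $X^{[j}_{(i}\gd^{k]}_{l)}$ are each odd under $(j,i)\leftrightarrow(k,l)$, so the combination lands in $\wedge^2{\bf 8}$; the analogous trace computation (now forcing the $-1:1$ ratio of the coefficients) kills both matching traces, which excludes any ${\bf 10}\oplus\overline{\bf 10}$ component and leaves ${\bf 8}_a$.

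For the last two maps the identification is lighter. An antisymmetric pair of lower indices $[il]$ is dual via $\ep$ to a single upper index, so a tensor of the shape $X^{(j}_{[i}\gd^{k)}_{l]}$ — symmetric in the two upper indices, antisymmetric in the two lower — lives naturally in $S^2V\otimes V\cong{\bf 6}\otimes{\bf 3}$, inside which ${\bf 8}$ occurs exactly once ($S^2V\otimes V=S^3V\oplus{\bf 8}$); being a nonzero equivariant image of ${\bf 8}$, the map $X\mapsto\tfrac13 X^{(j}_{[i}\gd^{k)}_{l]}$ must realise that copy, and the mirror argument handles $X\mapsto\tfrac13 X^{[j}_{(i}\gd^{k]}_{l)}$ and $S^2V^*\otimes V^*\cong{\bf 6}^*\otimes{\bf 3}^*$. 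The overall normalisations $\tfrac1{20},\tfrac14,\tfrac1{12},\tfrac13$ I would fix last, by demanding that each map be a genuine section of the corresponding canonical projection onto ${\bf 8}$ (equivalently, that contracting the image back against $\gd$ returns $X$ with the conventional factor); this is the only genuinely computational point, and any other overall scale would still be an embedding.

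The main obstacle is purely clerical — keeping the symmetrisation brackets straight and evaluating the handful of $\gd$-contractions without slips, so as to land on the precise ratios that enforce tracelessness (and then the precise normalisations). Conceptually there is nothing hard: equivariance forces the image to be ${\bf 8}$, the exchange parity separates ${\bf 8}_s$ from ${\bf 8}_a$, and the index symmetry of the ansatz pins down ${\bf 6}\otimes{\bf 3}$ versus ${\bf 6}^*\otimes{\bf 3}^*$.
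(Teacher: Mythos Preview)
Your approach is essentially the same as the paper's: both verify the two ``matching'' trace conditions $T^{pk}_{pl}=T^{jp}_{ip}=0$ to land in ${\bf 8}\otimes{\bf 8}$, then use the exchange parity $(j,i)\leftrightarrow(k,l)$ to separate ${\bf 8}_s$ from ${\bf 8}_a$, and both fix the numerical coefficients by requiring the image to project back to $X$. The one minor difference is in the ${\bf 6}\otimes{\bf 3}$ case: the paper verifies the explicit condition $T^{(jk}_{pq}\ep^{i)pq}=0$ to exclude the $S^3V$ component, whereas you short-circuit this by Schur's lemma (a nonzero equivariant image of ${\bf 8}$ in $S^2V\otimes V$ must hit the unique ${\bf 8}$ summand), which is arguably cleaner.
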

\begin{proof}
  Note there is one copy of ${\bf 8}$ in ${\bf 8}\otimes^s{\bf 8}$ and in ${\bf 8}\otimes^a{\bf 8}$ respectively, where $s,a$ denotes the symmetric or anti-symmetric tensor product.
  With indices written explicitly, then $T_{il}^{jk}$ satisfies $T_{pl}^{pk}=T_{ip}^{jp}=0$. The projection to the two copies of ${\bf 8}$ goes as
  \bea T\stackrel{pr_s}{\to} T^{jp}_{pi}+T^{pj}_{ip},~~~T\stackrel{pr_a}{\to} T^{jp}_{pi}- T^{pj}_{ip}.\nn\eea
  Take now $T$ to be the rhs of \eqref{used_IX}, then one can check that indeed $T_{pl}^{pk}=T_{ip}^{jp}=0$, and $T$ vanishes under $pr_a$ but projects back to $X$ under $pr_s$. One can check \eqref{used_X} in the same way.

  For the embedding to ${\bf 6}\otimes {\bf 3}$, the corresponding tensor $T_{il}^{jk}$ need be symmetric in $j,k$ and anti-symmetric in $i,l$, besides $T_{pq}^{(jk}\ep^{i)pq}=0$. One can check that $X^{(j}_{[i}\gd^{k)}_{l]}$ fulfills these properties. The coefficients such as $1/3$ are not important, we chose them so the embeddings are isometric.
\end{proof}
Now we can continue with treating ${\bf 8}$ in the 2-cochain. We set the 2-cochain to be
\bea &&[z\cdotp w]_{11}\to U^i_j[w_iz^j]_{11},~~~[w\cdotp z]_{11}\to V^i_j[w_iz^j]_{11},\nn\\
&&\left[z\cdotp w\right]_{22}\to X^i_j[z^jw_i]_{22},~~~[w\cdotp z]_{00}\to Y^i_j[w_iz^j]_{00},\nn\\
&&\left[w_i,z^j\right]_{tf,11}\to \big(\frac{1}{20}S^{(j}_{(i}\gd^{k)}_{l)}-\frac{1}{4}S^{[j}_{[i}\gd^{k]}_{l]}-\frac{1}{12}A^{(j}_{[i}\gd^{k)}_{l]}
{\color{black}+}\frac{1}{12}A^{[j}_{(i}\gd^{k]}_{l)}\big)[w_kz^l]_{11}\nn\\
&&\hspace{1.9cm}=(\frac{3}{10}S-\frac{1}{6}A)^k_i[w_kz^j]_{11}+(\frac{3}{10}S+\frac{1}{6}A)^j_l[w_iz^l]_{11}-\frac15S^k_l\gd^j_i[w_kz^l]_{11}\nn\\
&&\left[z^i,z^j\right]_{20}\to \frac13P^{[i}_{(k}\gd^{j]}_{l)}[z^kz^l]_{20},~~~\left[w_i,w_j\right]_{02}\to \frac13Q^{(k}_{[i}\gd^{l)}_{j]}[w_kw_l]_{02}\nn\eea
where $U,V,X,Y,S,A,P,Q$ are all traceless, in particular, for $S,A,P,Q$ we used the last lemma. Again evaluating $\partial (\SF{T}_3^1)^j$ on the above cochains gives
\bea \partial (\SF{T}_3^1)^j&\to& \big(\frac{1}{20}S^{(j}_{(i}\gd^{k)}_{l)}-\hcancel[blue]{\frac{1}{4}S^{[j}_{[i}\gd^{k]}_{l]}-\frac{1}{12}A^{(j}_{[i}\gd^{k)}_{l]}}
{\color{black}+}\frac{1}{12}A^{[j}_{(i}\gd^{k]}_{l)}\big)[z^iw_kz^l]_{21}\nn\\
&&+\frac13V^k_l[z^jw_kz^l]_{21}+\frac23U^k_l[z^jw_kz^l]
-X^k_l[z^lw_kz^j]_{21}+\frac13P^{[i}_{(k}\gd^{j]}_{l)}[z^kz^lw_i]_{21}\nn\\
&=&(\frac{1}{10}S^k_i-\frac16A^k_i+\frac{1}{3}V^k_i+\frac23U^k_i-X_i^k+\frac23P^k_i)[z^iw_kz^j]_{21}.\nn\eea
That $\partial (\SF{T}_3^1)^j$ should evaluate to zero gives us
\bea \frac{1}{10}S-\frac16A+\frac{1}{3}V+\frac23U-X+\frac23P=0.\label{Z1}\eea
Writing down the evaluation of all four $\partial \SF{T}_3$ is too painful, we just record the equations
\bea \frac13U+\frac23V-Y-(\frac{1}{10}S+\frac16A)+\frac23Q=0,\label{Z2}\\
-\frac13U-\frac23V+Y+(\frac{1}{10}S-\frac16A)+\frac23P=0,\label{Z3}\\
-\frac23U-\frac13V+X-(\frac{1}{10}S+\frac16A)+\frac23Q=0.\label{Z4}\eea
These four equations imply
\bea 0=-\frac13A+\frac23(P+Q)=U+V-X-Y.\label{cocycle_condition}\eea
Some of these components will be removed as coboundaries.

To investigate the coboundaries, we consider the 1-cochain
\bea [z^i]_{10}\to T^i_j[z^j]_{10}\nn\eea
and the resulting 2-coboundary, e.g.
\bea &&\partial[z\cdotp w]_{11}=\hcancel[blue]{e_1z^i\otimes w_i\otimes e_1}+e_1\otimes z^i\otimes w_ie_1 \to T^i_j[z^jw_i]_{11},\nn\\
&&\partial[z^{[i}z^{j]}]_{20}=e_2z^{[i}\otimes z^{j]}\otimes e_0+\hcancel[blue]{e_2\otimes z^{[i}\otimes z^{j]}e_0}\to T^{[j}_k[z^{i]}z^k]_{20},\nn\\
&&\partial[w_i,z^j]_{11}=\hcancel[blue]{e_1w_i\otimes z^j\otimes e_1-e_1z^j\otimes w_i\otimes e_1+e_1\otimes w_i\otimes z_je_1}-e_1\otimes z^j\otimes w_ie_1\to-T^j_k[z^kw_i]_{11}.\nn\eea
The last term needs to be decomposed into trace part and trace free part.
Thus this particular 1-cochain results in a coboundary
\bea (U,Y,S,A,P)=(T_{tf},T_{tf},-\frac53T_{tf},-3T_{tf},-\frac32T_{tf})\in{\bf 8}\label{B1}\eea
and the trace part of $T$ gives the coboundary
\bea [w_i,z^j]_{11}\to \partial[w_i,z^j]_{11}\to-[z^jw_i]_{11}\nn\eea
which cancels the singlet in \eqref{used_III}. At this stage, it is helpful to check that the cobounary \eqref{B1} is closed, by plugging it into Eqs.\ref{Z1}, \ref{Z2}, \ref{Z3} and \ref{Z4}, to make sure all coefficients are correct.

We list here all the 2-coboundaries
\bea (V,X,S,A,P)=(T_{tf},T_{tf},\frac53T_{tf},3T_{tf},\frac32T_{tf}),\label{B2}\\
  (U,Y,S,A,Q)=(T_{tf},T_{tf},-\frac53T_{tf},3T_{tf},\frac32T_{tf})\label{B3}\\
   (V,X,S,A,Q)=(T_{tf},T_{tf},\frac53T_{tf},-3T_{tf},-\frac32T_{tf}).\label{B4}\eea
from $[z^i]_{21}\to T^i_j[z^j]_{21}$, $[w_i]_{01}\to T_i^j[w_j]_{01}$ and $[w_i]_{01}\to T_i^j[w_j]_{12}$ respectively.

Now we can finally finish the calculation for $\HH^2|_{u^0}$. Note that the coboundaries cannot be in ${\bf 10}$ or ${\bf 10}^*$, thus the cocycles in these representations survive to cohomology. While for ${\bf 8}$ representation,
we can use \eqref{B1} to set $P=0$, then use $\eqref{B1}+\eqref{B2}$ to set $U+V+X+Y=0$. From \eqref{cocycle_condition}, we get $U+V=X+Y=0$.
Similarly, we use $\eqref{B3}-\eqref{B4}$ to set $Q=0$, then we get $A=0$ and $S=10X-10U/3$. Thus the cohomology in ${\bf 8}$ is parametrised by $U,X$ alone
and we get the explicit cocycle
\bea &&[z\cdotp w]_{11}\to U^i_j[w_iz^j]_{11},~~~[w\cdotp z]_{11}\to -U^i_j[w_iz^j]_{11},\nn\\
&&\left[z\cdotp w\right]_{22}\to X^i_j[z^jw_i]_{22},~~~[w\cdotp z]_{00}\to -X^i_j[w_iz^j]_{00},\nn\\
&&\left[w_i,z^j\right]_{tf,11}\to (3X-U)^k_i[w_kz^j]_{11}+(3X-U)^j_l[w_iz^l]_{11}-(2X-\frac23U)^k_l\gd^j_i[w_kz^l]_{11}.\nn\eea
These are the two components in ${\bf 8}$. There are also the components in ${\bf 10}$ and ${\bf 10}^*$ that we obtained earlier
\bea
{\bf 10}:&&[z^{[i}z^{j]}]_{20}\to \ep^{ijr}D_{rkl}[z^kz^l]_{20},~~~
\left[w_i,z^j\right]_{tf,11} \to \ep^{jlr}D_{rik}[w_lz^k]_{11},\nn\\
{\bf 10}^*:&&[w_{[i}w_{j]}]_{02}\to -\ep_{ijr}\bar D^{rkl}[w_kw_l]_{02},~~~~\left[w_i,z^j\right]_{tf,11} \to \ep_{rik}\bar D^{rjl}[w_lz^k]_{11}.\nn\eea


\end{document}